\documentclass[10pt,conference,letterpaper]{IEEEtran}

\usepackage{graphicx}
\usepackage{balance}
\usepackage[noend]{algorithmic}
\usepackage{subfigure}
\usepackage{multirow}
\usepackage{times}
\usepackage{color}
\usepackage{colortbl}
\usepackage{times}
\usepackage{amsmath}
\usepackage{epsfig}
\usepackage{amsfonts}

\newcommand{\G}{\mathcal{G}}
\newcommand{\F}{\mathcal{F}}
\newcommand{\R}{\mathcal{R}}
\newcommand{\Cov}{\textsf{Cov}}

\newcommand{\Sup}{\textsf{Num}}

\newcommand{\figwidth}{3.3in}

\newcommand{\st}{\scriptsize}

\newtheorem{lemma}{Lemma}
\newtheorem{property}{Property}
\newtheorem{theorem}{Theorem}
\newtheorem{claim}{Claim}

\allowdisplaybreaks


\definecolor{mygray}{gray}{.9}

\begin{document}

\title{Diversified Coherent Core Search \\ on Multi-Layer Graphs}

\author{
{Rong Zhu, Zhaonian Zou, and Jianzhong Li }
\\
\fontsize{10}{10}\selectfont\itshape
Harbin Institute of Technology, Harbin, Heilongjiang, China\\
\fontsize{9}{9}\selectfont\ttfamily\upshape
\{rzhu, znzou, lijzh\}@hit.edu.cn
}

\maketitle

\begin{abstract}
Mining dense subgraphs on multi-layer graphs is an interesting problem, which has witnessed lots of applications in practice. To overcome the limitations of the quasi-clique-based approach, we propose \emph{d}-coherent core (\emph{d}-CC), a new notion of dense subgraph on multi-layer graphs, which has several elegant properties. We formalize the diversified coherent core search (DCCS) problem, which finds \emph{k} \emph{d}-CCs that can cover the largest number of vertices. We propose a greedy algorithm with an approximation ratio of $1 - 1/e$ and two search algorithms with an approximation ratio of 1/4. The experiments verify that the search algorithms are faster than the greedy algorithm and produce comparably good results as the greedy algorithm in practice. As opposed to the quasi-clique-based approach, our DCCS algorithms can fast detect larger dense subgraphs that cover most of the quasi-clique-based results.
\end{abstract}

\section{Introduction}
\label{Sec:Introduction}

Dense subgraph mining, that is, finding vertices cohesively connected by internal edges, is an important issue in graph mining. In the literature, many dense subgraph notions have been formalized~\cite{LeeRJA10}, e.g., clique, quasi-clique, $k$-core, $k$-truss, $k$-plex and $k$-club. Meanwhile, a large number of dense subgraph mining algorithms have also been proposed.

In many real-world scenarios, a graph often contains various types of edges, which represent various types of relationships between entities. For example, in biological networks, interactions between genes can be detected by different methods~\cite{Hu2005Mining}; in social networks, users can interact through different social media~\cite{Qi2012Community}. In~\cite{Boden2012Mining} and~\cite{Pei2005On}, such a graph with multiple types of edges is modelled as a \emph{multi-layer graph}, where each layer independently accommodates a certain type of edges.

Finding dense subgraphs on multi-layer graphs has witnessed many real-world applications.


\noindent{\underline{\bf Application~1 (Biological Module Discovery).}}
In biological networks, densely connected vertices (genes or proteins), also known as biological modules, play an important role in detecting protein complexes and co-expression clusters~\cite{Hu2005Mining}. Due to data noise, there often exist a number of spurious biological interactions (edges), so a group of vertices only cohesively connected by interactions detected by a certain method may not be a convincing biological module. To filter out the effects of spurious interactions and make the detected modules more reliable, biologists detect interactions using multiple methods, i.e., build a multi-layer biological network, where each layer contains interactions detected by a certain method. A set of vertices is regarded as a reliable biological module if they are simultaneously densely connected on multiple layers~\cite{Hu2005Mining}.

\noindent{\underline{\bf Application~2 (Story Identification in Social Media.)}}
Social media, such as Twitter and Facebook, is updating with numerous new posts every day. A story in a social media is an event capturing popular attention recently~\cite{Angel2014Dense}. Stories can be identified by leveraging some real-world entities involved them, such as people, locations, companies and products. To identify them, scientists often abstract all new posts at each moment as a snapshot graph, where each vertex represents an entity and each edge links two entities if they frequently occur together in these new posts, and maintain a number of snapshot graphs in a time window. After that, each story can be identified by finding a group of strongly associated entities on multiple snapshot graphs~\cite{Angel2014Dense}. Obviously, this is an instance of finding dense subgraphs on multi-layer graphs.

Different from dense subgraph mining on single-layer graphs, dense subgraphs on multi-layer graphs must be evaluated by the following two orthogonal metrics: 1) \textbf{\em Density}: The interconnections between the vertices must be sufficiently dense on some individual layers. 2) \textbf{\em Support}: The vertices must be densely connected on a sufficiently large number of layers.

In the literature, the most representative and widely used notion of dense subgraphs on multi-layer graphs is \emph{cross-graph quasi-clique}~\cite{Boden2012Mining, Pei2005On, Zeng2006Coherent}. On a single-layer graph, a vertex set $Q$ is a $\gamma$-quasi-clique if each vertex in $Q$ is adjacent to at least $\gamma(|Q| - 1)$ vertices in $Q$, where $\gamma \in [0, 1]$. Given a set of graphs $G_1, G_2, \ldots, G_n$ with the same vertices (i.e., layers in our terminology) and $\gamma \in [0, 1]$, a vertex set $Q$ is a cross-graph quasi-clique if $Q$ is a $\gamma$-quasi-clique on all of $G_1, G_2, \ldots, G_n$. Although the cross-graph quasi-clique notion considers both density and support, it has several limitations:

1) A single cross-graph quasi-clique only characterizes a microscopic cluster. Finding all cross-graph quasi-cliques is computationally hard and is not scalable to large graphs~\cite{Boden2012Mining}.

2) The diameter of a cross-graph quasi-clique is often very small. As proved in~\cite{Pei2005On}, the diameter of a cross-graph quasi-clique is at most $2$ if $\gamma \geq 0.5$. Therefore, the quasi-clique-based methods face the following dilemma: When $\gamma$ is large, some large-scale dense subgraphs may be lost; When $\gamma$ is small, some sparsely connected subgraphs may be falsely recognized as dense subgraphs. For example, in the $4$-layer graph in Fig.~\ref{Fig:Motivation}, the vertex set $Q = \{a, b, c, d, e, f, g, h, i\}$ naturally induces a dense subgraph on all layers. However, in terms of cross-graph quasi-clique, if $\gamma \geq 0.5$, $Q$ is missing from the result; If $\gamma < 0.5$, the sparsely connected vertex set $\{g, h, i, j\}$ is recognized as a cross-graph quasi-clique.

Hence, there naturally arises the first question:

\noindent\textbf{Q1:} {\em What is a better notion of dense subgraphs on multi-layer graphs, which can avoid the limitations of cross-graph quasi-cliques?}

\begin{figure}[!t]
	\centering
	\includegraphics[width = \columnwidth]{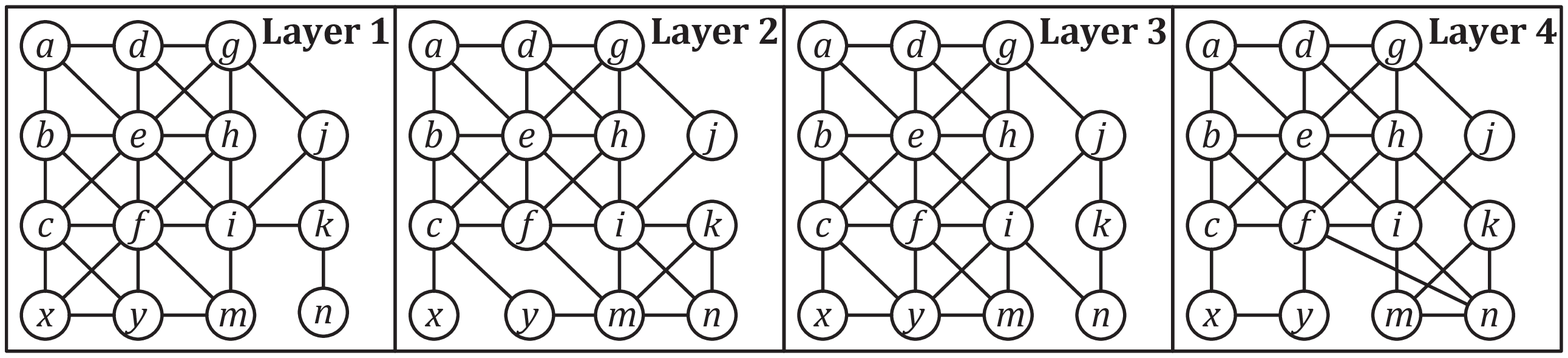}
	\vspace{-2em}
	\caption{Example of 4-Layer Graph.}
	\label{Fig:Motivation}
	\vspace{-2em}
\end{figure}

Additionally, as discovered in~\cite{Boden2012Mining}, dense subgraphs on multi-layer graphs have significant overlaps. For practical usage, it is better to output a small subset of \emph{diversified} dense subgraphs with little overlaps. Ref.~\cite{Boden2012Mining} proposed an algorithm to find diversified cross-graph quasi-cliques. One of our goal in this paper is to find dense subgraphs on even larger multi-layer graphs. There will be even more dense subgraphs, so the problem of finding diversified dense subgraphs will be more critical. Hence, we face the second question:

\noindent\textbf{Q2:} {\em How to design efficient algorithms to find diversified dense subgraphs according to the new notion?}

To deal with the first question \textbf{Q1}, we present a new notion called \emph{$d$-coherent core} (\emph{$d$-CC} for short) to characterize dense subgraphs on multi-layer graphs. It is extended from the $d$-core notion on single-layer graphs~\cite{Batagelj2003An}. Specifically, given a multi-layer graph $\G$, a subset $L$ of layers of $\G$ and $d \in \mathbb{N}$, the $d$-CC with respect to (w.r.t. for short) $L$ is the maximal vertex subset $S$ such that each vertex in $S$ is adjacent to at least $d$ vertices in $S$ on all layers in $L$. The $d$-CC w.r.t.~$L$ is unique. The $d$-CC notion is a natural fusion of density and support. It has the following advantages:

1) There is no limit on the diameter of a $d$-CC, and a $d$-CC often consists of a large number of densely connected vertices. Our experiments show that a $d$-CC can cover a large amount of cross-graph quasi-cliques.

2) A $d$-CC can be computed in linear time in the graph size.

3) The $d$-CC notion inherits the hierarchy property of $d$-core: The $(d + 1)$-CC w.r.t.~$L$ is a subset of the $d$-CC w.r.t.~$L$; The $d$-CC w.r.t.~$L$ is a subset of the $d$-CC w.r.t.~$L'$ if $L' \subseteq L$.

The $d$-CC notion overcomes the limitations of cross-graph quasi-cliques. Based on this notion, we formalize the \emph{diversified coherent core search (DCCS)} problem that finds dense subgraphs on multi-layer graphs with little overlaps: Given a multi-layer graph $\G$, a minimum degree threshold $d$, a minimum support threshold $s$, and the number $k$ of $d$-CCs to be detected, the \textsc{DCCS} problem finds $k$ most diversified $d$-CCs recurring on at least $s$ layers of $\G$. As in~\cite{Ausiello2011Online, Boden2012Mining}, we assess the diversity of the $k$ discovered $d$-CCs by the number of vertices they cover and try to maximize the diversity of these $d$-CCs. We prove that the \textsc{DCCS} problem is NP-complete.

To deal with the second question \textbf{Q2}, we propose a series of approximation algorithms for the \textsc{DCCS} problem. First, we propose a simple greedy algorithm, which finds $k$ $d$-CCs in a greedy manner. The algorithm have an approximation ratio of $1 - 1/e$. However, it must compute all candidate $d$-CCs and therefore is not scalable to large multi-layer graphs.

To prune unpromising candidate $d$-CCs early and improve efficiency, we propose two search algorithms, namely the bottom-up search algorithm and the top-down search algorithm. In both algorithms, the process of generating candidate $d$-CCs and the process of updating diversified $d$-CCs interact with each other. Many $d$-CCs that are unpromising to appear in the final results are pruned in early stage. The bottom-up and top-down algorithms adopt different search strategies. In practice, the bottom-up algorithm is preferable if $s < l/2$, and the top-down algorithm is preferable if $s \ge l/2$, where $l$ is the number of layers. Both of the algorithms have an approximation ratio of $1/4$.

We conducted extensive experiments on a variety of datasets to evaluate the proposed algorithms and obtain the following results: 1) The bottom-up and top-down algorithms are $1$--$2$ orders of magnitude faster than the greedy algorithm for small and large $s$, respectively. 2) The practical approximation quality of the bottom-up and top-down algorithms is comparable to that of the greedy algorithm. 3) Our \textsc{DCCS} algorithms outperform the quasi-clique-based dense subgraph mining algorithm~\cite{Boden2012Mining} on multi-layer graphs in terms of both execution time and result quality.


\section{Problem Definition}
\label{Sec:ProblemDefinition}

\noindent{\underline{\bf Multi-Layer Graphs.}}
A \emph{multi-layer graph} is a set of graphs $\{G_1, G_2, \dots, G_l\}$, where $l$ is the number of layers, and $G_i$ is the graph on layer $i$. Without loss of generality, we assume that $G_1, G_2, \dots, G_l$ contain the same set of vertices because if a vertex is missing from layer $i$, we can add it to $G_i$ as an isolated vertex. Hence, a multi-layer graph $\{G_1, G_2, \dots, G_l\}$ can be equivalently represented by $(V, E_1, E_2, \dots, E_l)$, where $V$ is the universal vertex set, and $E_i$ is the edge set of $G_i$.

Let $V(G)$ and $E(G)$ be the vertex and the edge set of graph $G$, respectively. For a vertex $v \in V(G)$, let $N_G(v) = \{u | (v, u) \in E(G)\}$ be the set of neighbors of $v$ in $G$, and let $d_G(v) = |N_G(v)|$ be the degree of $v$ in $G$. The subgraph of $G$ induced by a vertex subset $S \subseteq V(G)$ is $G[S] = (S, E[S])$, where $E[S]$ is the set of edges with both endpoints in $S$.

Given a multi-layer graph $\G = (V, E_1, E_2, \dots, E_l)$, let $l(\G)$ be the number of layers of $\G$, $V(\G)$ the vertex set of $\G$, and $E_i(\G)$ the edge set of the graph on layer $i$. The multi-layer subgraph of $\G$ induced by a vertex subset $S \subseteq V(\G)$ is $\G[S] = (S, E_1[S], E_2[S], \dots, E_l[S])$, where $E_i[S]$ is the set of edges in $E_i$ with both endpoints in $S$.


\noindent{\underline{\bf \emph{d}-Coherent Cores.}}
We define the notion of \emph{$d$-coherent core ($d$-CC)} on a multi-layer graph by extending the \emph{$d$-core} notion on a single-layer graph~\cite{Batagelj2003An}. A graph $G$ is \emph{$d$-dense} if $d_G(v) \geq d$ for all $v \in V(G)$, where $d\in \mathbb{N}$. The \emph{$d$-core} of graph $G$, denoted by $C^{d} (G)$, is the maximal subset $S \subseteq V(G)$ such that $G[S]$ is $d$-dense. As stated in~\cite{Batagelj2003An}, $C^{d} (G)$ is unique, and $C^{d}(G) \subseteq C^{d - 1}(G) \subseteq \dots \subseteq C^{1}(G) \subseteq C^{0}(G)$ for $d \in \mathbb{N}$.

For ease of notation, let $[n] = \{1, 2, \ldots, n\}$, where $n \in \mathbb{N}$. Let $\G$ be a multi-layer graph and $L \subseteq [l(\G)]$ be a non-empty subset of layer numbers. For $S \subseteq V(\G)$, the induced subgraph $\G[S]$ is $d$-dense w.r.t.~$L$ if $G_i[S]$ is $d$-dense for all $i \in L$. The \emph{$d$-coherent core} (\emph{$d$-CC}) of $\G$ w.r.t.~$L$, denoted by $C^{d}_{L}(\G)$, is the maximal subset $S \subseteq V(\G)$ such that $\G[S]$ is $d$-dense w.r.t.~$L$. Similar to $d$-core, the concept of $d$-CC has the following properties.

\begin{property}[Uniqueness]
\label{Lem:kUnique}
Given a multi-layer graph $\G$ and a subset $L \subseteq [l(\G)]$, $C^{d}_{L}(\G)$ is unique for $d \in \mathbb{N}$.
\end{property}

\begin{property}[Hierarchy]
\label{Lem:kHierarchy}
Given a multi-layer graph $\G$ and a subset $L \subseteq [l(\G)]$, we have $C^{d}_{L}(\G) \subseteq C^{d - 1}_{L}(\G) \subseteq \dots \subseteq C^{1}_{L}(\G) \subseteq C^{0}_L(\G)$ for $d \in \mathbb{N}$.
\end{property}

\begin{property}[Containment]
\label{Lem:PHierarchy}
Given a multi-layer graph $\G$ and two subsets $L, L' \subseteq [l(\G)]$, if $L \subseteq L'$, we have $C^{d}_{L'}(\G) \subseteq C^{d}_{L}(\G)$ for $d \in \mathbb{N}$.
\end{property}

\noindent \textbf{Note:} {\em We put all proofs in Appendix~A.}

\noindent{\underline{\bf Problem Statement.}}
Given a multi-layer graph $\G$, a minimum degree threshold $d \in \mathbb{N}$ and a minimum support threshold $s \in \mathbb{N}$, let $\F_{d, s}(\G)$ be the set of $d$-CCs of $\G$ w.r.t.~all subsets $L \subseteq [l(\G)]$ such that $|L| = s$. When $\G$ is large, $|\F_{d, s}(\G)|$ is often very large, and a large number of $d$-CCs in $\F_{d, s}(\G)$ significantly overlap with each other. For practical usage, it is better to output $k$ diversified $d$-CCs with little overlaps, where $k$ is a number specified by users. Like~\cite{Ausiello2011Online, Boden2012Mining}, we assess the diversity of the discovered $d$-CCs by the number of vertices they cover and try to maximize the diversity of these $d$-CCs. Let the \emph{cover set} of a collection of sets $\R = \{R_1, R_2, \dots, R_n\}$ be $\Cov(\R) = \bigcup_{i = 1}^{n} R_i$. We formally define the \emph{Diversified Coherent Core Search (DCCS) problem} as follows.

Given a multi-layer graph $\G$, a minimum degree threshold $d$, a minimum support threshold $s$ and the number $k$ of $d$-CCs to be discovered, find the subset $\R \subseteq \F_{d, s}(\G)$ such that 1) $|\R| = k$; and 2) $|\Cov (\R)|$ is maximized. The $d$-CCs in $\R$ are called the \emph{top-$k$ diversified $d$-CCs} of $\G$ on $s$ layers.

\begin{theorem}
The \textsc{DCCS} problem is NP-complete.
\end{theorem}

Let $d= 3$, $s = 2$ and $k = 2$. The top-$2$ diversified $d$-CCs for the multi-layer graph in Fig.~\ref{Fig:Motivation} is $\R = \{ C^{d}_{ \{ 1, 3 \} }(\G), C^{d}_{ \{ 2, 4\} }(\G) \}$, where $C^{d}_{ \{ 1, 3 \} }(\G) = \{ a, b, c, d, e, f, g, h, i, y, m\}$, $C^{d}_{ \{ 2, 4\} }(\G) = \{ a, b, c, d, e, f, g, h, i, m, n, k\}$ and $|\Cov(\R)| = 14$.

\section{Greedy Algorithm}
\label{Sec:GAlgorithm}

A straightforward solution to the \textsc{DCCS} problem is to generate all candidate $d$-CCs and select $k$ of them that cover the maximum number of vertices. However, the search space of all $k$-combinations of $d$-CCs is extremely large, so this method is intractable even for small multi-layer graphs. Alternatively, fast approximation algorithms with guaranteed performance may be more preferable. In this section, we propose a simple greedy algorithm with an approximation ratio of $1 - 1/e$.

Before describing the algorithm, we present the following lemma based on Property~\ref{Lem:PHierarchy}. The lemma enables us to remove irrelevant vertices early.

\begin{lemma}[Intersection Bound]
\label{Lem:kInjection}
Given a multi-layer graph $\G$ and two subsets $L_1, L_2 \subseteq [l(\G)]$, we have $C_{L_1 \cup L_2}^{d} (\G) \subseteq C_{L_1}^{d} (\G) \cap C_{L_2}^{d}(\G)$ for $d \in \mathbb{N}$.
\end{lemma}

\begin{figure}[!t]
    \scriptsize
    \fbox{
    \parbox{\figwidth}{
    \textbf{Algorithm} \textsf{GreedyDCCS}$(\G, d, s, k)$
    \begin{algorithmic}[1]
        \STATE $\F \gets \emptyset$, $\R \gets \emptyset$
        \FOR{$i \gets 1$ to $l(\G)$}
            \STATE compute $C^{d}(G_i)$ on $G_i$
        \ENDFOR
        \FOR{each $L \subseteq [l(\G)]$ such that $|L| = s$}
             \STATE $S \gets \bigcap_{i \in L} C^{d}(G_i)$
            \STATE $C^{d}_{L}(\G) \gets \textsf{dCC}(\G[S], L, d)$
            \STATE $\F \gets \F \cup \{C^{d}_{L}(\G)\}$
        \ENDFOR
        \FOR{$j \gets 1  $ to $k$}
            \STATE $C^* \gets \arg\max_{C \in \F} (|\Cov(\R \cup \{C\})| - |\Cov(\R)|)$
            \STATE $\R \gets \R \cup \{C^*\}$, $\F \gets \F - \{C^*\}$
        \ENDFOR
        \RETURN $\R$
    \end{algorithmic}
    }}
   \vspace{-0.5em}
   \caption{The \textsf{\small GD-DCCS} Algorithm.}
   \label{Fig:GreedyDCCS}
   \vspace{-3em}
\end{figure}

\noindent{\underline{\bf The Greedy Algorithm.}} The greedy algorithm \textsf{GD-DCCS} is described in Fig.~\ref{Fig:GreedyDCCS}. The input is a multi-layer graph $\G$ and $d, s, k \in \mathbb{N}$. \textsf{GD-DCCS} works as follows. Line~1 initializes both the $d$-CC collection $\F$ and the result set $\R$ to be $\emptyset$. Lines~2--3 compute the $d$-core $C^d(G_i)$ on each layer $G_i$ by the algorithm in~\cite{Batagelj2003An}. By definition, we have $C^{d}_{\{i\}}(\G) = C^{d}(G_i)$.

For each $L \subseteq [l(\G)]$ with $|L| = s$, to find $C_{L}^d(\G)$, we first compute the intersection $S = \bigcap_{i \in L} C^{d}(G_i)$ (line~5). By Lemma~\ref{Lem:kInjection}, we have $C_{L}^{d}(\G) \subseteq S$. Thus, we compute $C_{L}^{d}(\G)$ on the induced subgraph $\G[S]$ instead of on $\G$ by Procedure \textsf{dCC} (line~6) and add $C_{L}^{d}(\G)$ to $\F$ (line~7). Procedure \textsf{dCC} follows the similar procedure of computing the $d$-core on a single-layer graph~\cite{Batagelj2003An}. Whenever there exists a vertex $v \in V(\G)$ such that $d_{G_i}(v) < d$
on some layer $i \in L$, $v$ is removed from all layers of $\G$. Due to space limits, we describe the implementation details of \textsf{dCC} in Appendix~B.

Next, lines~8--10 select $k$ $d$-CCs from $\F$ in a greedy manner. In each time, we pick up the $d$-CC $C^* \in \F$ that maximizes $|\Cov(\R \cup \{C^*\})| - |\Cov(\R)|$, add $C^*$ to $\R$ (line~9) and remove $C^*$ from $\F$ (line~10). Finally, $\R$ is output as the result (line~11).

Let $l = l(\G)$, $n = |V(\G)|$  and $m = |\bigcup_{i = 1}^{l} E_i(\G) |$. Procedure \textsf{dCC} in line~6 runs in $O(ns + ms)$ time as shown in Appendix~B. Line~9 runs in $O(n|\F|)$ time since computing $|\Cov(\R \cup \{C\})| - |\Cov(\R)|$ takes $O(n)$ time for each $C \in \F$. In addition, $|\F| = {l \choose s}$. Therefore, the time complexity of \textsf{GD-DCCS} is $O((ns + ms + kn){l \choose s})$, and the space complexity is $O(n{l \choose s})$.

\begin{theorem}\label{Thm:GreedyDCCS}
The approximation ratio of \textsf{GD-DCCS} is $1 - \frac{1}{e}$.
\end{theorem}

\noindent{\underline{\bf Limitations.}}
As verified by the experimental results in Section~\ref{Sec:PEvaluation}, \textsf{GD-DCCS} is not scalable to very large multi-layer graphs. This is due to the following reasons:
1) \textsf{GD-DCCS} must keep all candidate $d$-CCs in $\F$. As $l(\G)$ increases, $|\F|$ grows exponentially. When $\F$ can not fit in main memory, the algorithm incurs large amounts of I/Os.
2) The exponential growth in $|\F|$ significantly increases the running time of selecting $k$ diversified $d$-CCs from $\F$ (lines~8--10 of \textsf{GD-DCCS}).
3) The phase of candidate $d$-CC generation (lines~1--7) and the phase of diversified $d$-CC selection (lines~8--10) are separate. There is no guidance on candidate generation, so many unpromising candidate $d$-CCs are generated in vain.

\section{Bottom-Up Algorithm}
\label{Sec:BApproach}

This section proposes a bottom-up approach to the \textsc{DCCS} problem. In this approach, the candidate $d$-CC generation and the top-$k$ diversified $d$-CC selection phases are \emph{interleaved}. On one hand, we maintain a set of temporary top-$k$ diversified $d$-CCs and use each newly generated $d$-CC to update them. On the other hand, we guide candidate $d$-CC generation by the temporary top-$k$ diversified $d$-CCs.

In addition, candidate $d$-CCs are generated in a bottom-up manner. Like the frequent pattern mining algorithm~\cite{Yan2002gSpan}, we organize all $d$-CCs by a search tree and search candidate $d$-CCs on the search tree. The bottom-up $d$-CC generation has the following advantage: If the $d$-CC w.r.t.~subset $L$ $(|L| < s)$ is unlikely to improve the quality of the temporary top-$k$ diversified $d$-CCs, the $d$-CCs w.r.t.~all $L'$ such that $L \subseteq L'$ and $|L'| = s$ need not be generated. As verified by the experimental results in Section~\ref{Sec:PEvaluation}, the bottom-up approach reduces the search space by 80\%--90\% in comparison with the greedy algorithm and thus saves large amount of time. Moreover, the bottom-up \textsc{DCCS} algorithm attains an approximation ratio of $1/4$.


\subsection{Maintenance of Top-k Diversified d-CCs}
\label{Sec:BApproach-2}

Let $\R$ be a set of temporary top-$k$ diversified $d$-CCs. In the beginning, $\R = \emptyset$. To improve the quality of $\R$, we try to update $\R$ whenever we find a new candidate $d$-CC $C$. In particular, we update $\R$ with $C$ by one of the following rules:

\noindent\underline{\bf Rule~1:} If $|\R| < k$, $C$ is added to $\R$.

\noindent\underline{\bf Rule~2:} For $C' \in \R$, let $\Delta(\R, C') = C' - \Cov(\R - \{C'\})$, that is, $\Delta(\R, C')$ is vertex set in $\Cov(\R)$ exclusively covered by $C'$. Let $C^*(\R) = \arg\min_{C' \in \R} |\Delta(\R, C')|$, that is, $C^*(\R)$ exclusively covers the least number of vertices among all $d$-CCs in $\R$.  We replace $C^*(\R)$ with $C$ if $|\R| = k$ and
\begin{equation}
	\label{Eqn: RUpdate}
	|\Cov((\R - \{C^*(\R)\}) \cup \{C\})| \ge (1 + \tfrac{1}{k}) |\Cov(\R)|.
\end{equation}

On input $\R$ and $C$, Procedure \textsf{Update} tries to update $\R$ with $C$ using the rules above. The details of \textsf{Update} is described in Appendix~C.  By using two index structures, \textsf{Update} runs in $O(\max\{|C|, |C^{*}(\R)|\})$ time.

\subsection{Bottom-Up Candidate Generation}
\label{Sec:BApproach-3}

Candidate $d$-CCs $C^d_L(\G)$ with $|L| = s$ are generated in a bottom-up fashion. As shown in Fig.~\ref{Fig: BottomUp Tree}, all $d$-CCs $C^d_L(\G)$ are conceptually organized by a search tree, in which $C_L^d(\G)$ is the parent of $C_{L'}^d(\G)$ if $L \subset L'$, $|L'| = |L| + 1$ and the only number $\ell \in L' - L$ satisfies $\ell > \max(L)$, where $\max(L)$ is the largest number in $L$ (specially, $\max(\emptyset) = -\infty$). Conceptually, the root of the search tree is $C^{d}_{\emptyset}(\G) = V(\G)$.

\begin{figure}[!t]
    \scriptsize
    \fbox{
    \parbox{\figwidth}{
    \textbf{Procedure} \textsf{BU-Gen}$(\G, d, s, k, L, C^{d}_{L}(\G),  L_Q, \R)$ \!\!\!\!
    \begin{algorithmic}[1]
    \STATE $L_{P} \gets \{ j | \max(L) <  j \leq l(\G)\} - L_Q$, $L_{R} \gets \emptyset$
    \IF{$|\R| < k$}
        \FOR{$j \in L_{P}$}
            \STATE $L' \gets L \cup \{ j \} $
            \STATE $C^{d}_{L'} (\G) \gets$ \textsf{dCC}$(\G[C^{d}_{L}(\G) \cap C^{d}(G_j)], L', d)$
            \IF{$|L'| = s$}
                \STATE $\mathsf{Update}(\R, C^{d}_{L'}(\G))$
            \ELSE
                \STATE $L_{R} \gets L_{R} \cup \{ j \}$
            \ENDIF
        \ENDFOR
    \ELSIF{$|\R| = k$}
        \STATE sort $j \in L_{P}$ in descending order of $|C^{d}_{L}(\G) \cap C^{d}(G_j)|$
        \FOR{each $j$ in the sorted $L_P$}
            \IF{$|C_{L}^d(\G) \cap C^d(G_j)| < \frac{1}{k}|\Cov(\R)| + |\Delta(\R, C^*(\R))|$}
                \STATE \textbf{break}
            \ELSE
                \STATE $L' \gets L \cup \{ j \} $
                \STATE $C^{d}_{L'} (\G) \gets$ \textsf{dCC}$(\G[C^{d}_{L}(\G) \cap C^{d}(G_j)], L', d)$
                \IF{$|L'| = s$}
                    \STATE $\mathsf{Update}(\R, C^{d}_{L'}(\G))$
                \ELSE
                    \IF{$C^{d}_{L'} (\G)$ satisfies Eq.~\eqref{Eqn: RUpdate}}
                        \STATE $L_{R} \gets L_{R} \cup \{ j \}$
                    \ENDIF
                \ENDIF
            \ENDIF
        \ENDFOR
    \ENDIF
    \IF{$|L| < s$}
    \FOR{$j \in L_{R}$}
        \STATE $L' \gets L \cup \{ j \} $
        \STATE \textsf{BU-Gen}$(\G, d, s, k, L', C^{d}_{L'}(\G), L_Q \cup (L_{P} -  L_{R}), \R)$
    \ENDFOR
    \ENDIF
    \end{algorithmic}
    }}
     \vspace{-0.5em}
    \caption{The \textsf{BU-Gen} Procedure.}
    \label{Fig:BottomUpGen}
    \vspace{-3em}
\end{figure}

\begin{figure*}[!t]
\hspace{0.02\textwidth}
	\begin{minipage}[t]{0.27\textwidth}
		\includegraphics[width=0.99\columnwidth, height = 1in]{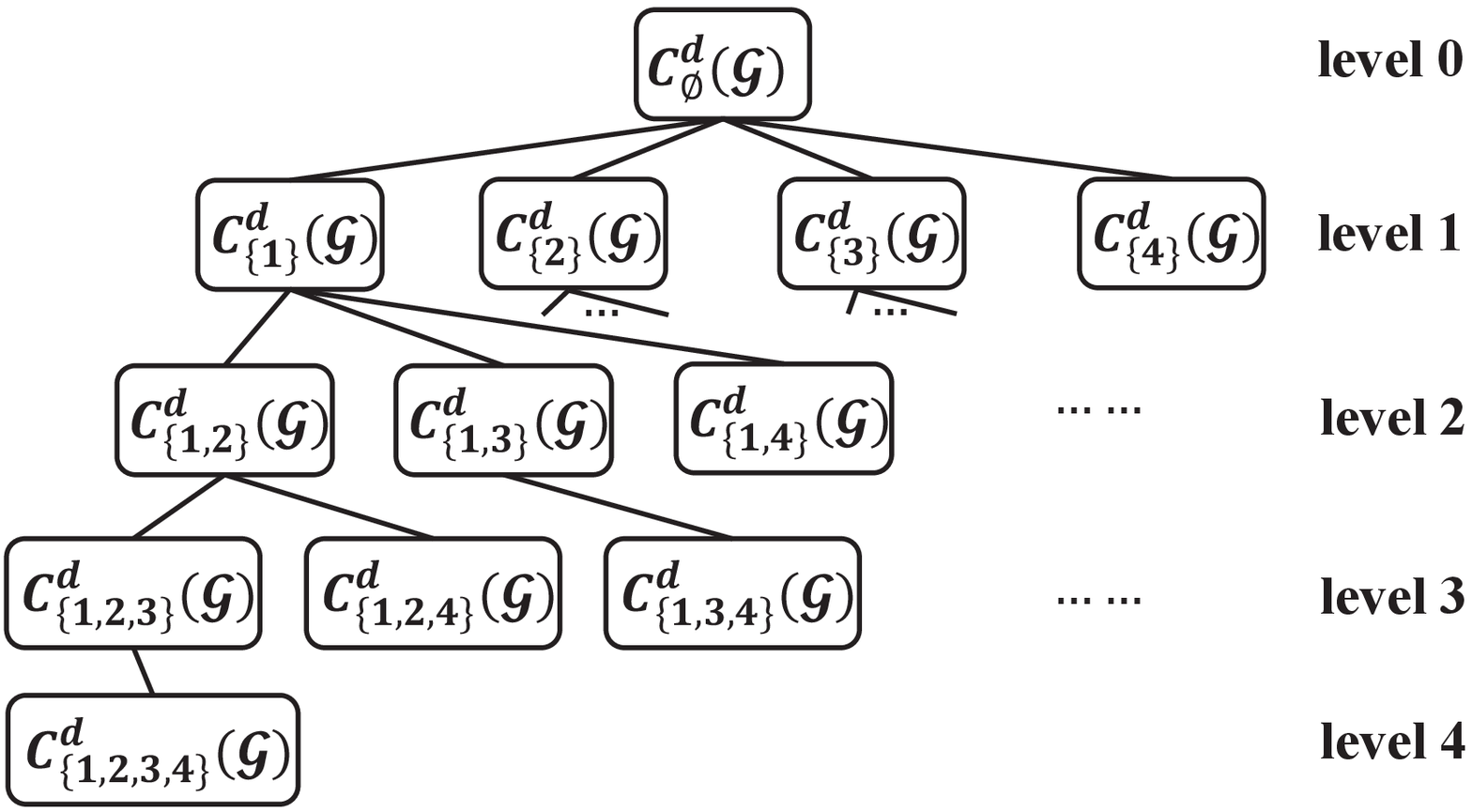}
		\vspace{-1.9em}
		\caption{Bottom-Up Search Tree.}
		\label{Fig: BottomUp Tree}
	\end{minipage}
    \hspace{0.05\textwidth}
	\begin{minipage}[t]{0.27\textwidth}
		\includegraphics[width=0.99\columnwidth, height = 1in]{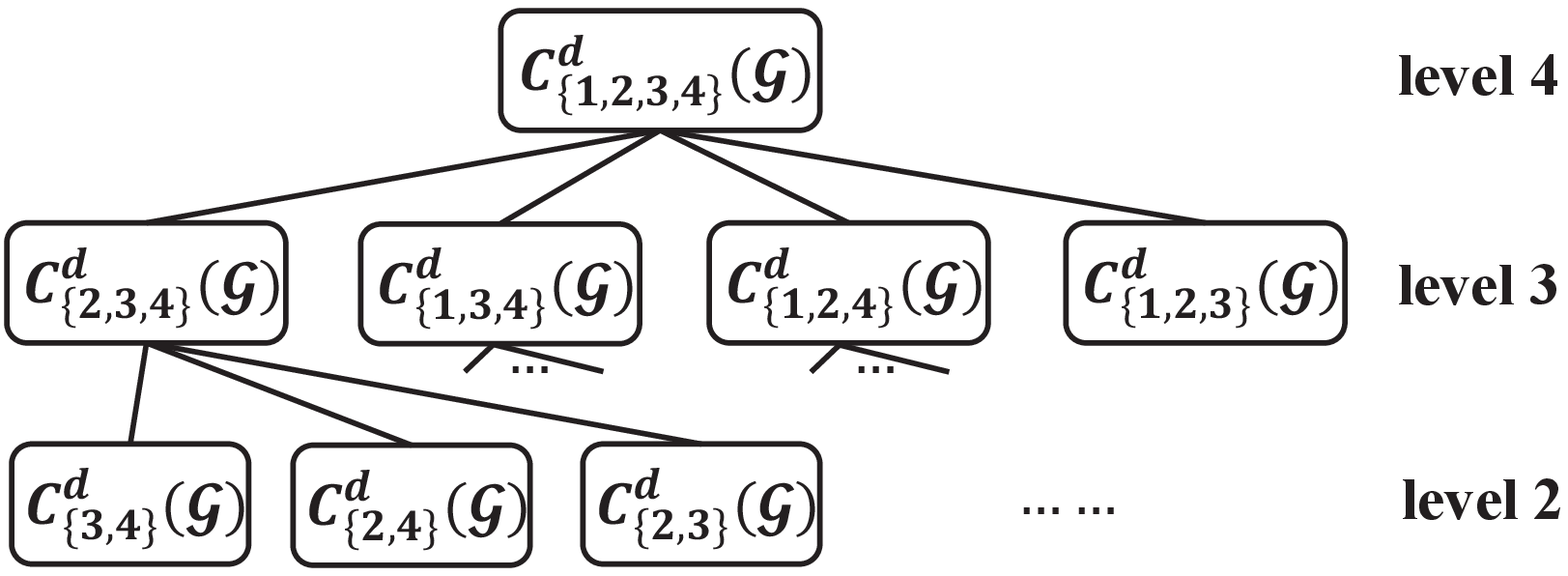}
		\vspace{-1.9em}
		\caption{Top-Down Search Tree.}
		\label{Fig: TopDown Tree}
	\end{minipage}
   \hspace{0.05\textwidth}
	\begin{minipage}[t]{0.3\textwidth}
       \centering
		\includegraphics[width=0.75\columnwidth, height = 1.1in]{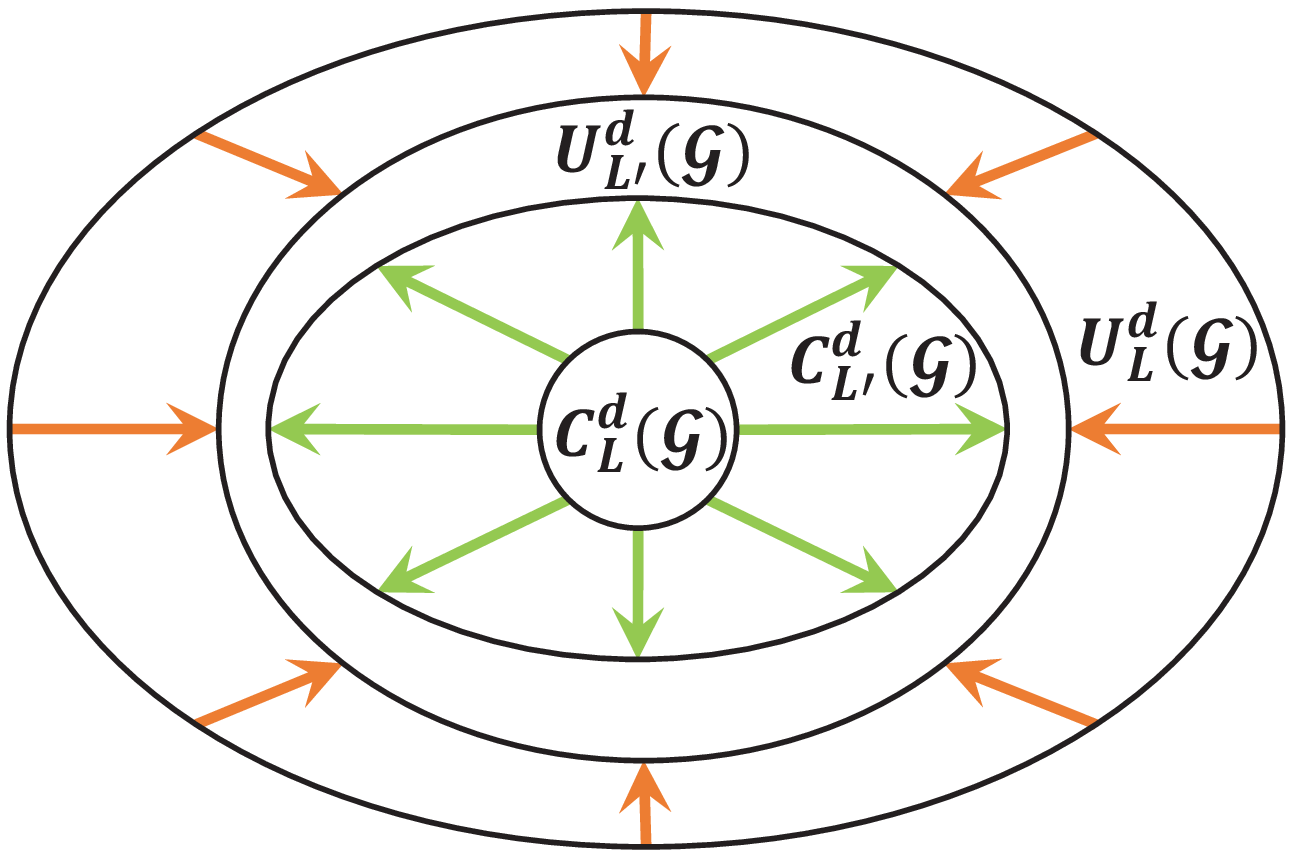}
		\vspace{-1.5em}
		\caption{Relationships between $C^{d}_{L}(\G)$, $U^{d}_{L}(\G)$, $C^{d}_{L'}(\G)$ and $U^{d}_{L'}(\G)$.}
		\label{Fig: CURelationship}
	\end{minipage}

\end{figure*}

The $d$-CCs in the search tree are generated in a depth-first order. First, we generate the $d$-core $C^d(G_i)$ on each single layer $G_i$. By definition, we have $C^{d}_{\{i\}}(\G) = C^d(G_i)$.
Then, starting from $C^d_{\{i\}}(\G)$, we generate the descendants of $C^d_{\{i\}}(\G)$. The depth-first search is realized by recursive Procedure \textsf{BU-Gen} in Fig.~\ref{Fig:BottomUpGen}. In general, given a $d$-CC $C^d_L(\G)$ as input, we first expand $L$ by adding a layer number $j$ such that $\max(L) < j \le l(\G)$. Let $L' = L \cup \{j\}$. By Lemma~\ref{Lem:kInjection}, we have $C_{L'}^{d}(\G) \subseteq C_L^d(\G) \cap C_{\{j\}}^d(\G) = C_L^d(\G) \cap C^d(G_j)$. Thus, we compute $C_{L'}^d(\G)$ on the induced subgraph $\G[C_L^d(\G) \cap C^d(G_j)]$ by Procedure \textsf{dCC} described in Section~\ref{Sec:GAlgorithm}. Next, we process $C_{L'}^d(\G)$ according to the following cases:

\noindent \underline{\bf Case~1:} If $|L'| = s$, we update $\R$ with $C_{L'}^d(\G)$.

\noindent \underline{\bf Case~2:} If $|L'| < s$ and $|\R| < k$, we recursively call \textsf{BU-Gen} to generate the descendants of $C_{L'}^d(\G)$.

\noindent \underline{\bf Case~3:} If $|L'| < s$ and $|\R| = k$, we check if $C_{L'}^d(\G)$ satisfies Eq.~\eqref{Eqn: RUpdate} to update $\R$. If not satisfied, none of the descendants of $C_{L'}^d(\G)$ is qualified to be a candidate, so we prune the entire subtree rooted at $C_{L'}^d(\G)$; otherwise, we recursively call \textsf{BU-Gen} to generate the descendants of $C_{L'}^d(\G)$. The correctness is guaranteed by the following lemma.

\begin{lemma}[Search Tree Pruning]
\label{Lem: dCCPrune}
For a $d$-CC $C^d_{L}(\G)$, if $C^d_{L}(\G)$ does not satisfy Eq.~\eqref{Eqn: RUpdate}, none of the descendants of $C^d_{L}(\G)$ can satisfy Eq.~\eqref{Eqn: RUpdate}.
\end{lemma}


To further improve efficiency, if $|\R| = k$, we order the layer numbers $j > \max(L)$ in decreasing order of $|C_L^d(\G) \cap C^d(G_j)|$ and generate $C_{L \cup \{j\}}^d(\G)$ according to this order of $j$. For some $j$, if $|C_{L}^d(\G) \cap C^d(G_j)|<\frac{1}{k}|\Cov(\R)| + |\Delta(\R, C^*(\R))|$, we can stop searching the subtrees rooted at $C_{L \cup \{j\}}^d(\G)$ and $C_{L \cup \{j'\}}^d(\G)$ for all $j'$ succeeding $j$ in the order. The correctness is ensured by the following lemma.

\begin{lemma}[Order-based Pruning]
\label{Lem: LayerOrder}
For a $d$-CC $C^d_L(\G)$ and $j > \max(L)$, if $|C_{L}^d(\G) \cap C^d(G_j)| < \frac{1}{k}|\Cov(\R)| + |\Delta(\R, C^*(\R))|$, then $C_{L \cup \{j\}}^d(\G)$ cannot satisfy Eq.~\eqref{Eqn: RUpdate}.
\end{lemma}
%

Another optimization technique is called \emph{layer pruning}. For $\max(L) < j \le  l(\G)$, if $|\R| = k$ and $C^{d}_{L \cup \{j\}}(\G)$ does not satisfy Eq.~\eqref{Eqn: RUpdate}, we need not generate $C^{d}_{L'}(\G)$ for all $L'$ such that $L \cup \{j\} \subseteq L' \subseteq [l(\G)]$. The correctness is guaranteed by the following lemma.

\begin{lemma}[Layer Pruning]
\label{Lem: LayerPrune}
For a $d$-CC $C^d_{L}(\G)$ and $j > \max(L)$, if $C^d_{L \cup \{j\}}(\G)$ does not satisfy Eq.~\eqref{Eqn: RUpdate}, then $C^d_{L' \cup \{j\}}(\G)$ cannot satisfy Eq.~\eqref{Eqn: RUpdate} for all $L'$ such that $L \subseteq L' \subseteq [l(\G)]$.
\end{lemma}


Fig.~\ref{Fig:BottomUpGen} describes the pseudocode of Procedure \textsf{BU-Gen}, which naturally follows the steps presented above. Here, we make a few necessary remarks. The input $L_Q$ is the set of layer numbers that cannot be used to expand $L$. They are obtained according to Lemma~\ref{Lem: LayerPrune} when generating the ascendants of $C^{d}_{L}(\G)$. Thus, the layer numbers possible to be added to $L$ are $L_P = \{ j | \max(L) <  j \leq l(\G)\} - L_Q$ (line~1). In \textsf{BU-Gen}, we use set $L_{R}$ to record the layer numbers that can actually be added to $L$ (lines~9 and 22). In lines~24--26, for each $j \in L_R$, we make a recursive call to \textsf{BU-Gen} to generate the descendants of $C^{d}_{L \cup \{ j\} }(\G)$. By Lemma~\ref{Lem: LayerPrune}, the layer numbers that cannot be added to $L'$ are $L_Q \cup (L_{P} - L_{R})$.



\subsection{Bottom-Up Algorithm}
\label{Sec:BApproach-5}

Fig.~7 describes the complete bottom-up DCCS algorithm \textsf{BU-DCCS}. Given a multi-layer graph $\G$ and three parameters $d, s, k \in \mathbb{N}$, we can solve the DCCS problem by calling \textsf{BU-Gen}$(\G, d, s, k, \emptyset, V(\G), \emptyset, \R)$ (line~10). To further speed up the algorithm, we propose three preprocessing methods.

\noindent{\underline{\bf Vertex Deletion.}}
Let $\Sup(v)$ denote the support number of layers $i$ such that $v \in C^d(G_i)$, where $i \in [l(\G)]$. If $\Sup(v) < s$, $v$ must not be contained in any $d$-CCs $C_L^d(\G)$ with $|L| = s$. Therefore, we can safely remove all these vertices from $\G$ and recompute the $d$-cores of all layers. This process is repeated until $\Sup(v) \geq s$ for all remaining vertices $v$ in $\G$. Lines~1--7 of \textsf{BU-DCCS} describe this preprocessing method.

\noindent{\underline{\bf Sorting Layers.}}
We sort the layers of $\G$ in descending order of $|C^d(G_i)|$, where $1 \le i \le l(\G)$. Intuitively, the larger $|C^d(G_i)|$ is, the more likely $G_i$ contains a large candidate $d$-CC. Although there is no theoretical guarantee on the effectiveness of this preprocessing method, it is indeed effective in practice. Line~9 of \textsf{BU-DCCS} applies this preprocessing method.

\noindent{\underline{\bf Initialization of $\R$.}}
The pruning techniques in \textsf{BU-Gen} are not applicable unless $|\R| = k$, so a good initial state of $\R$ can greatly enhance pruning power. We develop a greedy procedure \textsf{InitTopK} to initialize $\R$ so that $|\R| = k$. Due to space limits, the details of Procedure \textsf{InitTopK} is described in Appendix~D. Line~8 of \textsf{BU-DCCS} initializes $\R$ by Procedure \textsf{InitTopK}.

\begin{theorem}
The approximation ratio of \textsf{BU-DCCS} is $1/4$.
\end{theorem}

\begin{figure}[!t]
    \scriptsize
    \fbox{
    \parbox{\figwidth}{
    \textbf{Algorithm} \textsf{BU-DCCS}$(\G, d, s, k)$
    \begin{algorithmic}[1]
    \REPEAT
		\FOR{$i \gets 1$ \TO $l(\G)$}
        	\STATE compute the $d$-core $C^{d}(G_i)$ on graph $G_i$
		\ENDFOR
		\FOR{each $v \in V(\G)$}
			\IF{$\Sup(v) < s$}
				\STATE remove $v$ from $\G$
			\ENDIF
		\ENDFOR
	\UNTIL {$\Sup(v) \ge s$ for all $v \in V(\G)$}
    \STATE $\R \gets \mathsf{InitTopK}(\G, d, s, k)$
    \STATE sort all layer numbers in descending order of $|C^{d}(G_i)|$, where $i \in [l(\G)]$
	\STATE \textsf{BU-Gen}$(\G, d, s, k, \emptyset, V(\G), \emptyset, \R)$
	\RETURN $\R$
    \end{algorithmic}
    }}
    \vspace{-0.5em}
    \caption{The \textsf{BU-DCCS} Algorithm.}
    \label{Fig:ABottomup}
    \vspace{-3em}
\end{figure}

\section{Top-Down Algorithm}
\label{Sec:TApproach}

The bottom-up algorithm must traverse a search tree from the root down to level $s$. When $s \ge l(\G)/2$, the efficiency of the algorithm degrades significantly. As verified by the experiments in Section~\ref{Sec:PEvaluation}, the performance of the bottom-up algorithm is close to or even worse than the greedy algorithm when $s \ge l(\G)/2$. To handle this problem, we propose a top-down approach for the \textsc{DCCS} problem when $s \ge l(\G)/2$.

In this section, we assume $s \geq l(\G)/2$. In the top-down algorithm, we maintain a temporary top-$k$ result set $\R$ and update it in the same way as in the bottom-up algorithm. However, candidate $d$-CCs are generated in a top-down manner. The reverse in search direction makes the techniques in the bottom-up algorithm no longer suitable. Therefore, we propose a new candidate $d$-CC generation method and a series of new pruning techniques suitable for top-down search. The top-down algorithm attains an approximation ratio of $1/4$. As verified by the experiments in Section~\ref{Sec:PEvaluation}, the top-down algorithm is superior to the other algorithms when $s \ge l(\G)/2$.


\subsection{Top-Down Candidate Generation}
\label{Sec:TApproach-2}

We first introduce how to generate $d$-CCs in a top-down manner. In the top-down algorithm, all $d$-CCs are conceptually organized as a search tree as illustrated in Fig.~\ref{Fig: TopDown Tree}, where $C^{d}_{L}(\G)$ is the parent of $C^{d}_{L'}(\G)$ if $L' \subset L$, $|L| = |L'| + 1$ and the only layer number $\ell \in L - L'$ satisfies $\ell > \max([l(\G)] - L)$. Except the root $C^{d}_{[l(\G)]}$, all $d$-CCs in the search tree has a unique parent. We generate candidate $d$-CCs by depth-first searching the tree from the root down to level $s$ and update the temporary result set $\R$ during search.

Let $C^{d}_{L}(\G)$ be the $d$-CC currently visited in DFS, where $|L| > s$. We must generate the children of $C^{d}_{L}(\G)$. By Property~\ref{Lem:PHierarchy} of $d$-CCs, we have $C^{d}_{L}(\G) \subseteq C^{d}_{L'}(\G)$ for all $L' \subseteq L$. Thus, to generate $C^{d}_{L'}(\G)$, we only have to add some vertices to $C^{d}_{L}(\G)$ but need not to delete any vertex from $C^{d}_{L}(\G)$.

To this end, we associate $C^{d}_{L}(\G)$ with a vertex set $U^{d}_{L}(\G)$. $U^{d}_{L}(\G)$ must contain vertices in all descendants $C^{d}_{S}(\G)$ of $C^{d}_{L}(\G)$ such that $|S| = s$. $U^{d}_{L}(\G)$ serves as the scope for searching for the descendants of $C^{d}_{L}(\G)$. We call $U^{d}_{L}(\G)$ the \emph{potential vertex set} w.r.t.~$C^{d}_{L}(\G)$. Obviously, we have $C^{d}_{L}(\G) \subseteq U^{d}_{L}(\G)$. Initially, $U^{d}_{[l(\G)]}(\G) = V(\G)$. Section~\ref{Sec:TApproach-3} will describe how to shrink $U^{d}_{L}(\G)$ to $U^{d}_{L'}(\G)$ for $L' \subseteq L$, so we have $U^{d}_{L'}(\G) \subseteq U^{d}_{L}(\G)$ if $L' \subseteq L$. The relationships between $C^{d}_{L}(\G)$, $U^{d}_{L}(\G)$, $C^{d}_{L'}(\G)$ and $U^{d}_{L'}(\G)$ are illustrated in Fig.~\ref{Fig: CURelationship}. The arrows in Fig.~\ref{Fig: CURelationship} indicates that $C^{d}_{L'}(\G)$ is expanded from $C^{d}_{L}(\G)$, and $U^{d}_{L'}(\G)$ is shrunk from $U^{d}_{L}(\G)$. Keeping this in mind, we focus on top-down candidate generation in this subsection. Sections~\ref{Sec:TApproach-3} and~\ref{Sec:TApproach-4} will describe how to compute $U^{d}_{L'}(\G)$ and $C^{d}_{L'}(\G)$, respectively.

\begin{figure}[!t]
    \scriptsize
    \fbox{
    \parbox{\figwidth}{
    \textbf{Procedure} \textsf{TD-Gen}$(\G,\!d,\!s,\!k,\!L,\!C^{d}_{L}(\G),\!U^{d}_{L}(\G),\!\R)$\!\!\!\!\!\!
    \begin{algorithmic}[1]
    \STATE $L_{R} = \{ j | \max([l(\G)] - L) < j \le l(\G) \} \cap L$
    \FOR{each $j \in L_{R}$}
        \STATE $L' \gets  L - \{ j \}$
        \STATE $U^{d}_{L'}(\G) \gets \textsf{RefineU}(\G, d, s, U^{d}_{L}(\G), L')$
        \STATE $C^{d}_{L'}(\G) \gets \textsf{RefineC}(\G, d, s, U^{d}_{L'}(\G), L')$
    \ENDFOR
    \IF{$|\R| < k$}
        \FOR{each $j \in L_{R}$}
            \STATE $L' \gets  L - \{ j \}$
            \IF{$|L'| = s$}
                \STATE \textsf{Update}$(\R, C^{d}_{L'}(\G))$
            \ELSE
                \STATE \textsf{TD-Gen}$(\G, d, s, k, L, C^{d}_{L'}(\G), U^{d}_{L'}(\G),\R)$
            \ENDIF
        \ENDFOR
    \ELSE
        \STATE  sort $j \in L_{R}$ in descending order of $|U^{d}_{L - \{ j \} }(\G)|$
        \FOR{each $j$ in the sorted $L_{R}$}
            \STATE  $L' \gets  L - \{ j \}$
            \IF{$|U^{d}_{L'}(\G)| < |\Cov(\R)|/k + |\Delta(\R, C^*(\R))|$}
                \STATE \textbf{break}
            \ELSE
                    \IF{$|L'| = s$}
                        \STATE \textsf{Update}$(\R, C^{d}_{L'}(\G))$
                    \ELSE
                    \IF{$C^{d}_{L'}(\G)$ satisfies Eq.~\eqref{Eqn: RUpdate}}
                        \IF{$U^{d}_{L'}(\G)$ satisfies Eq.~\eqref{Eqn: UCondition}}
                            \STATE $S \gets L' \!-\! \{|L'| \!-\! s$ numbers randomly chosen from $L_R\}$
                            \STATE $C^{d}_{S}(\G) \gets \textsf{dCC}(\G[U^{d}_{L'}(\G)], S, d)$
                            \STATE \textsf{Update}$(\R, C^{d}_{S}(\G))$
                        \ELSE
                            \STATE \textsf{TD-Gen}$(\G,d,s,k,L,C^{d}_{L'}(\G),U^{d}_{L'}(\G),\R)$
                        \ENDIF
                    \ENDIF
                \ENDIF
            \ENDIF
        \ENDFOR
    \ENDIF
    \end{algorithmic}
    }}
    \vspace{-0.5em}
    \caption{The \textsf{\small TD-Gen} Procedure.}
    \label{Fig: TopDownGen}
    \vspace{-3em}
\end{figure}

The top-down candidate $d$-CC generation is implemented by the recursive procedure \textsf{TD-Gen} in Fig.~\ref{Fig: TopDownGen}. Let $L_{R} = \{ j | \max([l(\G)] - L) < j \le l(\G)\} \cap L$ be the set of layer numbers possible to be removed from $L$ (line~1). For each $j \in L_{R}$, let $L' = L - \{j\}$. We have that $C^{d}_{L'}(\G)$ is a child of $C^{d}_{L}(\G)$. We first obtain $U^{d}_{L'}(\G)$ and $C^{d}_{L'}(\G)$ by the methods in Section~\ref{Sec:TApproach-3} (line~4) and Section~\ref{Sec:TApproach-4} (line~5), respectively. Next, we process $C_{L'}^d(\G)$ based on the following cases:

\noindent\underline{\textbf{Case~1} (lines~9--10):} If $|\R| < k$ and $|L'| = s$, we update $\R$ with $C_{L'}^d(\G)$ by Rule~1 specified in Section~\ref{Sec:BApproach-2}.

\noindent\underline{\textbf{Case~2} (lines~11--12):}  If $|\R| < k$ and $|L'| > s$, we recursively call \textsf{TD-Gen} to generate the descendants of $C_{L'}^d(\G)$.

\noindent\underline{\textbf{Case~3} (lines~20--21):} If $|\R| = k$ and $|L'| = s$, we update $\R$ with $C_{L'}^d(\G)$ by Rule~2 specified in Section~\ref{Sec:BApproach-2}.

\noindent\underline{\textbf{Case~4} (lines~22--29):} If $|\R| = k$ and $|L'| > s$,  we check if $U^d_{L'}(\G)$ satisfies Eq.~\eqref{Eqn: RUpdate} to update $\R$ (line~23). If it is not satisfied, none of the descendants of $C_{L'}^d(\G)$ is qualified to be a candidate $d$-CC, so we prune the entire subtree rooted at $C_{L'}^d(\G)$. Otherwise, we recursively call \textsf{TD-Gen} to generate the descendants of $C_{L'}^d(\G)$ (line~29). The correctness of the pruning method is guaranteed by the following lemma.

\begin{lemma}[Search Tree Pruning]
\label{Lem: TDdCCPrune}
For a $d$-CC $C^d_L(\G)$ and its potential vertex set $U^d_L(\G)$, where $|L| > s$, if $U^d_{L}(\G)$ does not satisfy Eq.~\eqref{Eqn: RUpdate}, any descendant $C^d_{L'}(\G)$ of $C^d_L(\G)$ with $|L'| = s$ cannot satisfy Eq.~\eqref{Eqn: RUpdate}.
\end{lemma}


To make top-down candidate $d$-CC generation even faster, we further propose some methods to prune the search tree.

If $|\R| = k$ (Cases~3 and~4), we order the layer numbers $j \in L_{R}$ in descending order of $|U^{d}_{L - \{ j \} }(\G)|$ (line~14). For some $j \in L_R$, if $|U^{d}_{L - \{ j \} }(\G)| < \frac{|\Cov(\R)|}{k} + |\Delta(\R, C^*(\R))|$, we need not to consider all layer numbers in $L_R$ succeeding $j$ and can terminate searching the subtrees rooted at $C^{d}_{L - \{j\}}(\G)$ immediately (lines~17--18). The correctness of this pruning method is ensured by the following lemma.

\begin{lemma}[Order-based Pruning]
\label{Lem: TDLayerOrder}
For a $d$-CC $C^d_L(\G)$, its potential vertex set $U^d_L(\G)$ and $j > \max ([l(\G)] - L)$, if $|U^d_{L - \{ j\}}(\G)| < \frac{|\Cov(\R)|}{k} + |\Delta(\R, C^*(\R))|$, any descendant $C^d_{L - \{ j \}}(\G)$ of $C^d_L(\G)$ cannot satisfy Eq.~\eqref{Eqn: RUpdate}.
\end{lemma}


More interestingly, for Case~4, in some optimistic cases, we need not to search the descendants of $C^{d}_{L}(\G)$. Instead, we can randomly select a descendant $C^{d}_{S}(\G)$ of $C^{d}_{L}(\G)$ with $|S| = s$ to update $\R$ (lines~25--27).  The correctness is ensured by the following lemma.

\begin{lemma}[Potential Set Pruning]
\label{Lem: TDUPPrune}
For a $d$-CC $C^d_L(\G)$ and its potential vertex set $U^d_L(\G)$, where $|L| > s$, if $C^{d}_{L}(\G)$ satisfies Eq.~\eqref{Eqn: RUpdate}, and $U^d_L(\G)$ satisfies
\begin{equation}
\label{Eqn: UCondition}
|U^{d}_{L}(\G)| <  (\tfrac{1}{k} + \tfrac{1}{k^2}) |\Cov(\R)|+ (1 + \tfrac{1}{k})|\Delta(\R, C^{*}(\R))|,
\end{equation}
the following proposition holds: For any two distinct descendants $C^d_{S_1}(\G)$ and $C^d_{S_2}(\G)$ of $C^{d}_{L}(\G)$ such that $|S_1| = |S_2| = s$, if $|\R| = k$ and $\R$ has already been updated by $C^{d}_{S_1}(\G)$, then $C^{d}_{S_2}(\G)$ cannot update $\R$ any more.
\end{lemma}


\subsection{Refinement of Potential Vertex Sets}
\label{Sec:TApproach-3}

Let $C^{d}_{L}(\G)$ be the $d$-CC currently visited by DFS and $C^{d}_{L'}(\G)$ be a child of $C^{d}_{L}(\G)$. To generate $C^{d}_{L'}(\G)$, Procedure \textsf{TD-Gen} first refines $U^{d}_{L}(\G)$ to $U^{d}_{L'}(\G)$ and then generates $C^{d}_{L'}(\G)$ based on $U^{d}_{L'}(\G)$. This subsection introduces how to shrink $U^{d}_{L}(\G)$ to $U^{d}_{L'}(\G)$.

First, we introduce some useful concepts. Given a subset of layer numbers $L \subseteq [l(\G)]$, we can divide all layer numbers in $L$ into two disjoint classes:


\noindent \underline{\bf Class 1:}
By the relationship of $d$-CCs in the top-down search tree, for any layer number $\ell \in L$ and $\ell < \max([l(\G)] - L)$,  $\ell$ will not be removed from $L$ in any descendant of $C^{d}_{L}(\G)$. Thus, for any descendant $C^d_S(\G)$ of $C^{d}_{L}(\G)$ with $|S| = s$, we have $l \in S$.

\noindent \underline{\bf Class 2:}
By the relationship of $d$-CCs in the top-down search tree, for any layer number $\ell \in L$ and $\ell > \max([l(\G)] - L)$,  $\ell$ can be removed from $L$ to obtain a descendant of $C^{d}_{L}(\G)$. Thus, for a descendant $C^d_S(\G)$ of $C^{d}_{L}(\G)$ with $|S| = s$, it is undetermined whether $\ell \in S$.


Let $M_L$ and $N_L$ denote the Class~1 and Class~2 of layer numbers w.r.t.~$L$, respectively. Procedure \textsf{RefineU} in Fig.~\ref{Fig: URefine} refines $U^{d}_{L}(\G)$ to $U^{d}_{L'}(\G)$. Let $U = U^{d}_{L}(\G)$ (line~1). First, we obtain $M_{L'}$ and $N_{L'}$ w.r.t.~$L'$  (line~2). Then, we apply them to repeat the following two refinement methods to remove irrelevant vertices from $U$ until no vertices can be removed any more (lines~3--8). Finally, $U$ is output as $U_{L'}^{d}(\G)$ (line~9).

\begin{figure}[!t]
    \scriptsize
    \fbox{
    \parbox{\figwidth}{
    \textbf{Procedure} \textsf{RefineU}$(\G,d,s, U^{d}_{L}(\G), L')$
    \begin{algorithmic}[1]
    \STATE $U \gets U^{d}_{L}(\G)$
    \STATE $M_{L'} \gets \{ j | j \in L, j < \max([l(\G)] - L) \}$, $N_{L'} \gets L - M_{L'}$
    \REPEAT
		\WHILE{there exists $v \in U$ and $i \in M_{L'}$ such that $d_{G_i[U]}(v) <  d$}
			\STATE remove $v$ from $U$ and all layers of $\G$
		\ENDWHILE
    		\WHILE{there exists $v \in U$ that occurs in less than $s - |M_{L'}|$ of the $d$-cores $C^{d}(G_j)$ for $j \in N_{L'}$}
			\STATE remove $v$ from $U$ and all layers of $\G$
		\ENDWHILE
    \UNTIL{no vertex in $U$ can be removed}
    \RETURN $U$
    \end{algorithmic}
    }}
    \vspace{-0.5em}
    \caption{The \textsf{\small RefineU} Procedure.}
    \label{Fig: URefine}
    \vspace{-3em}
\end{figure}


\noindent{\underline{\textbf{Refinement Method~1} (lines~4--5):}}
For each layer number $i \in M_{L'}$, we have $i \in S$ for all descendants $C^{d}_{S}(\G)$ of $C^{d}_{L'}(\G)$ with $|S| = s$. Note that $C^{d}_{S}(\G)$ must be $d$-dense in $G_i$. Thus, if the degree of a vertex $v$ in $G_i[U]$ is less than $d$, we have $v \not\in C^{d}_{S}(\G)$, so we can remove $v$ from $U$ and $\G$.


\noindent{\underline{\textbf{Refinement Method~2} (lines~6--7):}}
If a vertex $v \in U$ is contained in a descendant $C^{d}_{S}(\G)$ of $C^{d}_{L}(\G)$ with $|S| = s$, $v$ must occur in all the $d$-cores $C^d(G_i)$ for $i \in M_{L'}$ and must occur in at least $s - |M_{L'}|$ of the $d$-cores $C^d(G_j)$ for $j \in N_{L'}$. Therefore, if $v$ occurs in less than $s - |M_{L'}|$ of the $d$-cores $C^d(G_j)$ for $j \in N_{L'}$, we can remove $v$ from $U$ and $\G$.

\subsection{Refinement of d-CCs}
\label{Sec:TApproach-4}

Let $C^{d}_{L}(\G)$ be the $d$-CC currently visited by DFS and $C^{d}_{L'}(\G)$ be a child of $C^{d}_{L}(\G)$, where $|L| > s$. Since $C^{d}_{L'}(\G) \subseteq U^{d}_{L'}(\G)$, Procedure \textsf{dCC} in Section~\ref{Sec:GAlgorithm} can find $C^{d}_{L'}(\G)$ on $\G[U^{d}_{L'}(\G)]$ from scratch. However, this straightforward method is not efficient. In this subsection, we propose an more efficient algorithm to construct $C^{d}_{L'}(\G)$ by adopting two techniques: 1) An index structure that helps eliminate more vertices in $U^{d}_{L'}(\G)$ irrelevant to $C^{d}_{L'}(\G)$. 2) A search strategy with early termination to find $C^{d}_{L'}(\G)$ efficiently.

\noindent{\underline{\bf Index Structure.}}
First, we introduce an index structure that organizes all vertices of $\G$ hierarchically and helps filter out the vertices irrelevant to $C^{d}_{L'}(\G)$ efficiently. Recall that $\Sup(v)$ is the number of layers whose $d$-cores contain $v$. Values $\Sup(v)$ are used to determine the vertices in $U^{d}_{L'}(\G)$ that are not in $C^{d}_{L'}(\G)$. Specifically, for $h \in \mathbb{N}$, let $J_h$ be the set of vertices $v$ iteratively removed from $\G$ due to $\Sup(v) \leq h$. Let $I_h = J_{h} - J_{h-1}$. Obviously, $I_1, I_2, \dots, I_{l(\G)}$ is a disjoint partition of all vertices of $\G$. Based on this partition, we can narrow down the search scope of $C^{d}_{L'}(\G)$ from $U^{d}_{L'}(\G)$ to $U^{d}_{L'}(\G) \cap (\bigcup_{h = |L'|}^{l(\G)} I_{h})$ according to the following lemma.

\begin{lemma}
\label{Lem:CScope}
$C^{d}_{L'}(\G) \subseteq U^{d}_{L'}(\G) \cap \left(\bigcup_{h = |L'|}^{l(\G)} I_{h} \right)$.
\end{lemma}

The index structure is basically the hierarchy of vertices following $I_1, I_2, \dots, I_{l(\G)}$, that is, the vertices in $I_i$ are placed on a lower level than those in $I_{i + 1}$. Internally, the vertices in $I_i$ are also placed on a stack of levels, which is determined as follows. Suppose the vertices in $I_1, I_2, \ldots, I_{i - 1}$ have been removed from $\G$. Although the vertices $v \in I_i$ are iteratively removed from $\G$ due to $\Sup(v) \le i$, they are actually removed in different batches. In each batch, we select all the vertices $v$ with $\Sup(v) \le i$ and remove them together. After a batch, some vertices $v$ originally satisfying $\Sup(v) > i$ may have $\Sup(v) \le i$ and thus will be removed in next batch. Therefore, in $I_i$, the vertices removed in the same batch are place on the same level, and the vertices removed in a later batch are placed on a higher level than the vertices removed in an early batch. In addition, let $L(v)$ be the set of layer numbers on which $v$ is contained in the $d$-core just before $v$ is removed from $\G$ in batch. We associate each vertex $v$ in the index with $L(v)$. Moreover, we add an edge between vertices $u$ and $v$ in the index if $(u, v)$ is an edge on a layer of $\G$.

By Lemma~\ref{Lem:CScope}, we have narrowed down the search scope of $C^{d}_{L'}(\G)$ from $U^{d}_{L'}(\G)$ to $Z = U^{d}_{L'}(\G) \cap (\bigcup_{h = |L'|}^{l(\G)} I_{h})$. By exploiting the index, we can further narrow down the search scope. If there is no sequence of vertices $w_0, w_1, \dots, w_n$ in the index such that $L' \subseteq L(w_0)$, $w_n = v$, $w_i$ is on a higher level than $w_{i + 1}$, and $(w_i, w_{i + 1})$ is an edge in the index, then $v$ must not be contained in $C^{d}_{L'}(\G)$. The correctness of this method is guaranteed by the following lemma.

\begin{lemma}
\label{Lem:CFilter}
For each vertex $v\!\in\!C^{d}_{L'}(\G)$, there exists a sequence of vertices $w_0, w_1, \dots, w_n$ in the index such that $L' \subseteq L(w_0)$, $w_n = v$, $w_{i+1}$ is placed on a higher level than $w_{i}$, and $(w_i, w_{i + 1})$ is an edge in the index.
\end{lemma}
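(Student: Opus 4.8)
The plan is to prove Lemma~\ref{Lem:CFilter} by induction on the level at which $v$ is removed from $\G$ during the construction of the index, exhibiting for each $v \in C^d_{L'}(\G)$ a strictly increasing-level path inside $C^d_{L'}(\G)$ that runs from an \emph{anchor} (a vertex $w_0$ with $L' \subseteq L(w_0)$) up to $v=w_n$. Since an index edge is any edge of some layer $G_i$, every layer-$j$ adjacency with $j \in L'$ is available as an index edge, and this is exactly the adjacency the construction will exploit. By Property~\ref{Lem:PHierarchy} we have $C^d_{L'}(\G) \subseteq C^d(G_j)$ for all $j \in L'$, so no vertex of $C^d_{L'}(\G)$ is discarded at the outset, and by Lemma~\ref{Lem:CScope} all its vertices receive well-defined levels.

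First I would pin down the base case (the anchors). Consider $H = \bigcup_{j \in L'} G_j[C^d_{L'}(\G)]$ and any connected component $K$ of $H$. Because two vertices adjacent on a layer $j \in L'$ necessarily lie in the same component, every $x \in K$ keeps all of its $\ge d$ layer-$j$ neighbors (for each $j \in L'$) inside $K$; hence $K$ is itself $d$-dense on every layer in $L'$. Let $w_K$ be the first vertex of $K$ removed during index construction. Just before its removal all of $K$ is still present, so $K \subseteq C^d(G_j[P])$ for every $j \in L'$ (with $P$ the present set), which gives $L' \subseteq L(w_K)$. Thus each component contributes an anchor at its lowest level, and any $v$ with $L' \subseteq L(v)$ is reached by the trivial path with $n=0$.

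For the inductive step I would take $v \in C^d_{L'}(\G)$ that is not an anchor, so there is $j \in L'$ with $v \notin C^d(G_j[P_v])$ at the moment $P_v$ just before $v$ is removed. Writing $A$ for the vertices of $C^d_{L'}(\G)$ removed strictly before $v$ and $B = C^d_{L'}(\G)\setminus A$, the component argument guarantees $A$ is nonempty (it contains the component anchor), and every vertex of $A$ is reachable by the induction hypothesis. The goal is to produce a layer-$j$ neighbor $u$ of $v$ lying in $A$: appending $v$ to a path reaching $u$ then finishes, since the level of $u$ is strictly below that of $v$ and $(u,v)$ is an index edge. The intuition is that $v$ has $\ge d$ layer-$j$ neighbors inside $C^d_{L'}(\G)$ yet is peeled out of $G_j[P_v]$, so the degree deficit must be charged to neighbors that have already left the graph.

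The hard part will be exactly this charging step: certifying that $v$'s loss of support on layer $j$ is attributable to a \emph{strictly lower-level} neighbor rather than to vertices peeled alongside $v$. The obstacle is the mismatch between two orderings—the per-layer $d$-core peeling order that witnesses why $v \notin C^d(G_j[P_v])$, and the global, support-driven order that defines the index levels. I would resolve it by an extremal argument: let $w$ be a lowest-level vertex of $C^d_{L'}(\G)$ that fails to be reachable and, restricting to its component, peel $G_j[B]$; the first vertex of $B$ to be peeled has fewer than $d$ layer-$j$ neighbors inside $B$ but at least $d$ inside the component, so it must have a neighbor in $A$ and is therefore reachable. The crux is then to show that the index levels within $B$ refine this peeling order, so that reachability propagates upward along the peeling cascade until it reaches $w$, contradicting the choice of $w$. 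Establishing this refinement is where the precise definition of the index levels—the batch ``stack of levels'' inside each $I_h$ together with the recorded sets $L(\cdot)$ and Lemma~\ref{Lem:CScope}—must be invoked carefully, and it is the technical heart of the argument.
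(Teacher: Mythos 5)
Your base case is sound and is the right observation: each connected component $K$ of $\bigcup_{j\in L'}G_j[C^d_{L'}(\G)]$ is $d$-dense on every layer of $L'$, so as long as all of $K$ is present the whole of $K$ lies in $C^d(G_j[P])$ for every $j\in L'$, and the vertices of $K$ removed in the earliest batch touching $K$ are genuine anchors. The gap is in the inductive step, which you yourself flag as unresolved, and it is a real gap rather than a routine technicality. For a non-anchor $v$, the only witness available is that $v\notin C^d(G_j[P_v])$ for some $j\in L'$, where $P_v$ is the vertex set just before $v$'s batch. Unwinding the peeling cascade that expels $v$ from that core does yield a path in $G_j$ from $v$ back to the set $A$ of vertices of $C^d_{L'}(\G)$ removed in earlier batches; but every intermediate vertex of that path lies in $C^d_{L'}(\G)\cap P_v$, hence at a level greater than or equal to the level of $v$. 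Read from the anchor end, such a path is not strictly level-increasing, so it cannot be spliced into the sequence the lemma demands. Your proposed repair --- that ``the index levels within $B$ refine this peeling order'' --- is exactly the missing claim, and nothing in the construction supports it: index batches are governed by the multi-layer support counts $\Sup(\cdot)$, whereas the peeling order inside a single layer's core is governed by per-layer degrees, so vertices expelled at very different depths of the layer-$j$ cascade are routinely removed in the same index batch and land on the same level. Your extremal argument (take the lowest-level unreachable vertex $w$ and propagate reachability up the cascade) founders on the same point, since the cascade may move sideways through vertices on $w$'s own level.

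You have also made the statement harder than it needs to be. The lemma does not require $w_1,\dots,w_{n-1}$ to belong to $C^d_{L'}(\G)$, nor the edges to come from layers in $L'$; only $w_0$ must satisfy $L'\subseteq L(w_0)$, and index edges may come from any layer. The paper exploits this freedom by arguing the contrapositive: it sweeps the levels from the bottom up, uses the intersection bound (Lemma~1) to show that a lowest-level vertex with $L'\not\subseteq L(\cdot)$ cannot lie in $C^d_{L'}(\G)$ and may be deleted, and then argues that a higher-level vertex with no surviving qualified neighbor below it sees the same neighborhood it saw when the index was built and is therefore still excluded from $C^d(G_{j'})$ for some $j'\in L'$. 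Restricting your connecting paths to run inside $C^d_{L'}(\G)$ along $L'$-layer edges is what forces you into the refinement claim you cannot establish; either prove that claim or move to an argument, such as the paper's level-by-level deletion argument, that does not require the path to stay inside the $d$-CC.
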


\begin{figure}[!t]
    \scriptsize
    \fbox{
    \parbox{\figwidth}{
    \textbf{Procedure} \textsf{RefineC}$(\G,d,s, U^{d}_{L'}(\G),L')$\!\!\!\!\!
    \begin{algorithmic}[1]
    \STATE $Z = U^{d}_{L'}(\G) \cap (\bigcup_{h = |L'|}^{l(\G)} I_{h} )$
    \STATE removed all vertices not in $Z$ from the index
    \FOR{each vertex $v \in Z$}
        \STATE set all vertices in $Z$ as unexplored
        \STATE compute $d^{+}_{i}(v)$ of all $i \in L'$
    \ENDFOR
    \FOR{each level of the index}
        \IF{all vertices are unexplored or discarded on the level}
            \FOR{each unexplored vertex $v$ on the level}
                \IF{$L'  \not\subseteq L(v)$}
                    \STATE set $v$ as discarded
                    \STATE \textsf{CascadeD}$(\G, v, d, L')$
                \ELSE
                    \IF{$v$ is not discarded}
                        \STATE set $v$ as undetermined
                        \FOR{each unexplored neighbor $u$ of $v$ on a higher level}
                            \STATE set $u$ as undetermined
                        \ENDFOR
                    \ENDIF
                \ENDIF
            \ENDFOR
        \ELSE
            \FOR{each undetermined vertex $v$ on the level}
                \IF{$d^{+}_{i}(v) < d$ for some $i \in L'$}
                    \STATE set $v$ as discarded
                    \STATE \textsf{CascadeD}$(\G, v, d, L')$
                \ELSE
                    \FOR{each unexplored neighbor $u$ of $v$ on a higher level}
                        \STATE set $v$ as undetermined
                    \ENDFOR
                \ENDIF
             \ENDFOR
            \FOR{each unexplored vertex $v$ on the level}
                 \STATE set $v$ as discarded
                 \STATE \textsf{CascadeD}$(\G, v, d, L')$
            \ENDFOR
        \ENDIF
    \ENDFOR
   \STATE $C_{L'}^{d}(\G) \gets \{  \text{all undetermined vertices in } Z \}$
    \RETURN $C_{L'}^{d}(\G)$
    \end{algorithmic}

    \textbf{Procedure} \textsf{CascadeD}$(\G, v, d, L')$
    \begin{algorithmic}[1]
        \FOR{each undetermined neighbor $u$ of $v$}
            \STATE $d_{i}^{+}(u) \gets d_{i}^{+}(u) - 1$ for each $i \in L'$ and $(u, v) \in E_{i}(\G)$
            \IF{$d_{i}^{+}(u) < d$ for some $i \in L'$}
                \STATE set $u$ as discarded
                \STATE \textsf{CascadeD}$(\G, u, d, L')$
            \ENDIF
        \ENDFOR
    \end{algorithmic}
    }}
     \vspace{-0.5em}
   \caption{The \textsf{\small RefineC} Procedure.}
    \label{Fig: CRefine}
    \vspace{-3.5em}
\end{figure}

\noindent{\underline{\bf Fast Search with Early Termination.}}
Based on the index, Procedure \textsf{RefineC} in Fig.~\ref{Fig: CRefine} searches for the exact $C^{d}_{L'}(\G)$. First, we obtain the search scope $Z = U^{d}_{L'}(\G) \cap (\bigcup_{h = |L'|}^{l(\G)} I_{h} )$ based on the index (line~1). By Lemma~\ref{Lem:CScope}, we only need to consider the vertices in $Z$. Thus, before the search begins, we can remove all the vertices not in $Z$ from the index (line~2).

Unlike Procedure \textsf{dCC} that only removes irrelevant vertices from $\G$, Procedure \textsf{RefineC} can find $C^{d}_{L'}(\G)$ much faster by using two strategies:
1) Identify some vertices not in $C^{d}_{L'}(\G)$ early;
2) Skip searching some vertices not in $C^{d}_{L'}(\G)$.
To this end, we set each vertex $v \in Z$ to one of the following three states:
1) $v$ is \emph{discarded} if it has been determined that $v \not\in C^{d}_{L'}(\G)$;
2) $v$ is \emph{undetermined} if $v$ has been checked, but it has not be determined whether $v \in C^{d}_{L'}(\G)$;
3) $v$ is \emph{unexplored} if it has not been checked by the search process.
During the search process, a discarded vertex will not be involved in the following computation, and an undetermined vertex may become discarded due to the deletion of some edges. Initially, all vertices in $Z$ are set to be unexplored (line~3).

For $i \in L'$, let $d_{i}^{+}(v)$ be the number of undetermined and unexplored vertices adjacent to $v$ in $G_i[Z]$. Clearly, $d_{i}^{+}(v)$ is an upper bound on the degree of $v$ in $G_i[Z]$. If $d_{i}^{+}(v) < d$ on some layer $i \in L'$, we must have $v \not\in C^{d}_{L'}(\G)$, so we can set $v$ as discarded. Notably, the removal of $v$ may trigger the removal of other vertices. The details are described in the \textsf{CascadeD} procedure. Specifically, if $v$ is discarded, for each undetermined vertex $u \in Z$ that is adjacent to $v$, we decrease $d_{i}^{+}(u)$ by $1$ if $(u, v)$ is an edge on a layer $i \in L'$. If $d_{i}^{+}(u) < d$ for some $i \in L'$, we also set $u$ as discarded and recursively invoke the \textsf{CascadeD} procedure to search for more discarded vertices starting from $u$.

In the main search process, we check the vertices in $Z$ in a level-by-level fashion. In each iteration (lines~6--27), we fetch all vertices on a level of the index and process them according to the following two cases:

\noindent\underline{{\bf Case~1} (lines~7--16):}
If there are only unexplored and discarded vertices on the current level, none of the vertices on this level has been checked before by the search process. At this point, we can check each unexplored vertex on this level. Specifically, for each unexplored vertex $v$, if $L' \not\subseteq L(v)$, we have $v \not\in C^{d}_{L'}(\G)$ by Lemma~\ref{Lem:CFilter}. Thus, we can immediately set $v$ as discarded and invoke Procedure \textsf{CascadeD} to explore more discarded vertices starting from $v$ (lines~10--11). Otherwise, if $v$ is not discarded, we set $v$ as undetermined (line~14). For each unexplored neighbor $u \in Z$ of $v$ placed on a higher level than $v$ in the index, we also set $u$ as undetermined since $u$ is possible to be contained in $C^{d}_{L'}(\G)$ (line~16).

\noindent\underline{{\bf Case~2} (lines~17--27):}
If there is some undetermined vertices on the current level, we carry out the following steps. For each undetermined vertex $v$ on this level, we check if $d_{i}^{+}(v)<d$ for some $i \in L'$ (line~19). If it is true, we have $v \not\in C^{d}_{L'}(\G)$. At this point, we set $v$ to be discarded and invoke Procedure \textsf{CascadeD} to explore more discarded vertices starting from $v$ (lines~20--21). Otherwise, $v$ remains to be undetermined. For each unexplored neighbor $u \in Z$ of $v$ placed on a higher level than $v$ in the index, we also set $u$ as undetermined since $u$ is possible to be contained in $C^{d}_{L'}(\G)$ (line~24).

For each vertex $v$ that is still unexplored on the current level, none of the vertices in $C^{d}_{L'}(\G)$ on lower levels than $v$ in the index is adjacent to $v$. By Lemma~\ref{Lem:CFilter}, we have $v \not\in C^{d}_{L'}(\G)$. Thus, we can directly set $v$ to be discarded and invoke Procedure \textsf{CascadeD} to explore more discarded vertices starting from $v$ (lines~26--27).

After examining all levels in the index, $C^{d}_{L'}(\G)$ is exactly the set of all undetermined vertices in $Z$ (lines~28--29).

\noindent{\underline{\bf Time Complexity.}}
Let $l' = |L'|$, $n' = U_{L'}^{d}(\G)$, $m'_i = E_i[U_{L'}^{d}(\G)]$ be the number of edges on layer $i$ of the induced multi-layer graph $\G[U_{L'}^{d}(\G)]$ and $m' = \sum_{i \in L'} m_i$. The following lemma shows that the time cost of the \textsf{RefineC} procedure is $O(n'l' + m')$. Notably, if we apply Procedure \textsf{dCC} on $\G[U_{L'}^{d}(\G)]$ to find $C^{d}_{L'}(\G)$ from scratch, the time cost is $O(n'l' + m''|L'|)$, where $m'' = |\bigcup_{i \in L'} E_i[U_{L'}^{d}(\G)]|$. Since $m' \leq m''l'$ always holds, the time cost of Procedure \textsf{RefineC} is no more than Procedure \textsf{dCC} .

\begin{lemma}
The time complexity of Procedure \textsf{RefineC} is $O(n'l' + m')$.
\end{lemma}


\subsection{Top-Down Algorithm}
\label{Sec:TApproach-5}

\begin{figure}[!t]
    \scriptsize
    \fbox{
    \parbox{\figwidth}{
    \textbf{Algorithm} \textsf{TD-DCCS}$(\G, d, s, k)$
    \begin{algorithmic}[1]
	\STATE execute lines~1--8 of the \textsf{BU-DCCS} algorithm
	\STATE sort all layer numbers $i$ in ascending order of $|C^{d}(G_i)|$, where $i \in [l(\G)]$
   \STATE construct the index of $\G$
   \STATE $C_{[l(\G)]}^{d} \gets $ \textsf{dCC}$(\G, [l(\G)], d)$
   \STATE \textsf{TD-Gen}$(\G, d, s, k, [l(\G)], C_{[l(\G)]}^{d}, V(\G), \R)$
   \RETURN $\R$
    \end{algorithmic}
    }}
    \vspace{-0.5em}
    \caption{The \textsf{\small TD-DCCS} Algorithm.}
    \label{Fig:ATopDown}
    \vspace{-1.5em}
\end{figure}

The preprocessing methods proposed in Section~\ref{Sec:BApproach-5} can also be applied to the top-down DCCS algorithm. The method of vertex deletion and the method of initializing $\R$ can be directly applied. For the method of sorting layers, we sort all layers $i$ of $\G$ in ascending order of $|C^d(G_i)|$ since a layer whose $d$-core is small is less likely to support a large $d$-CC.

We present the complete top-down DCCS algorithm called \textsf{TD-DCCS} in Fig.~\ref{Fig:ATopDown}. The input is a multi-layer graph $\G$ and parameters $d, s, k \in \mathbb{N}$. First, we execute lines~1--8 of the bottom-up algorithm \textsf{BU-DCCS} to remove irreverent vertices and initialize $\R$. Then, we sort all layers $i$ of $\G$ in ascending order of $|C^d(G_i)|$ at line~2. We construct the index for $\G$ (line~3). Next, we invoke recursive Procedure \textsf{TD-Gen} to generate candidate $d$-CCs and update the result set $\R$ (line~5). Finally, $\R$ is returned as the result (line~6).

\begin{theorem}
The approximation ratio of \textsf{TD-DCCS} is $1/4$.
\end{theorem}

\section{Performance Evaluation}
\label{Sec:PEvaluation}

\begin{figure}
    \centering
    \scriptsize
    \begin{tabular}{lrrrr}
    	\hline
        \rowcolor{mygray}
    	Graph $\G$	& $|V(\G)|$ & $\sum_{i = 1}^{l(\G)} |E(G_i)|$ & $|\bigcup_{i = 1}^{l(\G)} E(G_i)|$ & $l(\G)$ \\
    	\hline
    	{\it PPI}	& 328 & 4,745 &  3,101 & 8 \\
    	{\it Author}	& 1,017 & 15,065 & 11,069 & 10 \\
        {\it German}     & 519,365 & 7,205,624 & 1,653,621 & 14\\
        {\it Wiki}    & 1,140,149 & 7,833,140 & 3,309,592 & 24\\
    	{\it English}	& 1,749,651 & 18,951,428 & 5,956,877  & 15\\
        {\it Stack}	& 2,601,977 & 63,497,050 & 36,233,450 & 24\\
        \hline
    \end{tabular}
    \caption{Statistics of Graph Datasets Used in Experiments.}
    \label{Tab: Datasets}
    \vspace{-1em}
\end{figure}

\begin{figure}[!t]
    \centering
    \scriptsize
    \resizebox{0.95\columnwidth}{!}{
    \begin{tabular}{lcc}
    	\hline
        \rowcolor{mygray}
    	Parameter	& Range & Default Value\\
    	\hline
    	$k$	& $\{5, 10, 15, 20, 25\}$ & $10$ \\
    	$d$	& $\{2, 3, 4, 5, 6\}$ & $4$ \\
        $s$ (small)	& $\{1, 2, 3, 4, 5\}$ & $3$ \\
        $s$ (large) & $\{l(\G)-4, l(\G)-3, l(\G)-2, l(\G)-1, l(\G)\}$ & $l(\G)-2$ \\
        $p$	& $\{0.2, 0.4, 0.6, 0.8, 1.0\}$ & $1.0$ \\
        $q$	& $\{0.2, 0.4, 0.6, 0.8, 1.0\}$ & $1.0$ \\
        \hline
    \end{tabular}}
    \vspace{-0.1em}
    \caption{Parameter Configuration.}
     \label{Tab: Parameters}
    \vspace{-3em}
\end{figure}

\begin{figure*}
   \addtolength{\subfigcapskip}{-1.2ex}
    \begin{minipage}[!t]{0.33\linewidth}
        \subfigure[\st{English (Vary $s$)}]{\includegraphics[width=0.49\linewidth]{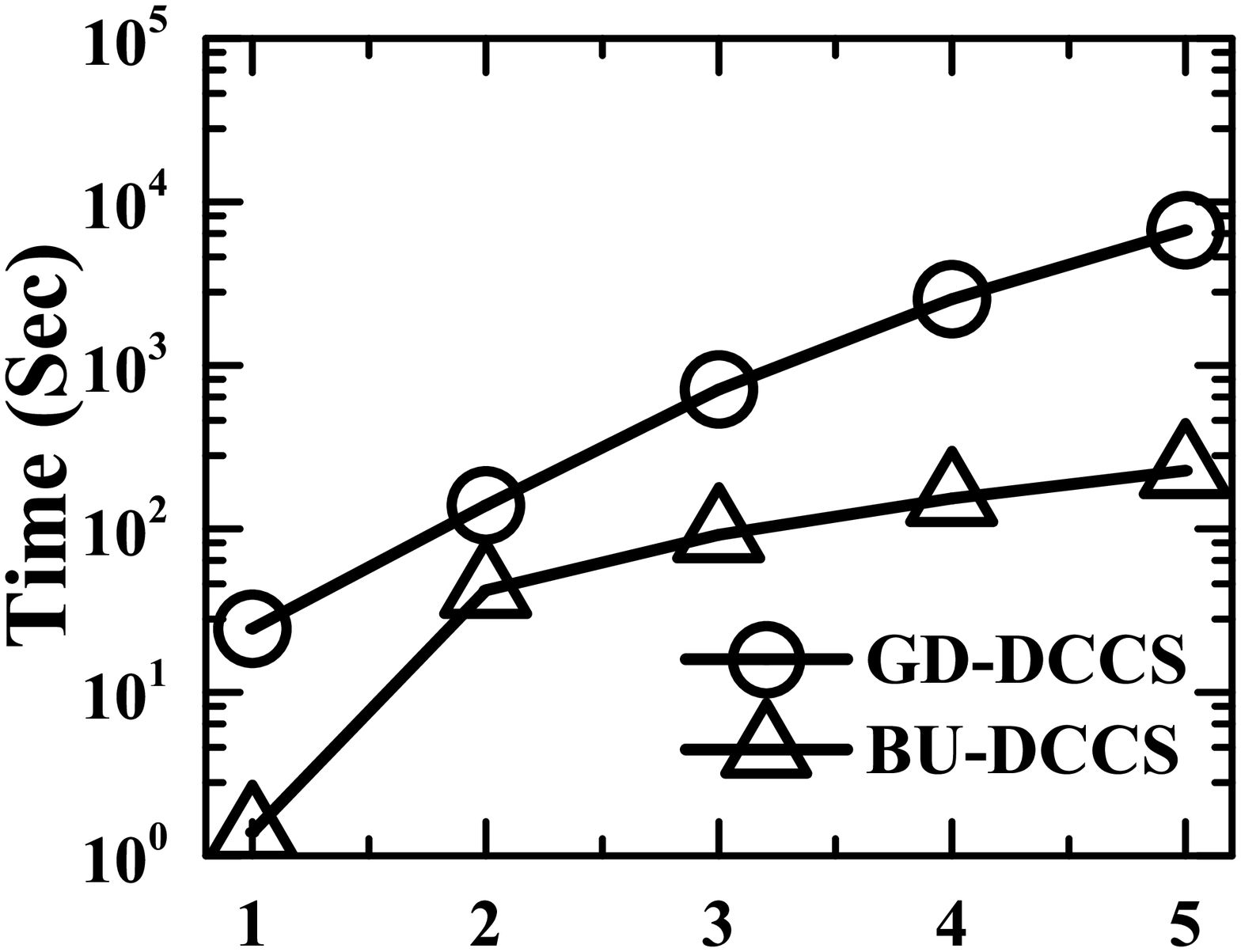}}
        \subfigure[\st{Stack (Vary $s$)}]{\includegraphics[width=0.49\linewidth]{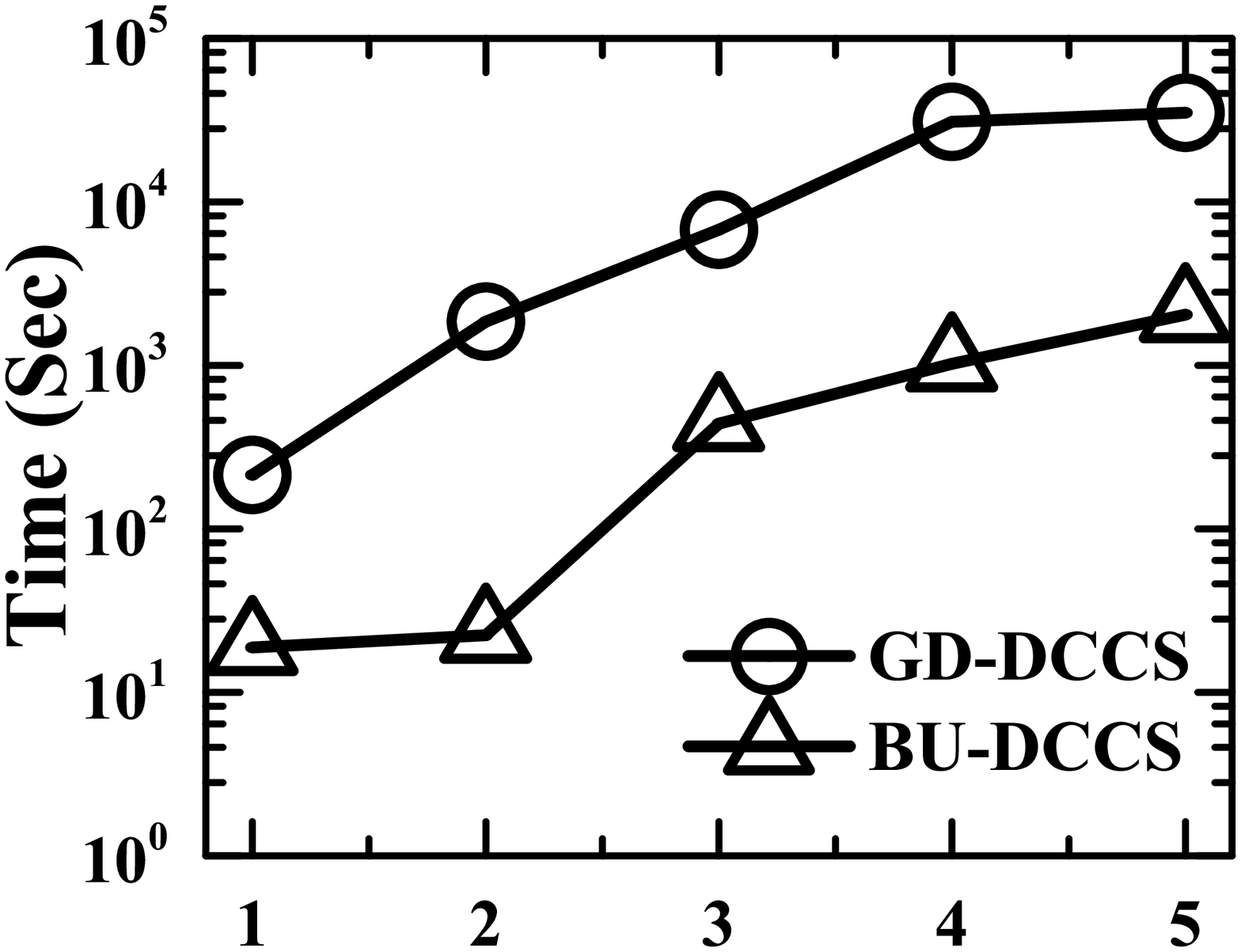}}
        \vspace{-1.5em}
        \caption{Execution Time vs Small $s$.}
        \label{Fig: Exp1T}
        \vspace{-1.2em}
    \end{minipage}
    \begin{minipage}[!t]{0.33\linewidth}
        \subfigure[\st{English (Vary $s$)}]{\includegraphics[width=0.49\linewidth]{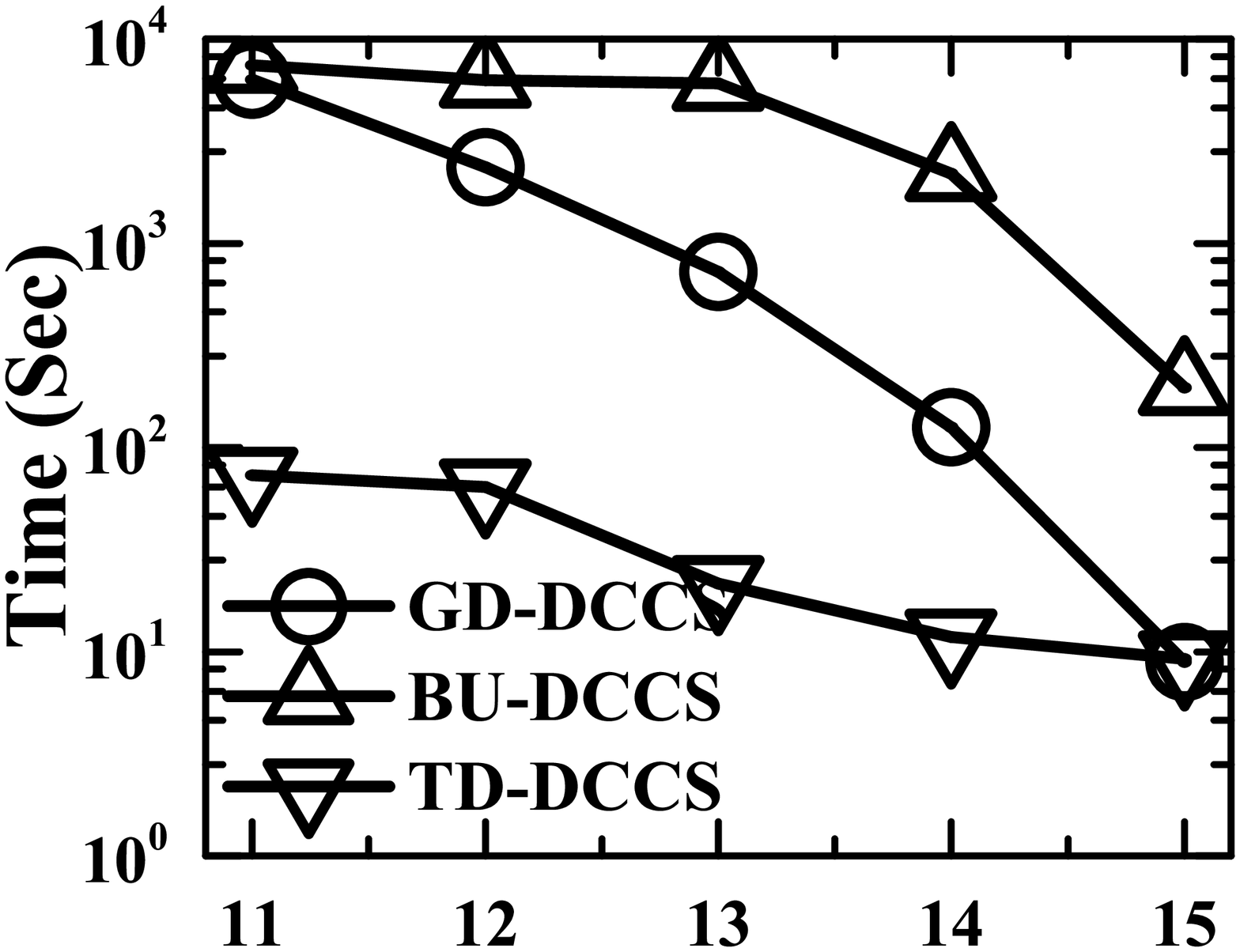}}
        \subfigure[\st{Stack (Vary $s$)}]{\includegraphics[width=0.49\linewidth]{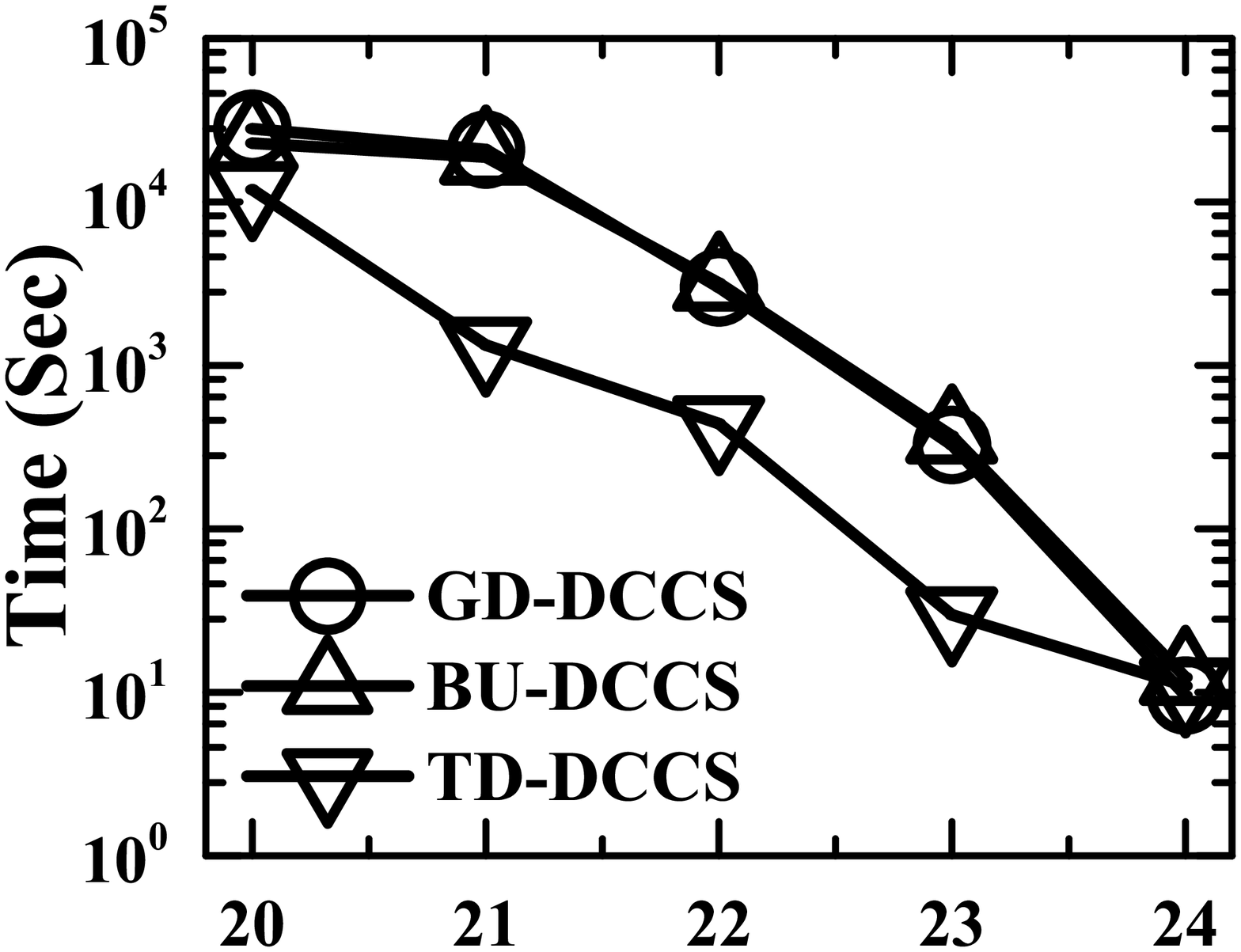}}
        \vspace{-1.5em}
        \caption{Execution Time vs Large $s$.}
        \label{Fig: Exp1TL}
        \vspace{-1.2em}
    \end{minipage}
        \begin{minipage}[!t]{0.33\linewidth}
        \subfigure[\st{English (Vary $s$)}]{\includegraphics[width=0.49\linewidth]{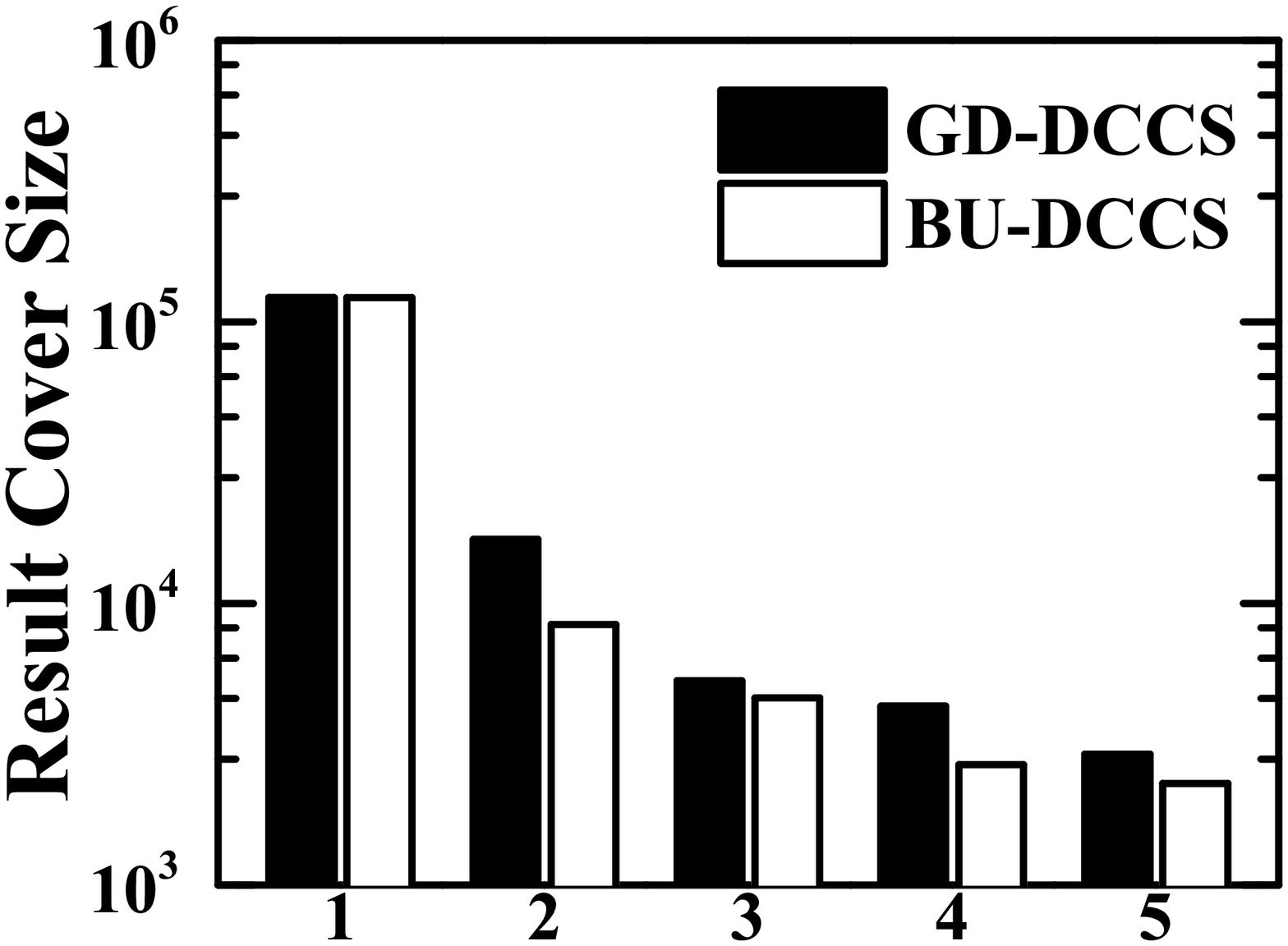}}
        \subfigure[\st{Stack (Vary $s$)}]{\includegraphics[width=0.49\linewidth]{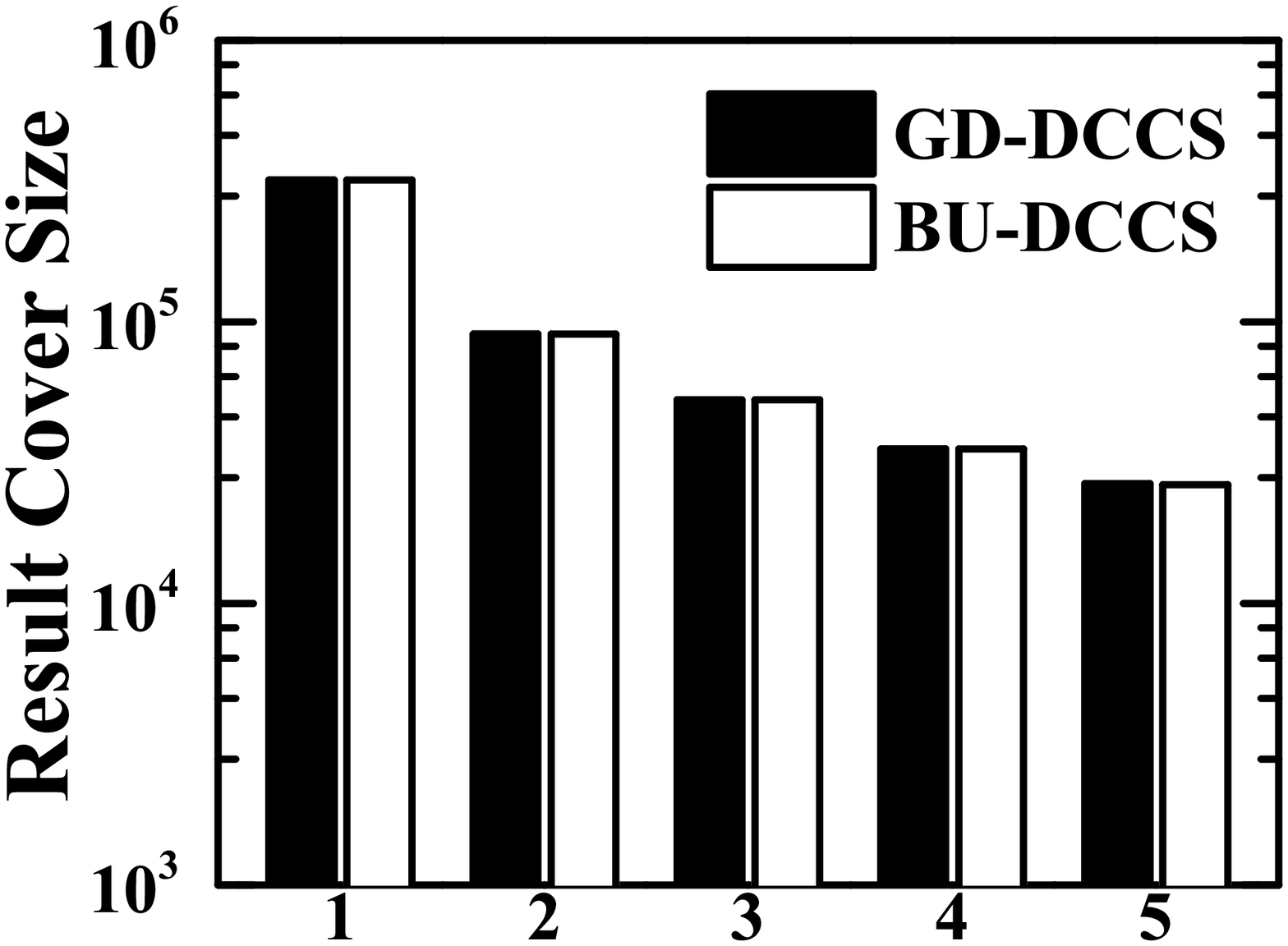}}
        \vspace{-1.5em}
        \caption{Result Cover Size vs Small $s$.}
        \label{Fig: Exp1C}
        \vspace{-1.2em}
    \end{minipage}
\end{figure*}

\begin{figure*}
\addtolength{\subfigcapskip}{-1.2ex}
    \begin{minipage}[!t]{0.33\linewidth}
        \subfigure[\st{English (Vary $s$)}]{\includegraphics[width=0.49\linewidth]{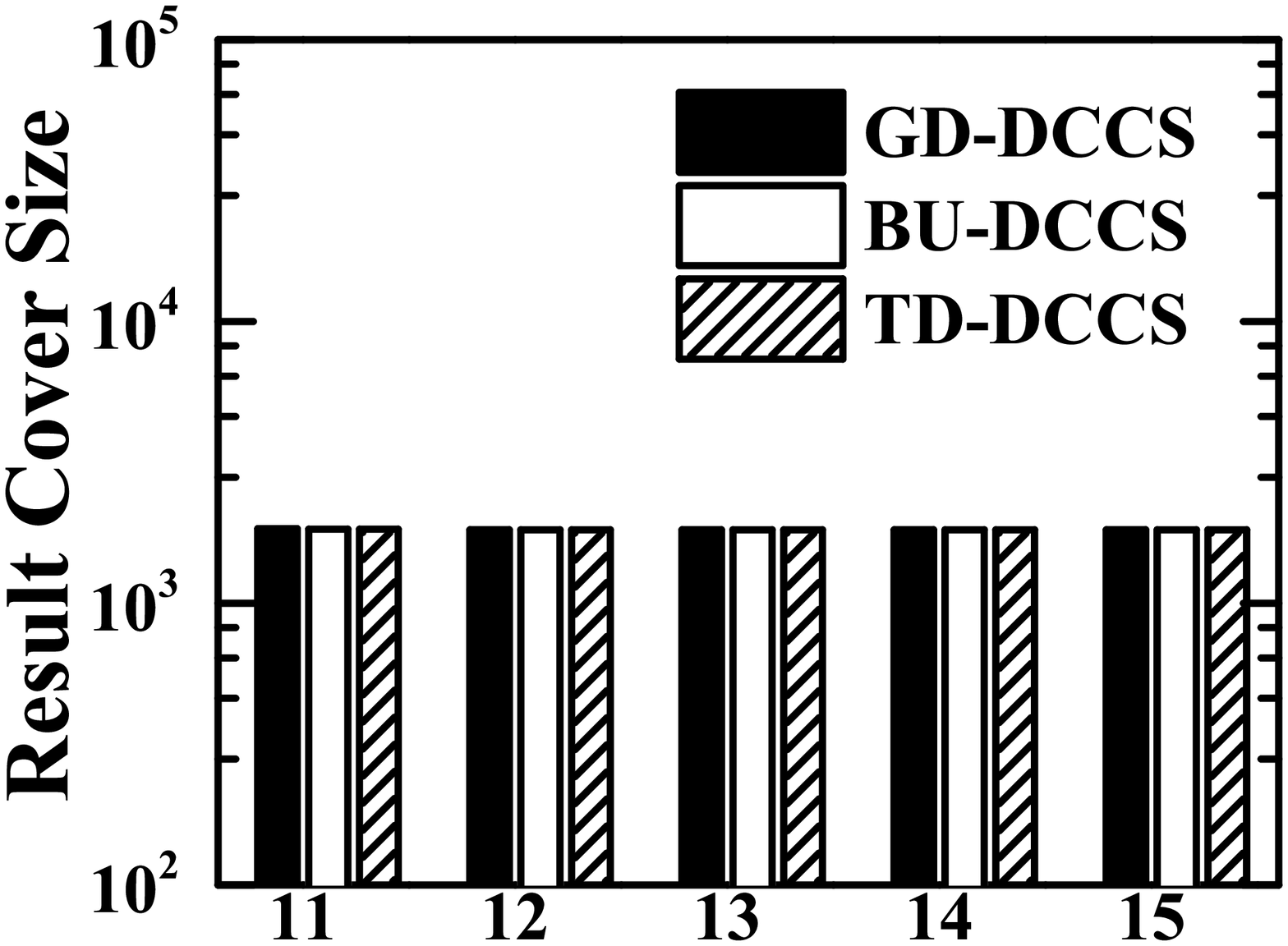}}
        \subfigure[\st{Stack (Vary $s$)}]{\includegraphics[width=0.49\linewidth]{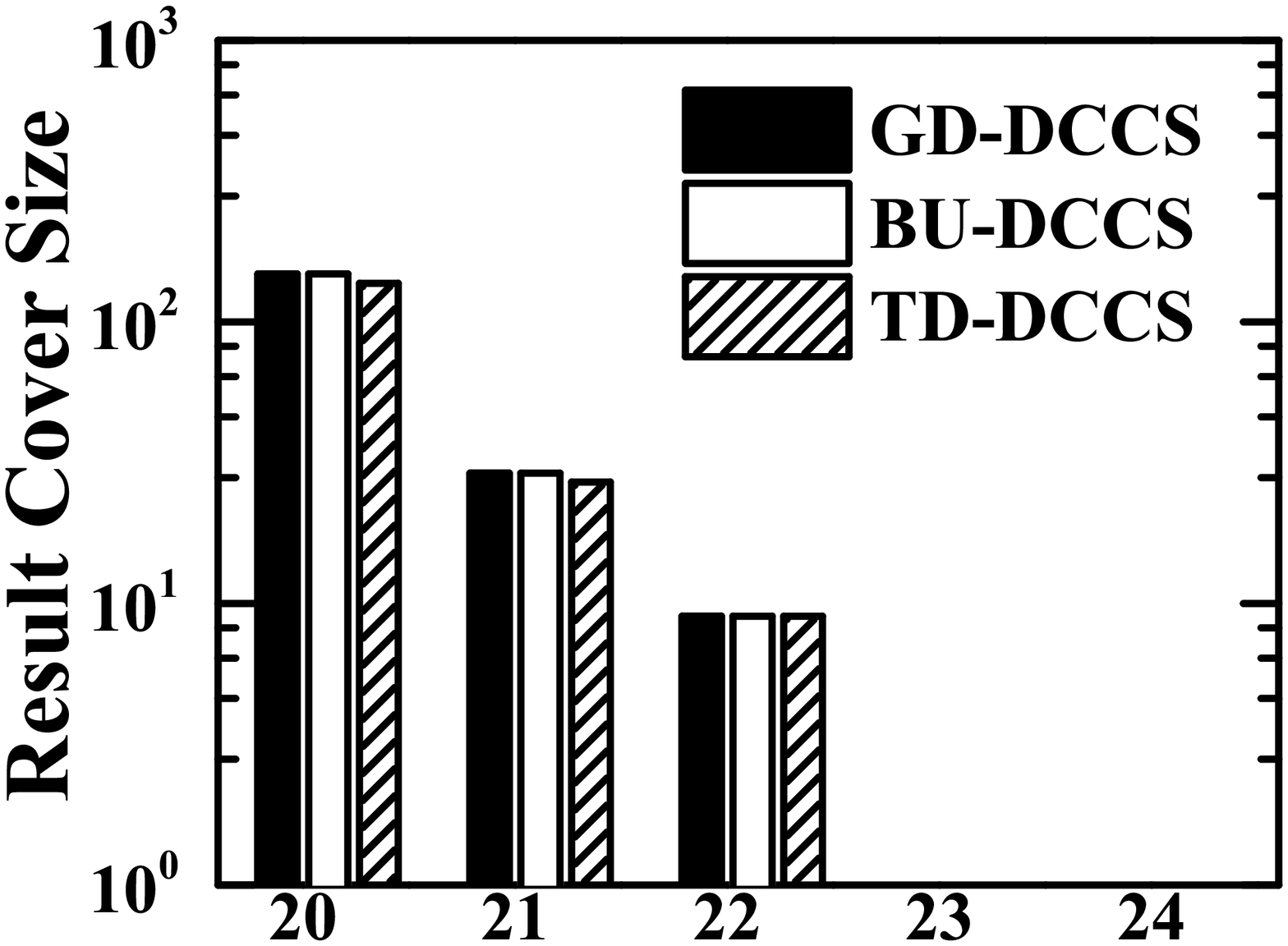}}
        \vspace{-1.5em}
        \caption{Result Cover Size vs Large $s$.}
        \label{Fig: Exp1CL}
    \end{minipage}
        \begin{minipage}[!t]{0.33\linewidth}
        \subfigure[\st{German (Vary $d$)}]{\includegraphics[width=0.49\linewidth]{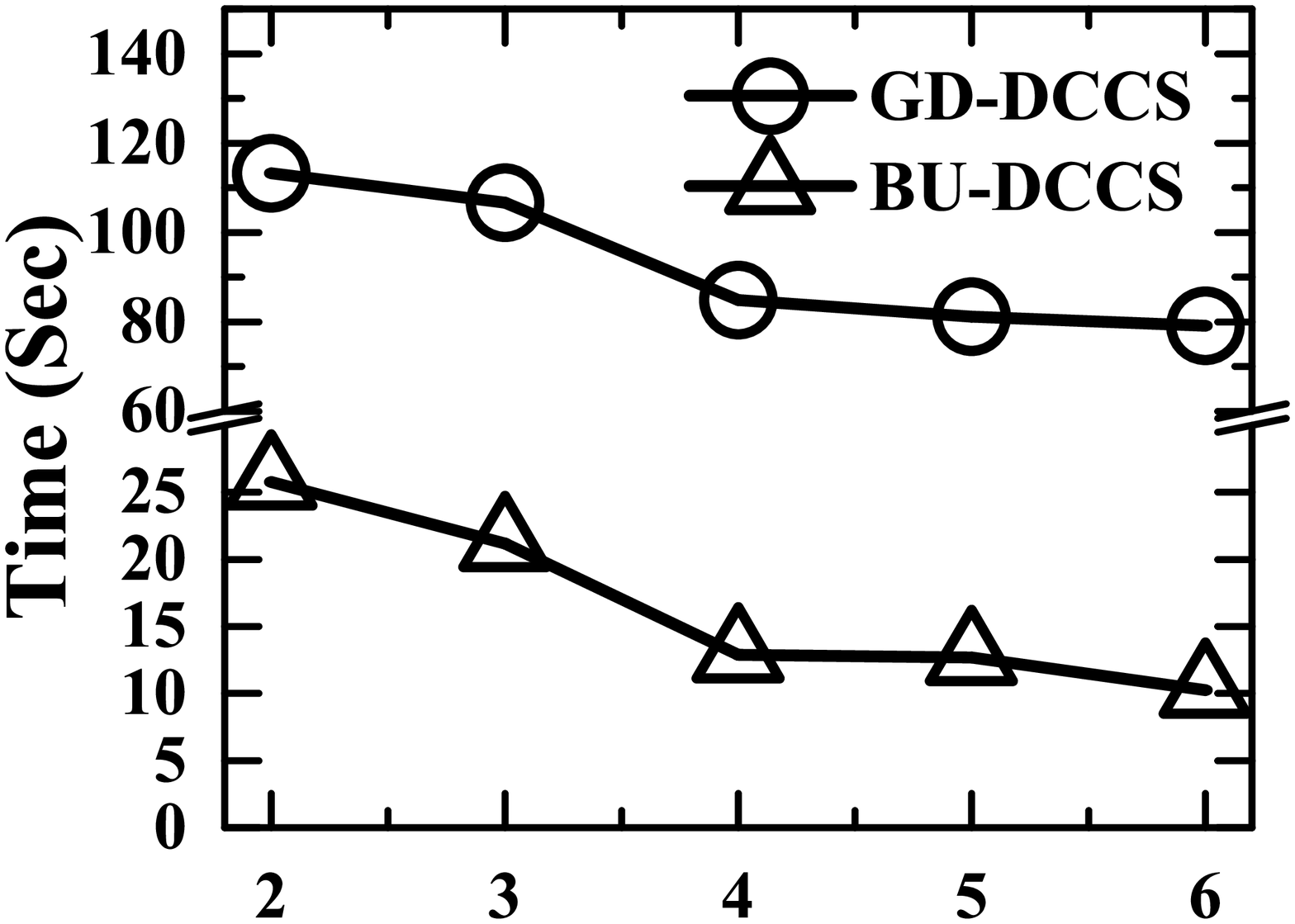}}
        \subfigure[\st{English (Vary $d$)}]{\includegraphics[width=0.49\linewidth]{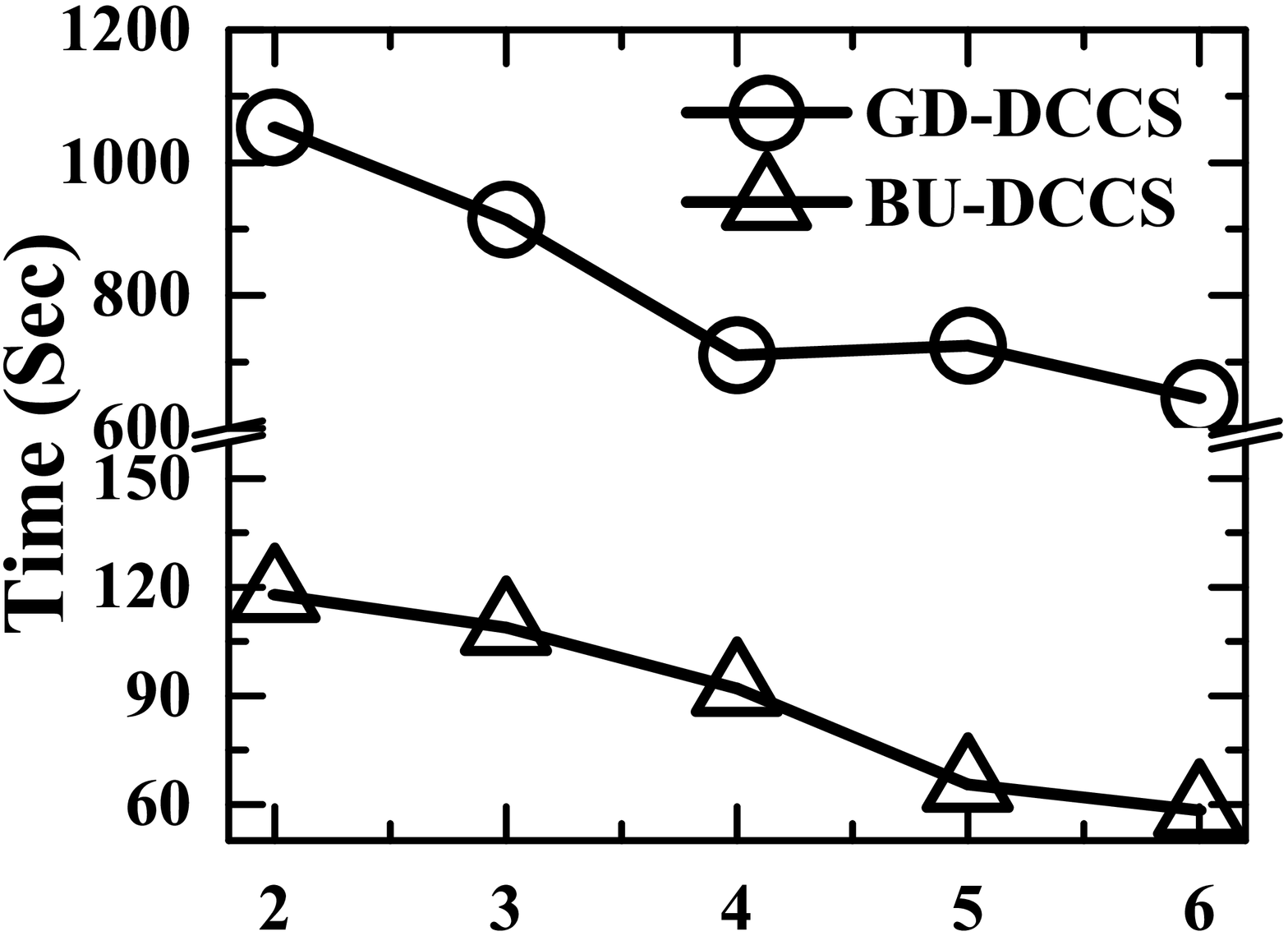}}
        \vspace{-1.5em}
        \caption{Execution Time vs $d$ (Small $s$).}
        \label{Fig: Exp2TB}
    \end{minipage}
    \begin{minipage}[!t]{0.33\linewidth}
        \subfigure[\st{German (Vary $d$)}]{\includegraphics[width=0.49\linewidth]{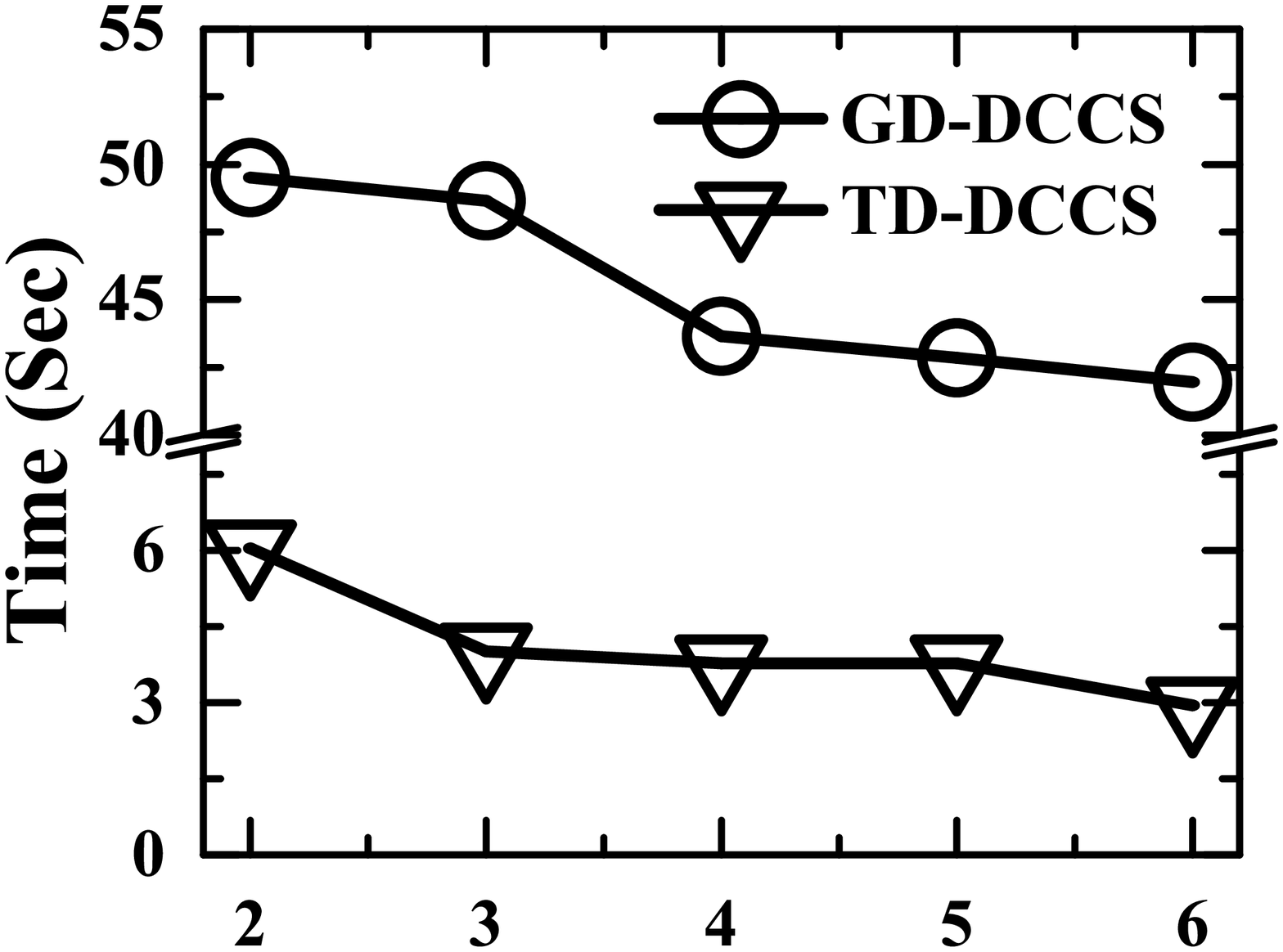}}
        \subfigure[\st{English (Vary $d$)}]{\includegraphics[width=0.49\linewidth]{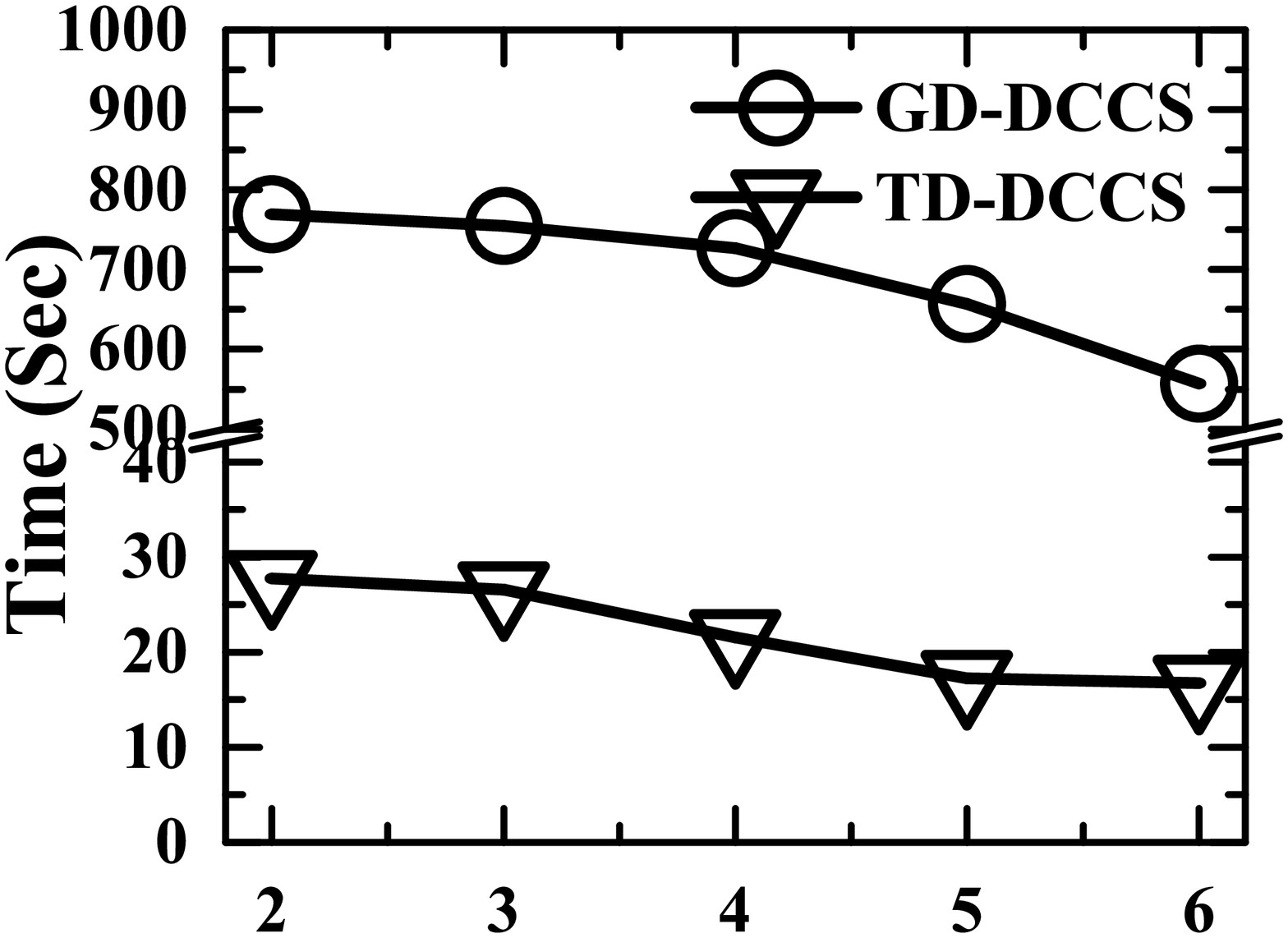}}
        \vspace{-1.5em}
        \caption{Execution Time vs $d$ (Large $s$).}
        \label{Fig: Exp2TT}
    \end{minipage}
    \vspace{-0.5em}
\end{figure*}

This section experimentally evaluates of the proposed algorithms \textsf{GD-DCCS}, \textsf{BU-DCCS} and \textsf{TD-DCCS}. We implemented these algorithms in C++. We did not implement the brute-force exact algorithm mentioned in the beginning of Section~\ref{Sec:GAlgorithm} since it cannot terminate in reasonable time on the graph datasets used in the experiments. For fairness, all the algorithms exploit the preprocessing methods given in Section~\ref{Sec:BApproach-5}. In the experiments, we designate \textsf{GD-DCCS} as the baseline. Every algorithm is evaluated by its execution time (efficiency) and the cover size $|\Cov(\R)|$ of the result $\R$ (accuracy). All the experiments were run on a machine installed with an Intel Core i5-2400 CPU (3.1GHz and 4 cores) and 22GB of RAM, running 64-bit Ubuntu 14.04.


\noindent{\underline{\bf Datasets.}}
We use 6 real-world graph datasets of various types and sizes in the experiments. The statistics of the graph datasets are summarized in Fig.~\ref{Tab: Datasets}. \textit{PPI} is a protein-protein interaction network extracted from the STRING database (\texttt{\small{http://string-db.org}}). It contains 8 layers representing the interactions between proteins detected by different methods. \textit{Author} is a co-authorship network obtained from AMiner (\texttt{\small{http://cn.aminer.org}}). It contains 10 layers representing the collaboration between authors in 10 different years. \textit{PPI} and \textit{Author} are very small datasets. They are used in the comparisons between the notions of $d$-CC and quasi-clique. The other datasets were obtained from KONECT (\texttt{\small{http://konect.uni-koblenz.de}}) and SNAP (\texttt{\small{http://snap.stanford.edu}}), where each layer contains the connections generated in a specific time period. Specifically, in \textit{German} and \textit{English}, each layer consists of the interactions between users in a year; in \textit{Wiki} and \textit{Stack}, each layer contains the connections generated in an hour.

\noindent{\underline{\bf Parameters.}}
We set 5 parameters in the experiments, namely $k$, $d$ and $s$ in the DCCS problem and $p, q \in [0, 1]$. Parameters $p$ and $q$ are varied in the scalability test of the algorithms. Specifically, $p$ and $q$ controls the proportion of vertices and layers extracted from the graphs, respectively. The ranges and the default values of the parameters are shown in Fig.~\ref{Tab: Parameters}. We adopt two configurations for parameter $s$. When testing for small $s$, we select $s$ from $\{1,2,3,4,5\}$; when testing for large $s$, we  select $s$ from $\{l(\G)-4, l(\G)-3, l(\G)-2, l(\G)-1, l(\G)\}$. Without otherwise stated, when varying a parameter, other parameters are set to their default values.

\noindent{\underline{\bf Execution Time w.r.t.~Parameter \textit{s}.}}
We evaluate the execution time of the algorithms w.r.t.~$s$. First, we experiment for small $s$. Since the \textsf{TD-DCCS} algorithm is not applicable when $s < l(\G)/2$, we only test the other three algorithms for small $s$. Fig.~\ref{Fig: Exp1T} shows the execution time of the algorithms on the datasets \textit{English} and \textit{Stack}. We have two observations: 1) The execution time of all the algorithms substantially increases with $s$. This is simply because the search space of the \textsc{DCCS} problem fast grows with $s$ when $s < l(\G)/2$. 2) The \textsf{BU-DCCS} algorithm outperforms \textsf{GD-DCCS} by $1$--$2$ orders of magnitude. For example, when $s = 4$, \textsf{BU-DCCS} is 39X and 30X faster than \textsf{GD-DCCS} on \textit{English} and \textit{Stack}, respectively. The main reason is that the pruning techniques adopted by \textsf{BU-DCCS} reduce the search space of the DCCS by 80\%--90\%.

We also examine the algorithms for large $s$ and show results in Fig.~\ref{Fig: Exp1TL}. At this time, we also test the \textsf{TD-DCCS} algorithm. We have the the following observations: 1) The execution time of all the algorithms decreases when $s$ grows. This is because the search space of the \textsc{DCCS} problem decreases with $s$ when $s \geq l(\G)/2$. 2) \textsf{BU-DCCS} is not efficient for large $s$. Sometimes, it is even worse than \textsf{GD-DCCS}. When $s$ is large, the sizes of the $d$-CCs significantly decreases. \textsf{BU-DCCS} has to search down deep the search tree until the pruning techniques start to take effects. In some cases, \textsf{BU-DCCS} searches even more $d$-CCs than \textsf{GD-DCCS}. 3) \textsf{TD-DCCS} runs much faster than all the others. For example, when $s = 13$, \textsf{TD-DCCS} is 50X faster than \textsf{GD-DCCS} on \textit{English}. This is because $d$-CCs are generated in a top-down manner in \textsf{TD-DCCS}, so the number of $d$-CCs searched by \textsf{TD-DCCS} must be less than \textsf{BU-DCCS}. Moreover, many unpromising candidates $d$-CCs are pruned earlier in \textsf{TD-DCCS}.


\noindent{\underline{\bf Cover Size of Result w.r.t.~Parameter \textit{s}.}}
We evaluate the cover size $|\Cov(\R)|$ of result $\R$ w.r.t.~parameter $s$. Fig.~\ref{Fig: Exp1C} and Fig.~\ref{Fig: Exp1CL} show the experimental results for small $s$ and large $s$, respectively. We have two observations: 1) For all the algorithms, $|\Cov(\R)|$ decreases with $s$. This is because while $s$ increases, the size of $d$-CCs never increases due to Property~\ref{Lem:PHierarchy}, so $\R$ cannot cover more vertices. 2) In most cases, the results of the algorithms cover similar amount of vertices for either small $s$ or large $s$. Sometimes, the result of \textsf{GD-DCCS} covers slightly more vertices than the results of \textsf{BU-DCCS} and \textsf{TD-DCCS}.  This is because \textsf{GD-DCCS} is $(1 - 1/e)$-approximate; while \textsf{BU-DCCS} and \textsf{TD-DCCS} are $1/4$-approximate. It verifies that the practical approximation quality of \textsf{BU-DCCS} and \textsf{TD-DCCS} is close to \textsf{GD-DCCS}.

\begin{figure*}[!t]
\addtolength{\subfigcapskip}{-1.2ex}
    \begin{minipage}[!t]{0.33\linewidth}
        \subfigure[\st{German (Vary $d$)}]{\includegraphics[width=0.49\linewidth]{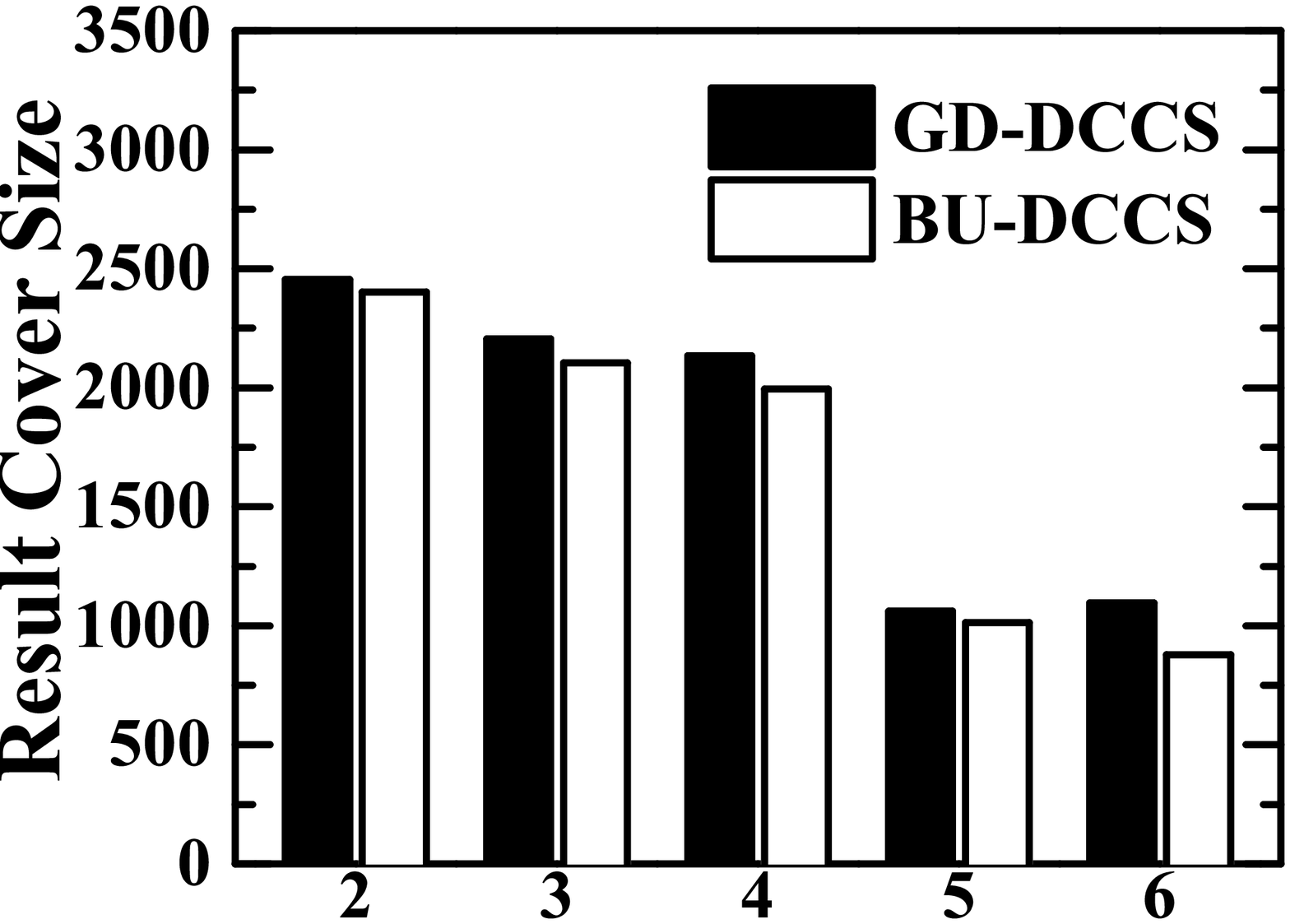}}
        \subfigure[\st{English (Vary $d$)}]{\includegraphics[width=0.49\linewidth]{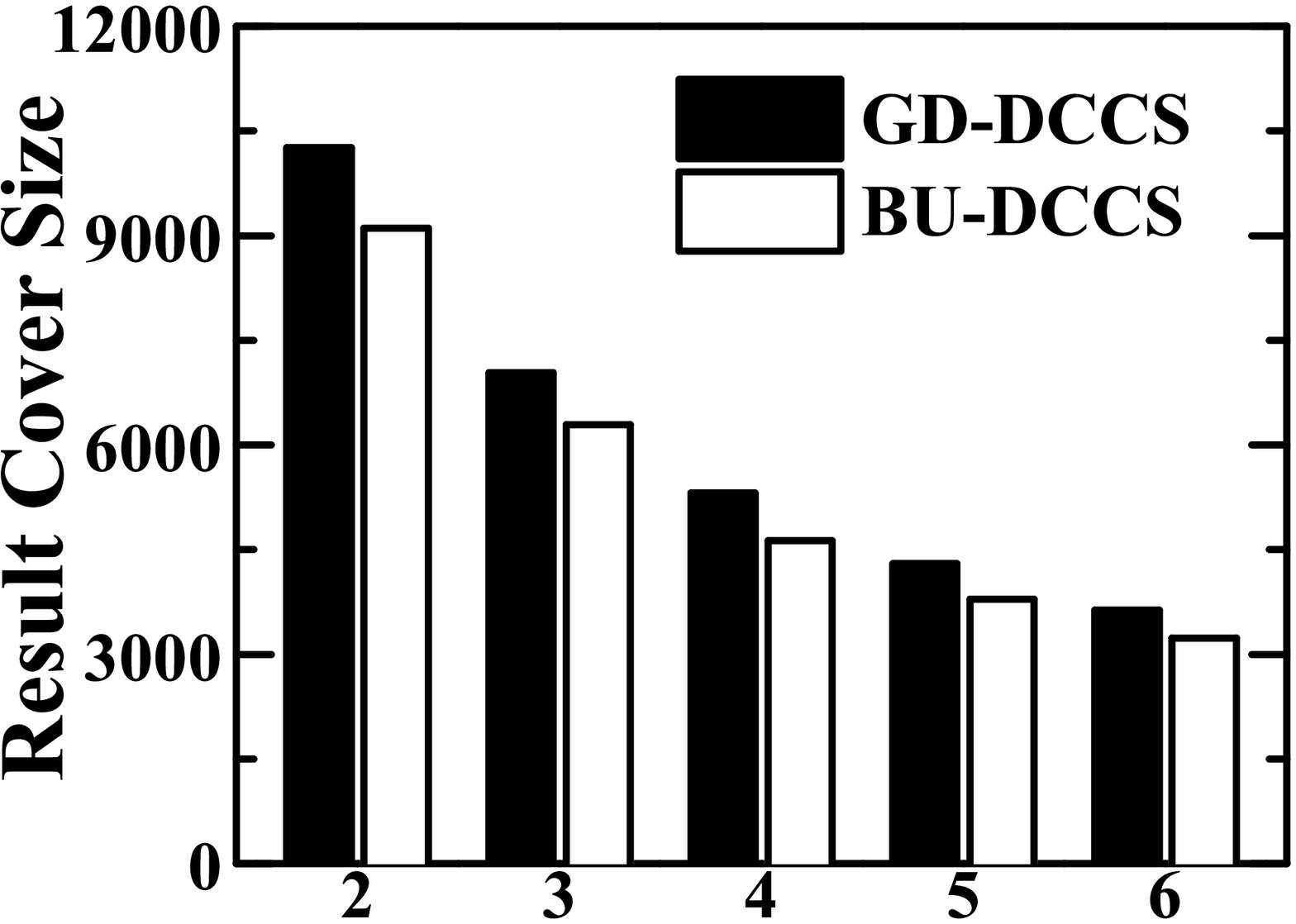}}
        \vspace{-1.5em}
        \caption{Result Cover Size vs~$d$ (Small $s$).}
        \label{Fig: Exp2CB}
    \end{minipage}
    \vspace{-1.2em}
        \begin{minipage}[!t]{0.33\linewidth}
        \subfigure[\st{German (Vary $d$)}]{\includegraphics[width=0.49\linewidth]{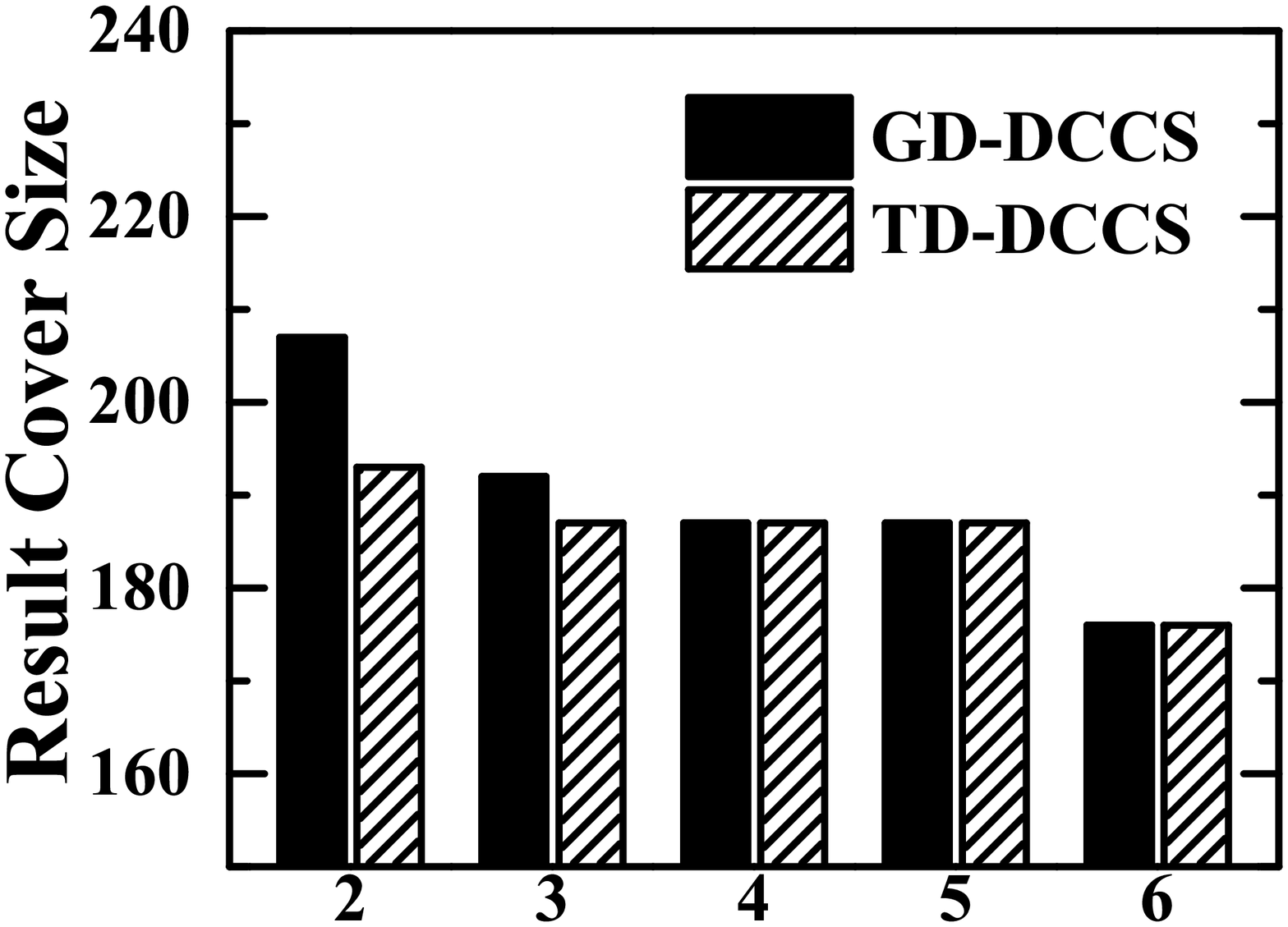}}
        \subfigure[\st{English (Vary $d$)}]{\includegraphics[width=0.49\linewidth]{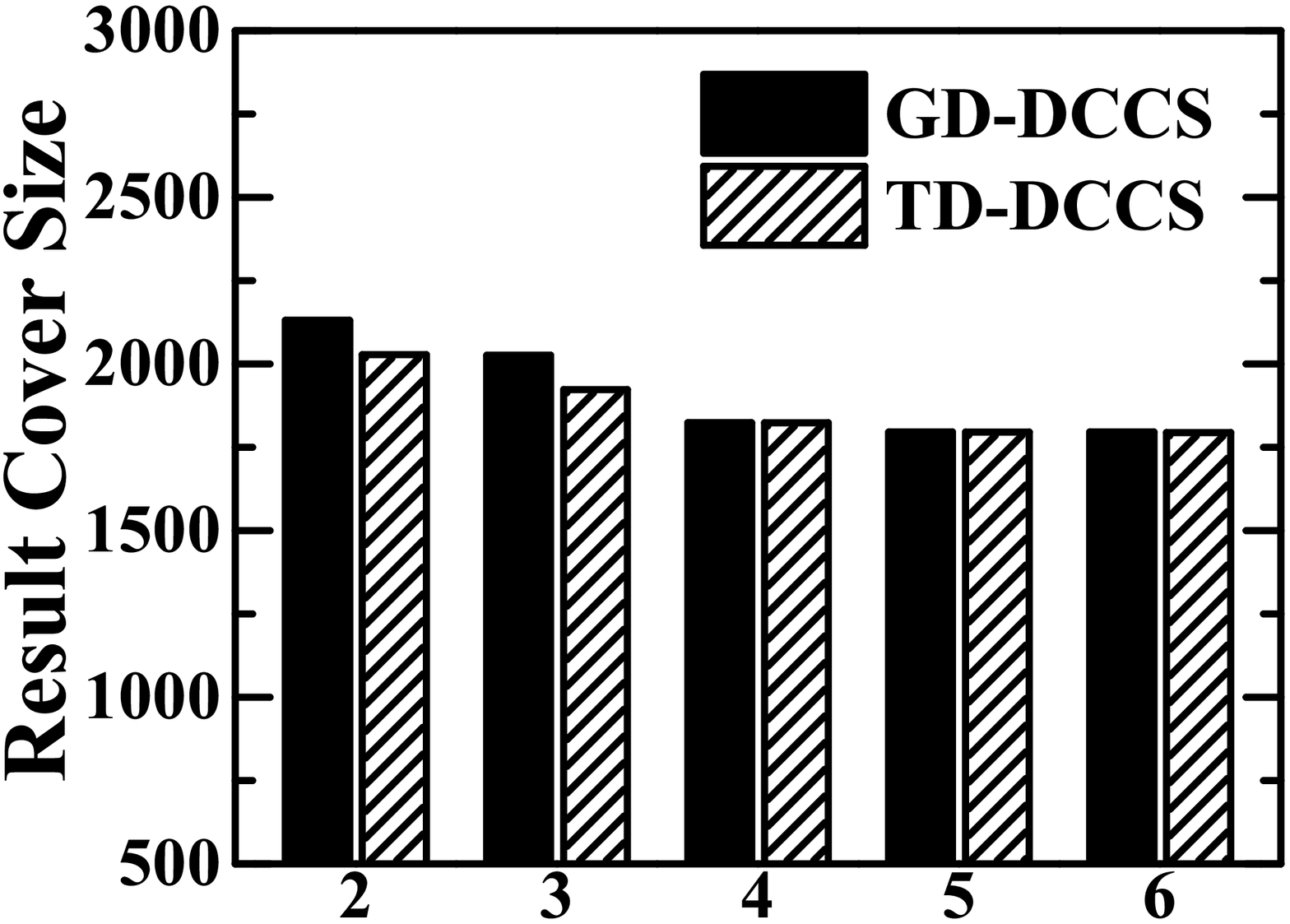}}
        \vspace{-1.5em}
        \caption{Result Cover Size vs~$d$ (Large $s$).}
        \label{Fig: Exp2CT}
    \end{minipage}
    \begin{minipage}[!t]{0.33\linewidth}
        \subfigure[\st{Wiki (Vary $k$)}]{\includegraphics[width=0.49\linewidth]{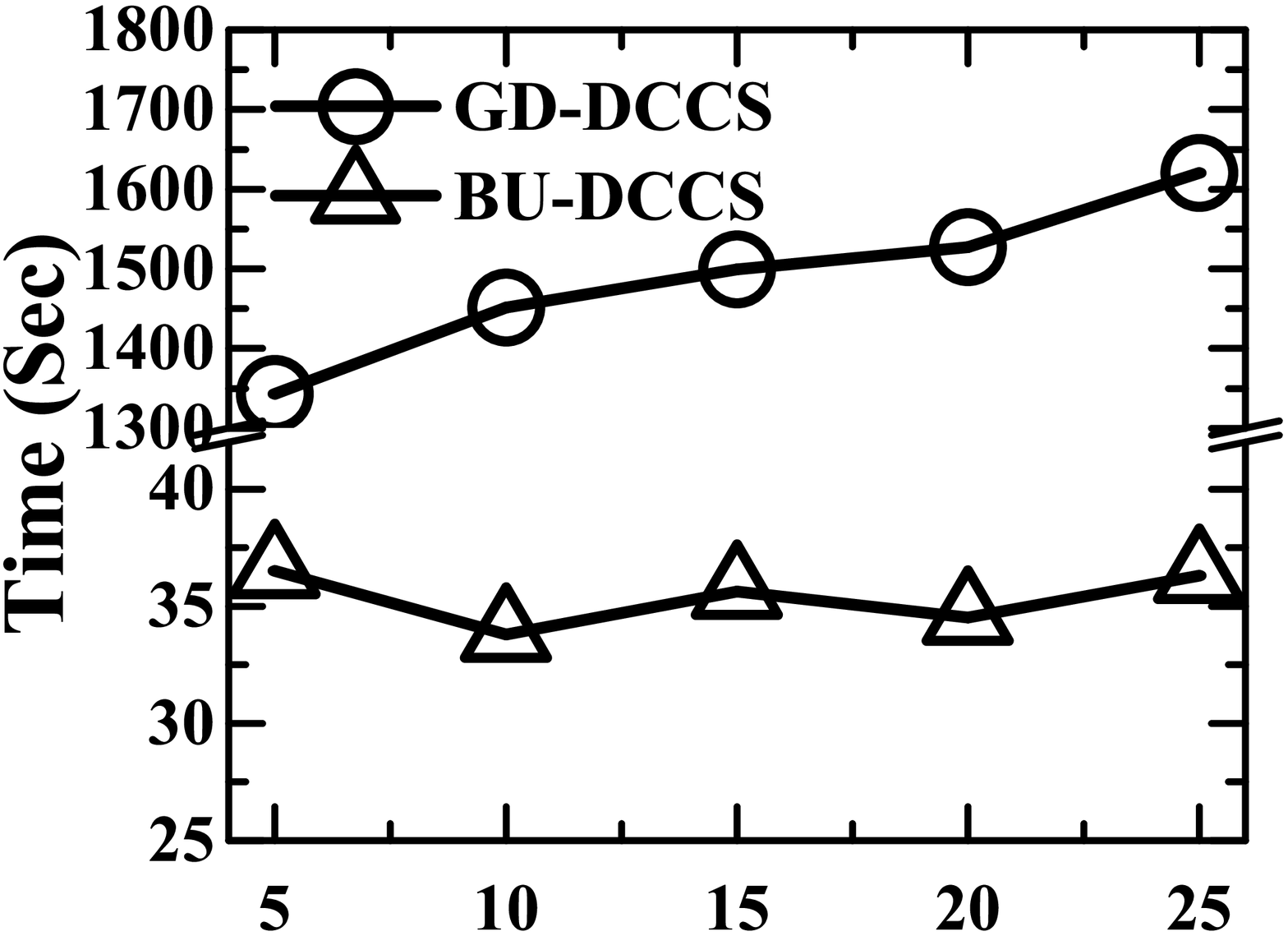}}
        \subfigure[\st{English (Vary $k$)}]{\includegraphics[width=0.49\linewidth]{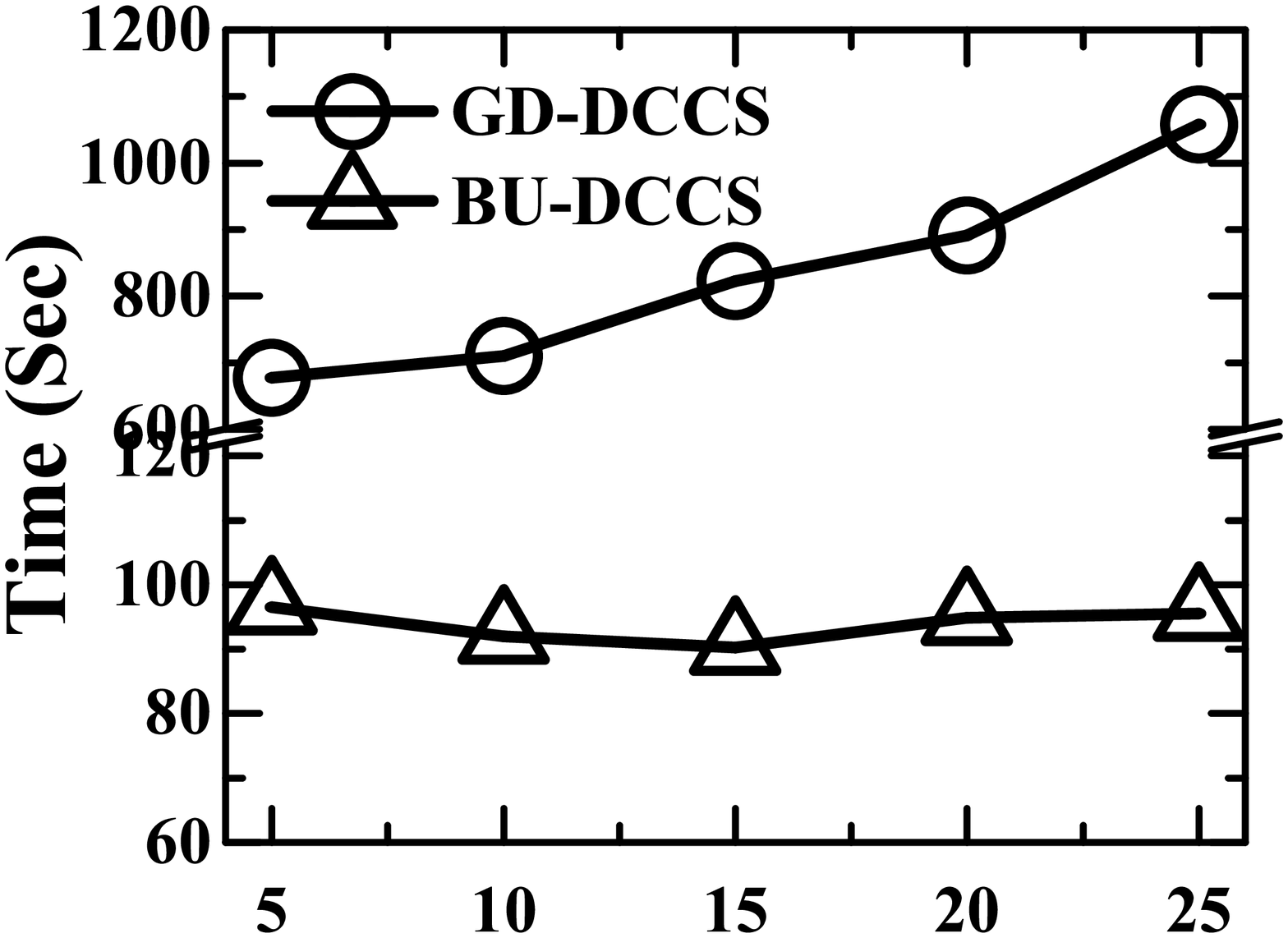}}
        \vspace{-1.5em}
        \caption{Execution Time vs $k$ (Small $s$).}
        \label{Fig: Exp3TB}
    \end{minipage}
\end{figure*}

\begin{figure*}[!t]
\addtolength{\subfigcapskip}{-1.2ex}
    \begin{minipage}[!t]{0.33\linewidth}
        \subfigure[\st{Wiki (Vary $k$)}]{\includegraphics[width=0.49\linewidth]{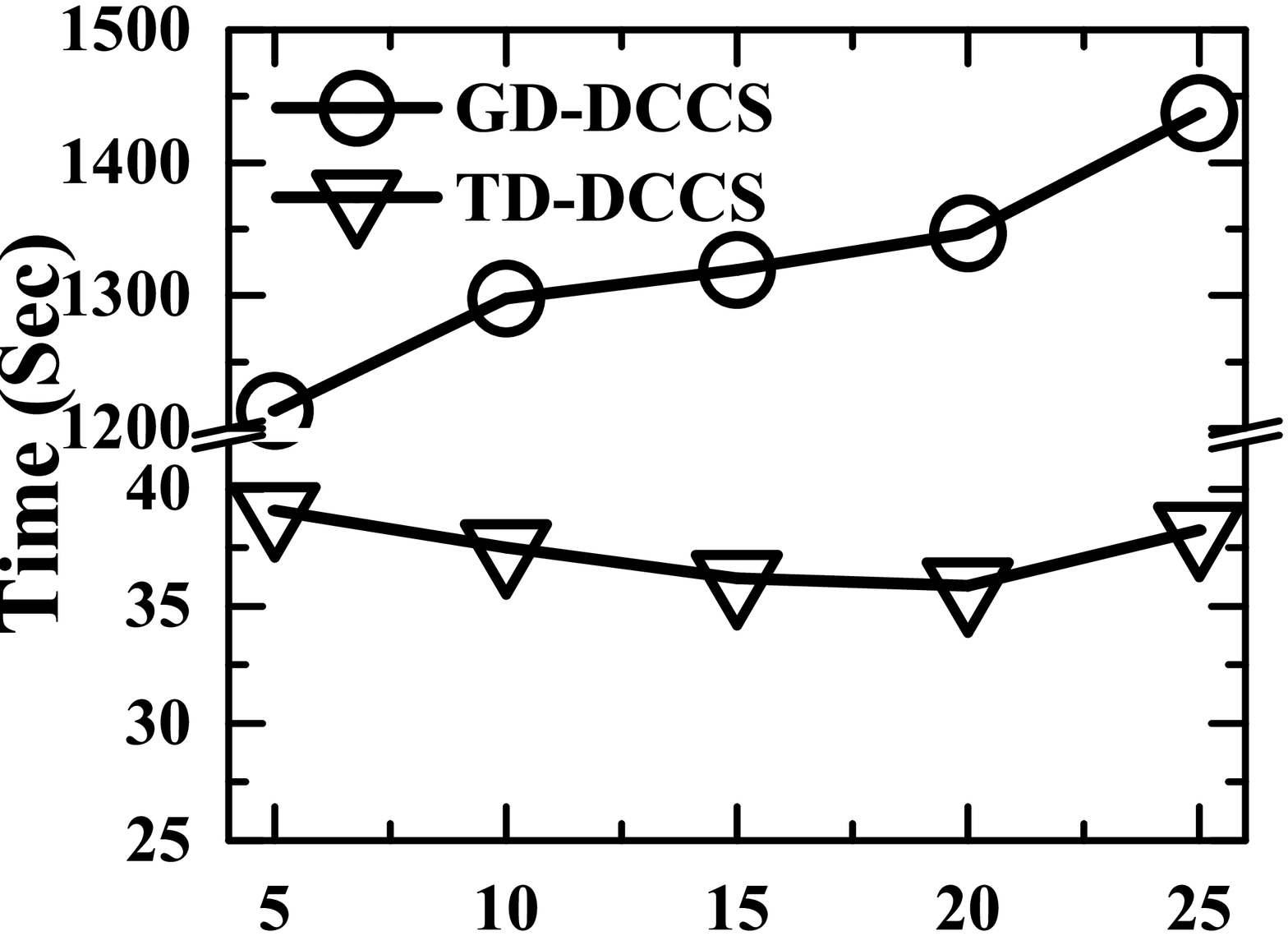}}
        \subfigure[\st{English (Vary $k$)}]{\includegraphics[width=0.49\linewidth]{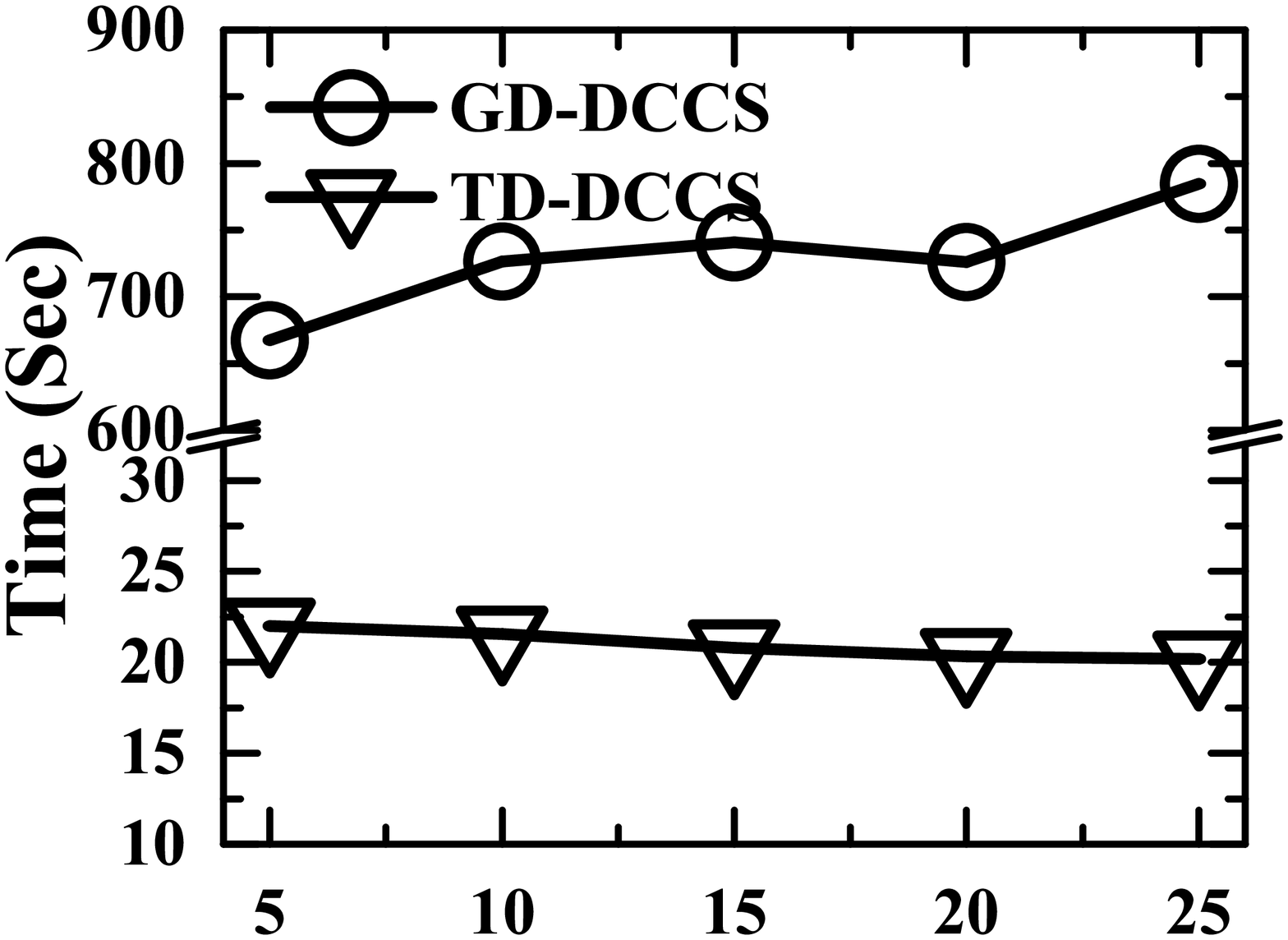}}
        \vspace{-1.5em}
        \caption{Execution Time vs $k$ (Large $s$).}
        \label{Fig: Exp3TT}
    \end{minipage}
    \vspace{-1.2em}
        \begin{minipage}[!t]{0.33\linewidth}
        \subfigure[\st{Wiki (Vary $k$)}]{\includegraphics[width=0.49\linewidth]{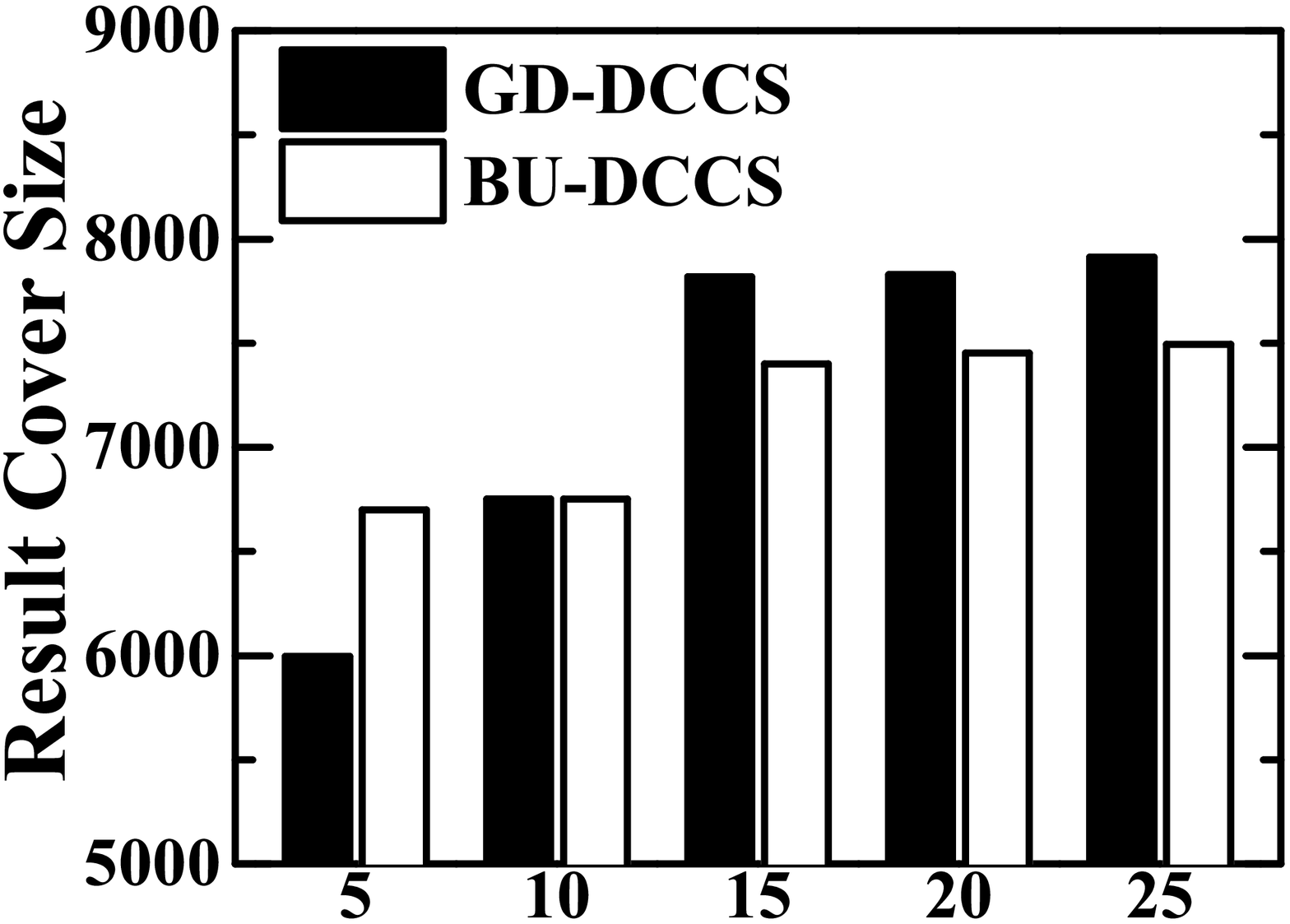}}
        \subfigure[\st{English (Vary $k$)}]{\includegraphics[width=0.49\linewidth]{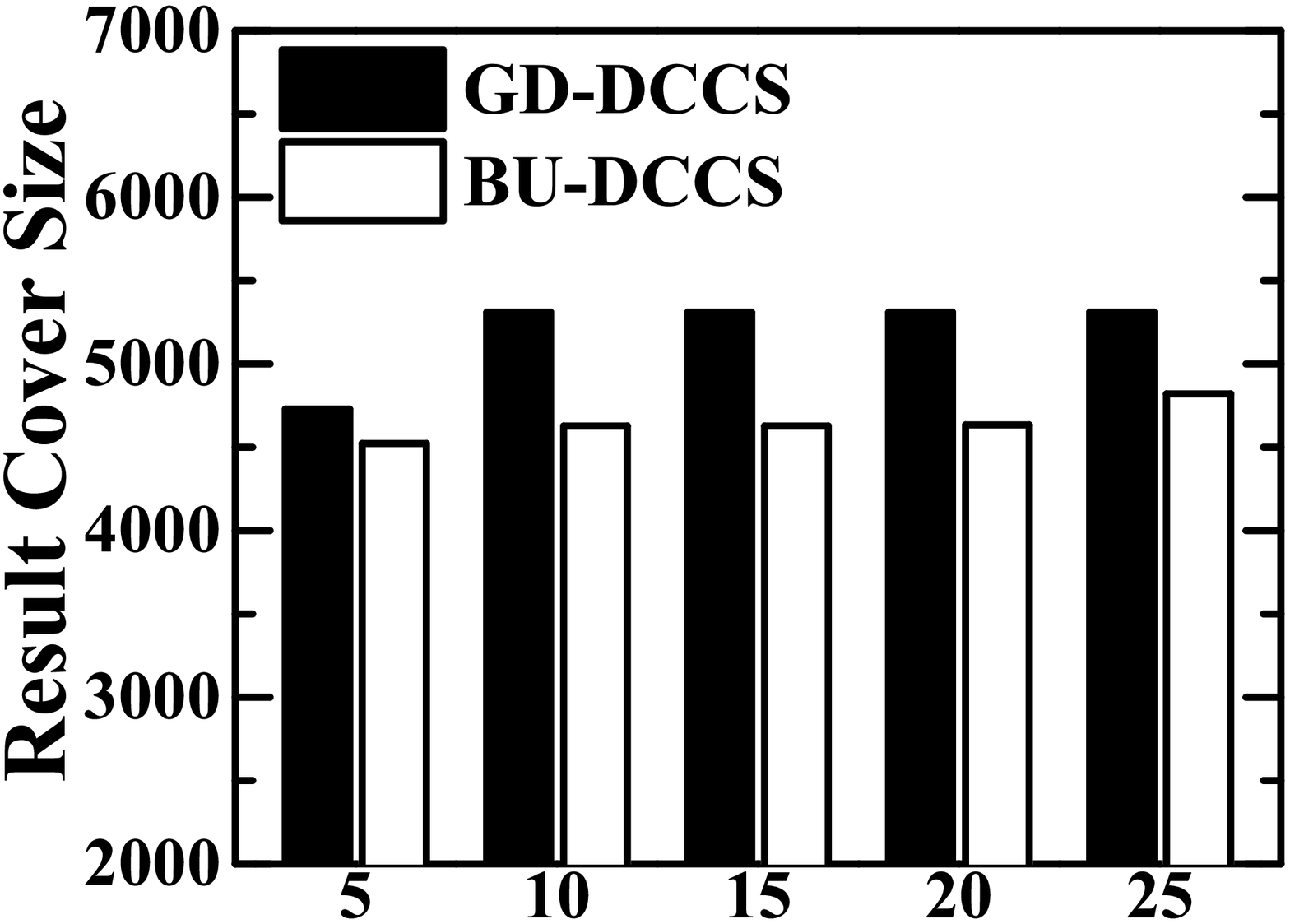}}
        \vspace{-1.5em}
        \caption{Result Cover Size vs $k$ (Small $s$).}
        \label{Fig: Exp3CB}
    \end{minipage}
    \begin{minipage}[!t]{0.33\linewidth}
        \subfigure[\st{Wiki (Vary $k$)}]{\includegraphics[width=0.49\linewidth]{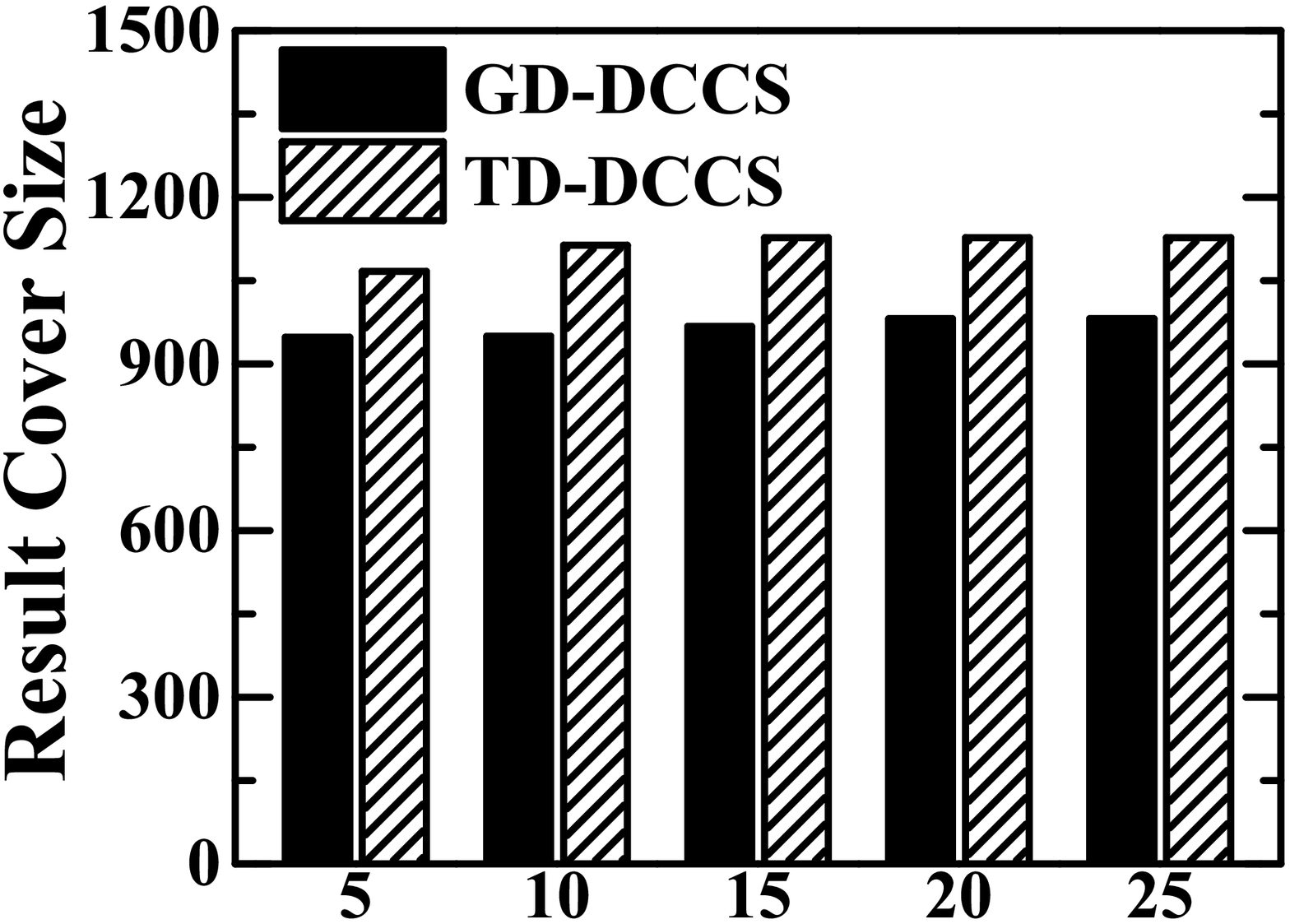}}
        \subfigure[\st{English (Vary $k$)}]{\includegraphics[width=0.49\linewidth]{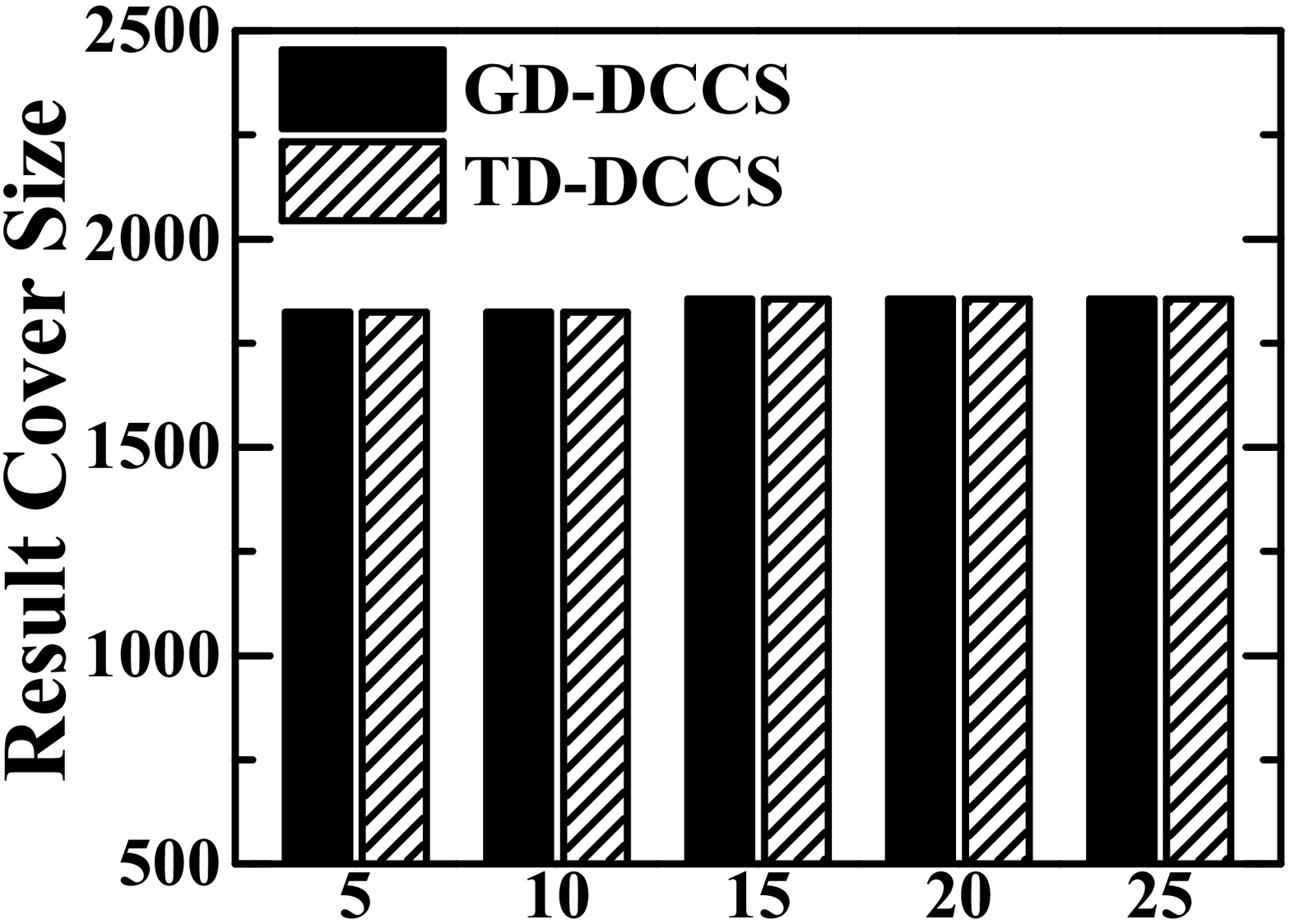}}
        \vspace{-1.5em}
        \caption{Result Cover Size vs $k$ (Large $s$).}
        \label{Fig: Exp3CT}
    \end{minipage}
\end{figure*}

\begin{figure*}[!t]
\addtolength{\subfigcapskip}{-1.2ex}
    \begin{minipage}[!t]{0.33\linewidth}
        \subfigure[\st{Stack (Vary $p$)}]{\includegraphics[width=0.49\linewidth]{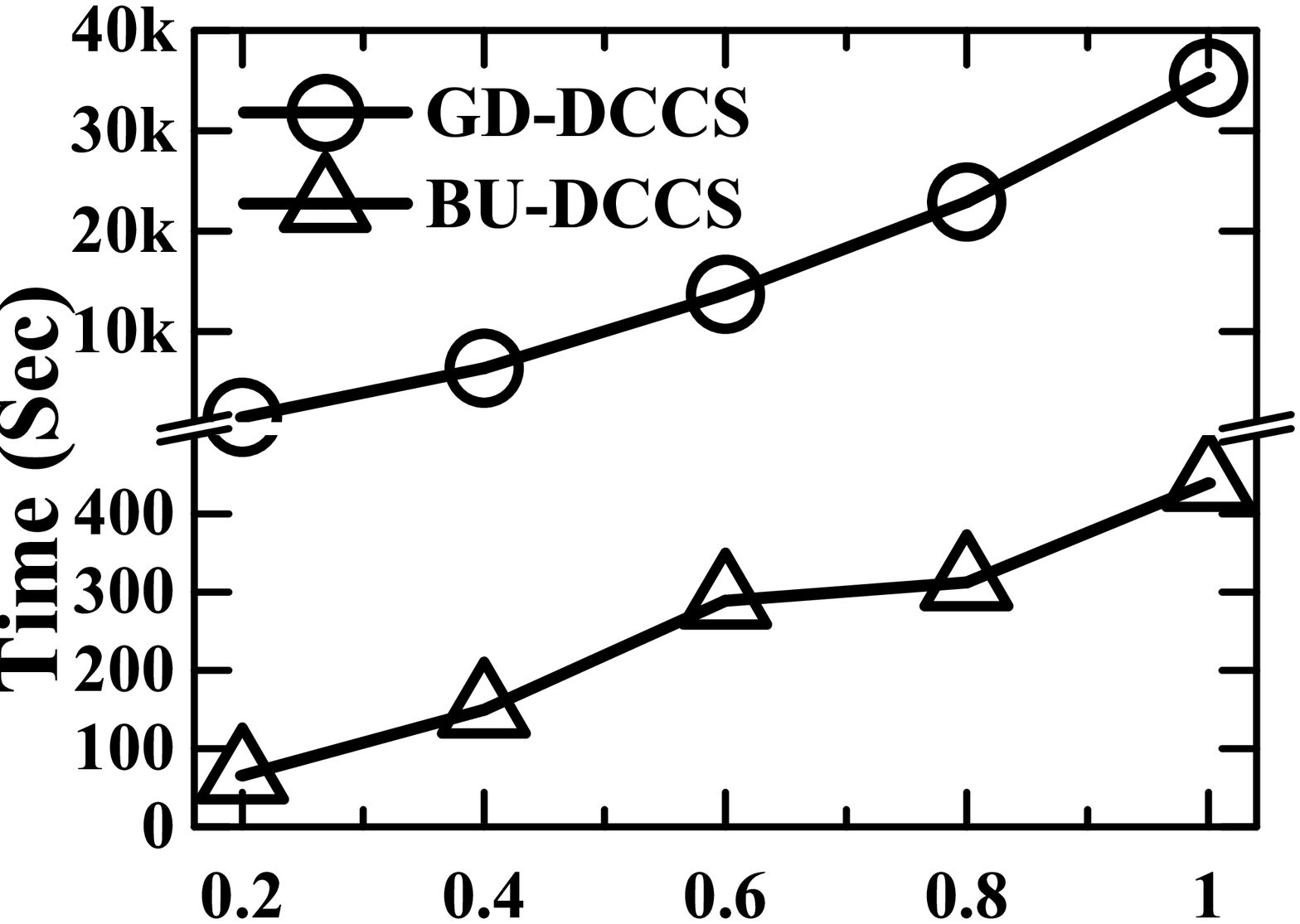}}
        \subfigure[\st{Stack (Vary $p$)}]{\includegraphics[width=0.49\linewidth]{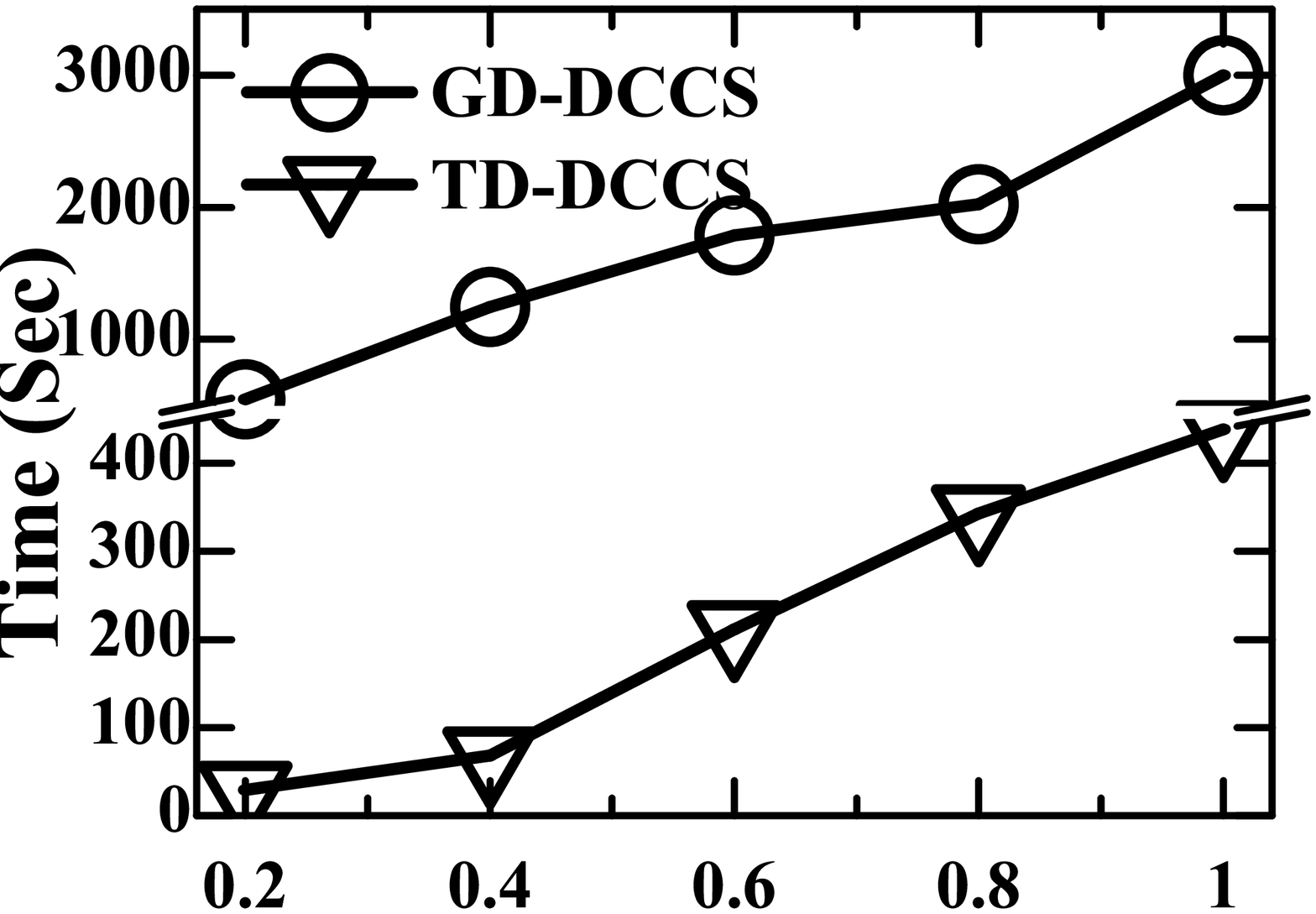}}
        \vspace{-1.5em}
        \caption{Execution Time vs $p$.}
        \label{Fig: Exp4V}
    \end{minipage}
    \vspace{-0.5em}
        \begin{minipage}[!t]{0.33\linewidth}
        \subfigure[\st{Stack (Vary $q$)}]{\includegraphics[width=0.49\linewidth]{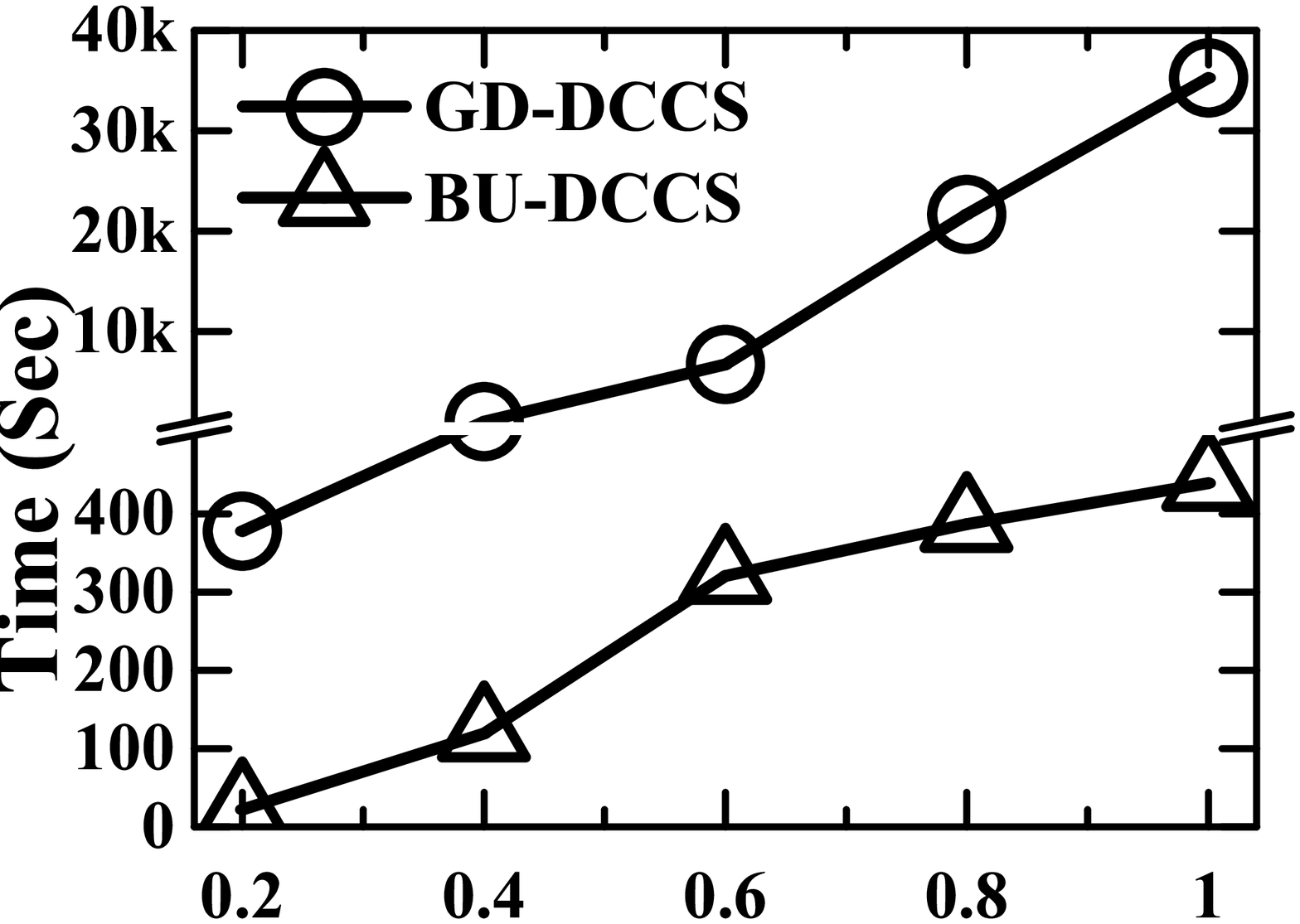}}
        \subfigure[\st{Stack (Vary $q$)}]{\includegraphics[width=0.49\linewidth]{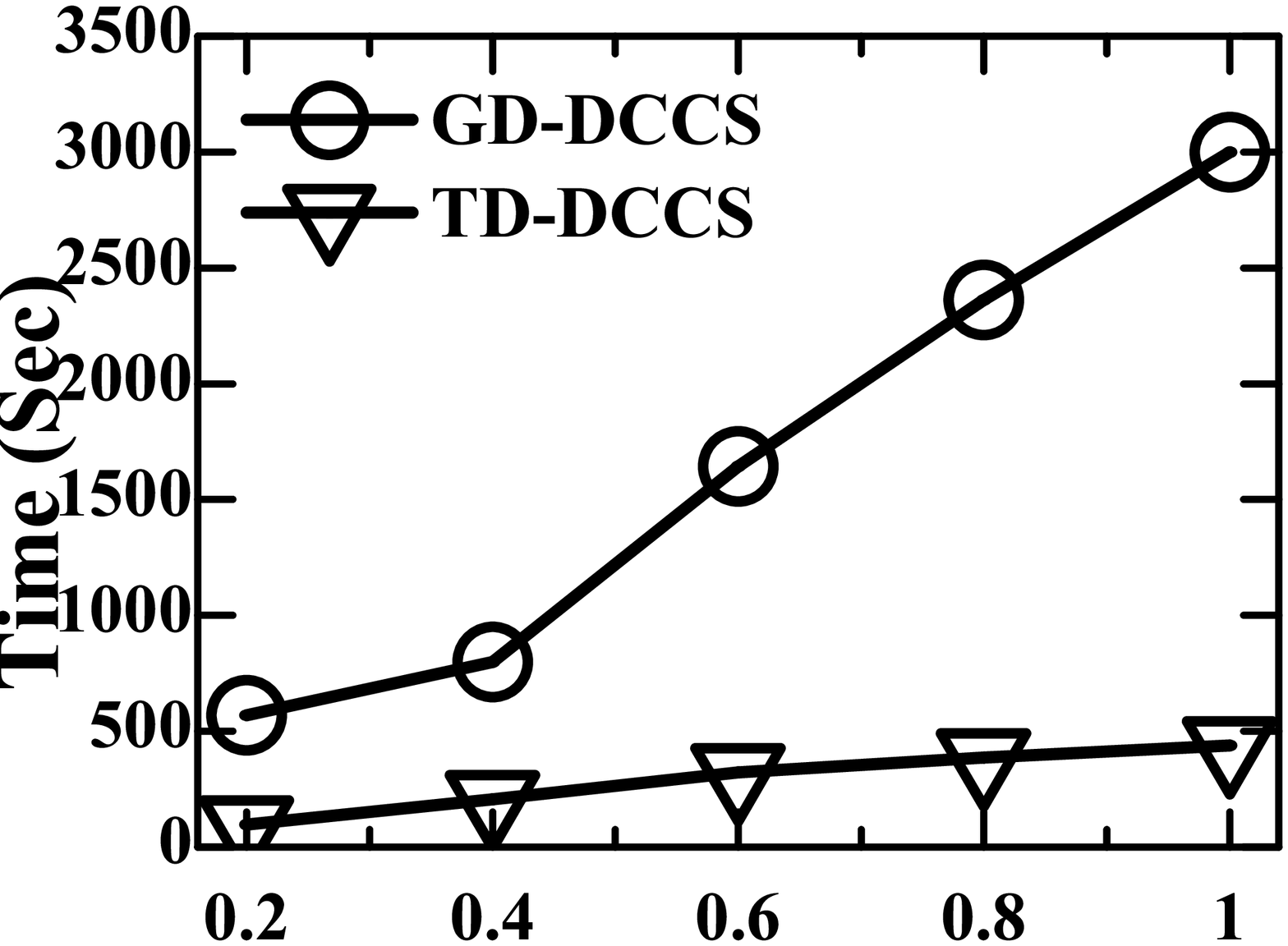}}
        \vspace{-1.5em}
        \caption{Execution Time vs $q$.}
        \label{Fig: Exp4L}
    \end{minipage}
     \vspace{-1.5em}
        \begin{minipage}[!t]{0.33\linewidth}
        \subfigure[\st{Small $s$}]{\includegraphics[width=0.49\linewidth]{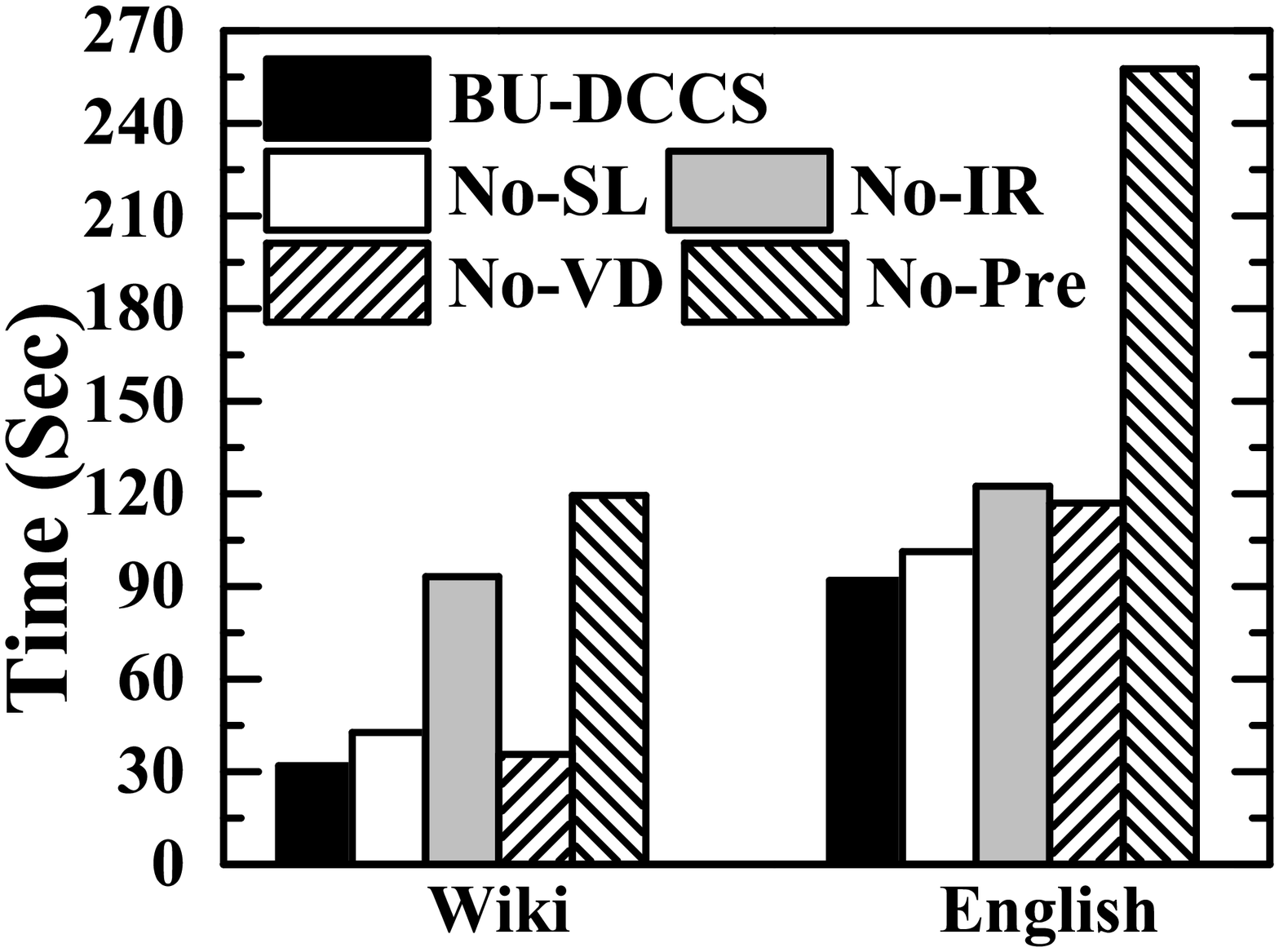}}
        \subfigure[\st{Large $s$}]{\includegraphics[width=0.49\linewidth]{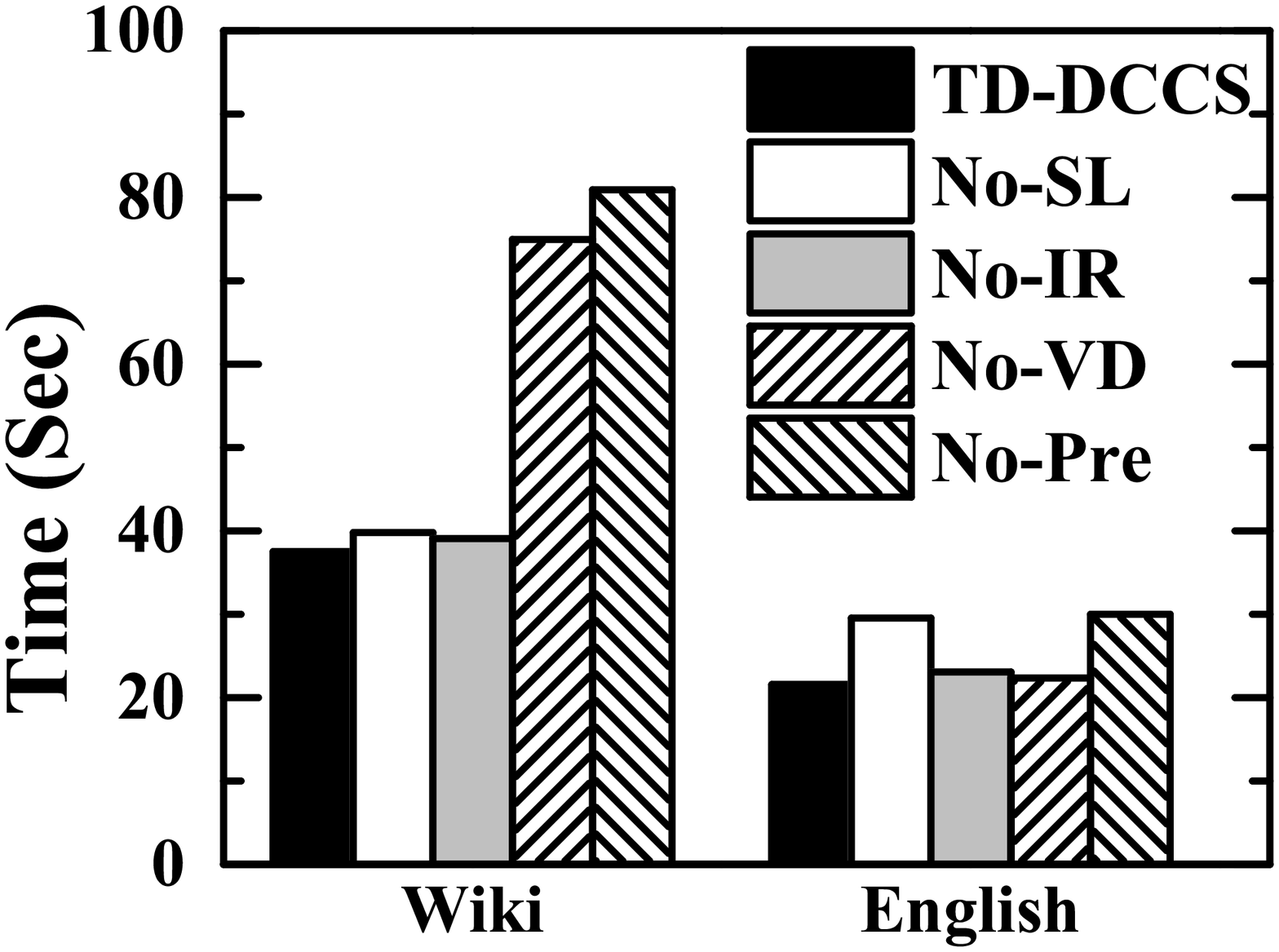}}
        \vspace{-1.5em}
        \caption{Effects of Preprocessing.}
        \label{Fig: Exp5TB}
    \end{minipage}
\end{figure*}

\noindent{\underline{\bf Effects of Parameter \textit{d}.}}
We examine the effects of parameter $d$ on the performance of the algorithms. By varying $d$, Fig.~\ref{Fig: Exp2TB} shows the execution time of \textsf{BU-DCCS} and \textsf{GD-DCCS} on datasets \textit{German} and \textit{English} for $s = 3$, and Fig.~\ref{Fig: Exp2TT} shows the execution time of \textsf{TD-DCCS} and \textsf{GD-DCCS} on \textit{German} and \textit{English} for $s = l(\G) - 2$. We observe that the execution time of all the algorithms decreases as $d$ grows. The reasons are as follows: 1) Due to Property~\ref{Lem:kHierarchy}, the size of $d$-CCs decreases as $d$ grows. Thus, \textsf{GD-DCCS} takes less time in selecting $d$-CCs, and \textsf{BU-DCCS} and \textsf{TD-DCCS} take less time in updating temporary results. 2) While $d$ increases, the size of the $d$-core on each layer decreases. By Lemma~\ref{Lem:kInjection}, the algorithms spend less time on $d$-CC computation. Moreover, both \textsf{BU-DCCS} and \textsf{TD-DCCS} are much faster than \textsf{GD-DCCS}.

Fig.~\ref{Fig: Exp2CB} and Fig.~\ref{Fig: Exp2CT} show the effects of $d$ on the cover size of the results of \textsf{BU-DCCS}, \textsf{TD-DCCS} and \textsf{GD-DCCS} for small $s$ and large $s$, respectively. We find that the cover size of the results decreases w.r.t.~$d$ for all the algorithms. This is simply because that the size of $d$-CCs decreases as $d$ increases. Therefore, the results cover less vertices for larger $d$. Moreover, the practical approximation quality of \textsf{BU-DCCS} and \textsf{TD-DCCS} is close to \textsf{GD-DCCS}.


\noindent{\underline{\bf Effects of Parameter \textit{k}.}}
We examine the effects of parameter $k$ on the performance of the algorithms. By varying $k$, Fig.~\ref{Fig: Exp3TB} shows the execution time of \textsf{BU-DCCS} and \textsf{GD-DCCS} on datasets \textit{Wiki} and \textit{English} for $s = 3$, and Fig.~\ref{Fig: Exp3TT} shows the execution time of \textsf{TD-DCCS} and \textsf{GD-DCCS} on \textit{Wiki} and \textit{English} for $s = l(\G) - 2$. We have the following observations: 1) The execution time of \textsf{GD-DCCS} increases with $k$ because the time cost for selecting $d$-CCs in \textsf{GD-DCCS} is proportional to $k$. 2) Both \textsf{BU-DCCS} and \textsf{TD-DCCS} run much faster than \textsf{GD-DCCS}. 3) The execution time of \textsf{BU-DCCS} and \textsf{TD-DCCS} is insensitive to $k$. This is because the power of the pruning techniques in \textsf{BU-DCCS} and \textsf{TD-DCCS} relies on $|\Cov(\R)|$ according to Eq.~\eqref{Eqn: RUpdate}. As $k$ grows, $|\Cov(\R)|$ increases insignificantly, so $k$ has little effects on the execution time of \textsf{BU-DCCS} and \textsf{TD-DCCS}.

Fig.~\ref{Fig: Exp3CB} and Fig.~\ref{Fig: Exp3CT} show the effects of $k$ on the cover size of the results of \textsf{BU-DCCS}, \textsf{TD-DCCS} and \textsf{GD-DCCS} for small $s$ and large $s$, respectively. We find that the cover size grows w.r.t.~$k$; however, insignificantly for $k \geq 20$. From another perspective, it shows that there exists substantial overlaps among $d$-CCs. To reduce redundancy, it is meaningful to find top-$k$ diversified $d$-CCs on a multi-layer graph.


\noindent{\underline{\bf Scalability w.r.t.~Parameters \textit{p} and \textit{q}.}}
We evaluate the scalability of the algorithms w.r.t.~the input multi-layer graph size. We control the graph size by randomly selecting a fraction $p$ of vertices or a fraction $q$ of layers from the original graph. Fig.~\ref{Fig: Exp4V} shows the execution time of \textsf{BU-DCCS}, \textsf{TD-DCCS} and \textsf{GD-DCCS} on the largest dataset \textit{Stack} by varying $p$ from $0.2$ to $1.0$. All the algorithms scale linearly w.r.t.~$p$ because the time cost of computing $d$-CCs is linear to the vertex count.

Fig.~\ref{Fig: Exp4L} shows the execution time of \textsf{BU-DCCS}, \textsf{TD-DCCS} and \textsf{GD-DCCS} on \textit{Stack} w.r.t.~$q$. We observed that: 1) The execution time of all algorithms grows with $q$. This is simply because the search space of the \textsc{DCCS} problem increases when the input multi-layer graph contains more layers. 2) The execution time of \textsf{GD-DCCS} grows much faster than \textsf{BU-DCCS} and \textsf{TD-DCCS}. The main reason is that both \textsf{BU-DCCS} and \textsf{TD-DCCS} adopt the effective pruning techniques to significantly reduce the search space. The number of candidate $d$-CCs examined by \textsf{GD-DCCS} grows much faster than those examined by \textsf{BU-DCCS} and \textsf{TD-DCCS}.

\noindent{\underline{\bf Effects of Preprocessing Methods.}}
We evaluate the effects of the preprocessing methods by disabling each (or all) of them in \textsf{BU-DCCS} (or \textsf{TD-DCCS}) and compare the execution time. Fig.~\ref{Fig: Exp5TB} shows the comparison results for \textsf{BU-DCCS} and \textsf{TD-DCCS}, respectively, where \textsf{No-VD} means ``vertex deletion is disabled'', \textsf{No-SL} means ``sorting layers is disabled'', \textsf{No-IR} means ``result initialization is disabled'', and \textsf{No-Pre} means ``all the preprocessing methods are disabled''. We have the following observations: 1) Every preprocessing method can improve the efficiency of \textsf{BU-DCCS} and \textsf{TD-DCCS}. It verifies that the preprocessing methods can reduce the size of the input graph (by vertex deletion) and enhance the pruning power of the algorithms (by sorting layers and result initialization). 2) A preprocessing method may have different effects for different algorithms. For example, the result initialization method has more significant effects in \textsf{BU-DCCS} than in \textsf{TD-DCCS}. This is because for smaller $s$, the cover size of the result is much larger according to Property~3. By Eq.~\eqref{Eqn: RUpdate}, the initial result can eliminate more candidates $d$-CCs in \textsf{BU-DCCS}.

\begin{figure}[!t]
    \centering
    \scriptsize
    \resizebox{0.85\columnwidth}{!}{
    \begin{tabular}{c|c|crrccc}
    	\hline
        \rowcolor{mygray}
    	Graph & $d$	& Algorithm & Time (Sec) & Size & Precision & Recall & $F_1$-score\\ \hline
       \multirow{6}{*}{\textit{PPI}} & \multirow{2}{*}{$2$}	& \textsf{MiMAG} & 6.28 & 58 & \multirow{2}{*}{0.598} & \multirow{2}{*}{$1$} & \multirow{2}{*}{0.748}\\
       &	& \textsf{BU-DCCS} & 0.078 & 97 & & &\\  \cline{2-8}
      &\multirow{2}{*}{$3$}	& \textsf{MiMAG} & 5.93 & 59 & \multirow{2}{*}{0.652} & \multirow{2}{*}{0.796} & \multirow{2}{*}{0.718}\\
      &	& \textsf{BU-DCCS} & 0.051 & 72 & & &\\  \cline{2-8}
      &\multirow{2}{*}{$4$}	& \textsf{MiMAG} & 5.16 & 55 & \multirow{2}{*}{0.631} & \multirow{2}{*}{0.745} & \multirow{2}{*}{0.683}\\
      &	& \textsf{BU-DCCS} & 0.02 & 65 & & &\\\hline
        \multirow{6}{*}{\textit{Author}} & \multirow{2}{*}{$2$}	& \textsf{MiMAG} & 13.90 & 122 & \multirow{2}{*}{0.682} & \multirow{2}{*}{$1$} & \multirow{2}{*}{0.811}\\
       &	& \textsf{BU-DCCS} & 0.091 & 179 & & &\\  \cline{2-8}
      & \multirow{2}{*}{$3$}	& \textsf{MiMAG} & 12.83 & 117 & \multirow{2}{*}{0.731} & \multirow{2}{*}{0.838} & \multirow{2}{*}{0.781}\\
      &	& \textsf{BU-DCCS} & 0.081 & 134 & & &\\  \cline{2-8}
      & \multirow{2}{*}{$4$} 	& \textsf{MiMAG} & 12.89 & 72 & \multirow{2}{*}{1} & \multirow{2}{*}{0.828} & \multirow{2}{*}{0.906}\\
      &	& \textsf{BU-DCCS} & 0.035 & 87 & & &\\ \hline
    \end{tabular}}
    \vspace{-0.3em}
    \caption{Comparison between \textsf{MiMAG} and \textsf{BU-DCCS}.}
     \label{Tab: QCCom}
     \vspace{-0.5em}
\end{figure}

\begin{figure}[!t]
    \centering
    \scriptsize
    \resizebox{0.7\columnwidth}{!}{
    \begin{tabular}{c|c|rrrrrr}
    	\hline
       \rowcolor{mygray}
    	 &  & \multicolumn{5}{c}{$|Q \cap \Cov(\R_c)|$} &\\ 
      \rowcolor{mygray}
        Graph &	 $|Q|$&  0 & 1 & 2  & 3 & 4 & 5 \\ \hline
       \multirow{3}{*}{\textit{PPI}} & 3 & 0 & 0 & 0 & {\bf 1.0} & --- & ---\\
       &  4 & 0 & 0.0045 & 0 & 0.1216 & {\bf 0.8739} & ---\\
       & 5 & 0 & 0 & 0 & 0 & 0.2759 & {\bf 0.7241}\\
      \hline
      \multirow{3}{*}{\textit{Author}} & 3 & 0 & 0 & 0 & {\bf 1.0} & --- & ---\\
       &  4 & 0 & 0.0045 & 0 & 0.0861 & {\bf 0.9139} & ---\\
       & 5 & 0 & 0.0506 & 0 & 0 & 0.1772 & {\bf 0.7722}\\
      \hline
    \end{tabular}
    }
    \vspace{-0.5em}
     \caption{Distribution of $|Q \cap \Cov(\R_c)|$.}
     \label{Tab: QCDis}
      \vspace{-2.5em}
\end{figure}

\noindent{\underline{\bf Comparison with Quasi-Clique Mining.}}
We compare our DCCS algorithms with the quasi-clique-based algorithm \textsf{MiMAG}~\cite{Boden2012Mining} for mining coherent subgraphs on a multi-layer graph. A set $Q$ of vertices in a graph is a $\gamma$-quasi-clique if each vertex in $Q$ is adjacent to $\gamma (|Q| - 1)$ other vertices in $Q$, where $\gamma \in [0, 1]$. Given a multi-layer graph $\G$ and parameters $\gamma \in [0, 1]$ and $d', s \in \mathbb{N}$, \textsf{MiMAG} finds a set of diversified vertex subsets $Q$ such that $|Q| \geq d'$ and $Q$ is a $\gamma$-quasi-clique on at least $s$ layers of $\G$. Since the datasets in our experiments are unlabelled graphs, the distance function of labels in \textsf{MiMAG} is disabled.

In the experiment, we set the parameters as follows. For the \textsf{MiMAG} algorithm, we set $\gamma = 0.8$ and $s = l(\G)/2$. For the \textsf{BU-DCCS} algorithm, we set $s = l(\G)/2$, $k = 10$. For fairness, \textsf{BU-DCCS} and \textsf{MiMAG} use the same parameter $s$. More over, when comparing \textsf{MiMAG} with \textsf{BU-DCCS}, we set $d' = d + 1$. We vary $d = 2, 3, 4$. Under this setting, the minimum degree constraints of a vertex in a dense subgraph generated by \textsf{BU-DCCS} and \textsf{MiMAG} are $d$ and $\lceil \gamma d \rceil$, which have the same value for $d = 2, 3, 4$ and $\gamma = 0.8$.

Let $\R_{Q}$ and $\R_{C}$ be the output of \textsf{MiMAG} and \textsf{BU-DCCS}, respectively. We compare them by five evaluation metrics: 1) execution time; 2) cover sizes $|\Cov(\R_{Q})|$ and $|\Cov(\R_{C})|$; 3) precision $\frac{| \Cov(\R_{Q}) \cap \Cov(\R_{C}) |}{|\Cov(\R_{C})|}$; 4) recall $\frac{| \Cov(\R_{Q}) \cap \Cov(\R_{C}) |}{|\Cov(\R_{Q})|}$; 5) $F_1$-score, i.e.~the harmonic mean of the precision and recall. The metrics 2--5 assess the similarity between $\R_{Q}$ and $\R_{C}$.

We ran \textsf{MiMAG} and \textsf{BU-DCCS} on datasets \textit{PPI} and \textit{Author}. The experimental results are shown in Fig.~\ref{Tab: QCCom}. We have three observations: 1) \textsf{BU-DCCS} runs much faster than \textsf{MiMAG}. This is because the search tree of \textsf{BU-DCCS} contains $2^{l(\G)}$ vertex subsets; while the search tree of \textsf{MiMAG} contains $2^{|V(\G)|}$ vertex subsets, where $l(\G) \ll |V(\G)|$. 2) The vertices covered by $\R_{Q}$ and $\R_{C}$ are significantly overlapped. Specifically, $\Cov(\R_{Q}) \cap \Cov(\R_{C})$ contains 70\%+ of vertices in $\Cov(\R_{Q})$ and 50\%+ of vertices in $\Cov(\R_{C})$. 3) The quasi-cliques in $\R_Q$ are largely contained in the $d$-CCs in $\R_C$ (entirely contained for most of the quasi-cliques). Fig.~\ref{Tab: QCDis} shows the detailed experimental results.

\begin{figure}[!t]
    \centering
    \includegraphics[width=\columnwidth]{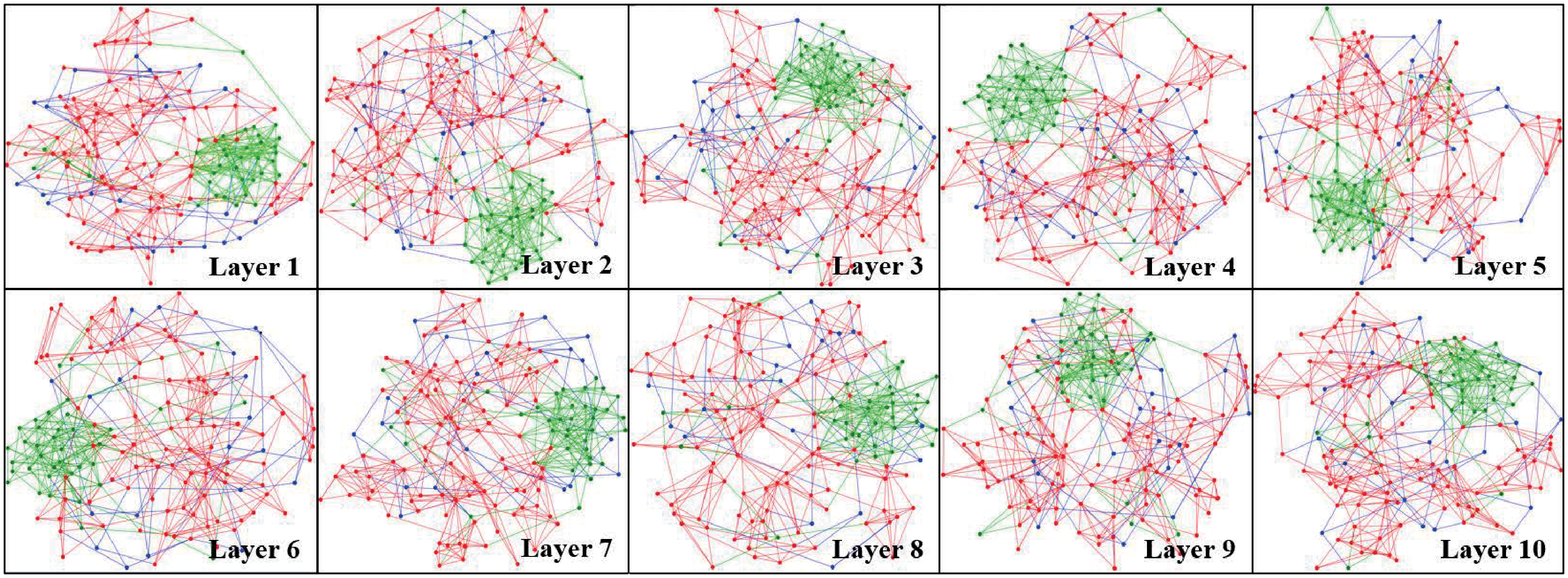}
    \vspace{-2em}
    \caption{Induced Coherent Dense Subgraphs on \textit{Author}.}
    \label{Fig: Exp6T}
    \vspace{-0.5em}
\end{figure}

We also analyze the differences between $\R_{Q}$ and $\R_{C}$. Fig.~\ref{Fig: Exp6T} shows the subgraphs induced by $\Cov(\R_C)$ and $\Cov(R_Q)$ on all layers of the \textit{Author} graph for $d = 3$. The vertices in $\Cov(\R_{C}) \cap \Cov(\R_{Q})$, $\Cov(\R_{C}) - \Cov(\R_{Q})$ and $\Cov(\R_{Q}) - \Cov(\R_{C})$ are colored in red, green and blue, respectively. We have two observations: 1) The vertices in $\Cov(\R_{Q}) - \Cov(\R_{C})$ (blue vertices) are sparsely connected compared with the vertices in $\Cov(\R_{Q}) \cap \Cov(\R_{C})$ (red vertices). 2) The vertices in $\Cov(\R_{C}) - \Cov(\R_{Q})$ (green vertices) are densely connected with themselves and with the vertices in $\Cov(\R_{C}) \cap \Cov(\R_{Q})$ (red vertices). The dense portion constituted by the vertices in $\Cov(\R_{C}) - \Cov(\R_{Q})$ found by \textsf{BU-DCCS} is missing from the result of \textsf{MiMAG}.

\begin{figure}[!t]
    \centering
    \scriptsize
    \resizebox{0.6\columnwidth}{!}{
    \begin{tabular}{c|ccc}
    	\hline
       \rowcolor{mygray}
    	 Algorithm & $d = 2$ & $d = 3$ & $d = 4$\\ \hline
        \textsf{MiMAG} & 69.7\% & 67.2\% & 65.3\% \\
        \textsf{BU-DCCS} & {\bf 83.6\%} & {\bf 80.1\%} & {\bf 77.9\%} \\
      \hline
    \end{tabular}
    }
    \vspace{-0.5em}
     \caption{Proportion of Protein Complexes Found by \textsf{MiMAG} and \textsf{BU-DCCS}.}
     \label{Tab: PPro}
      \vspace{-2.5em}
\end{figure}

Moreover, we compared protein complexes found by \textsf{MiMAG} and \textsf{BU-DCCS} on \textit{PPI}. We use the  MIPS database (\texttt{\small{http://mips.helmholtz-muenchen.de}}) as ground truth. For each protein complex on \textit{PPI}, if it is entirely contained in a dense subgraph, we say this protein complex is found. The proportion of protein complexes found by \textsf{MiMAG} and \textsf{BU-DCCS} with different $d$ is shown in Fig.~\ref{Tab: PPro}. We observe that: 1)  When $d$ increases, the proportion of found protein complexes decreases. This is because when $d$ increases, both the cover sizes $|\Cov(\R_{C})|$ and $|\Cov(\R_{Q})|$ become smaller. Thus, the dense subgraphs cover less number of protein complexes.  2) The proportion of protein complexes found by \textsf{BU-DCCS} is much higher than \textsf{MiMAG}. This is because the dense subgraphs generated by \textsf{BU-DCCS} cover more vertices than \textsf{MiMAG}. As we show before, some dense portions are missing from the result of \textsf{MiMAG}, so some protein complexes cannot be found by \textsf{MiMAG}.
This result verifies that \textsf{BU-DCCS} is more preferable than \textsf{MiMAG} for protein complex detection on biological networks.

In summary, \textsf{BU-DCCS} is much faster than \textsf{MiMAG} and produces larger coherent dense subgraphs than \textsf{MiMAG} (covering most of the quasi-cliques).

\section{Related Work}
\label{Sec: RWork}

Dense subgraph mining is a fundamental graph mining task, which has been extensively studied on single-layer graphs. Recently, mining dense subgraphs on graphs with multiple types of edges has attracted much attention. A detailed survey can be found in~\cite{Kim2015Community}. Basically, existing work can be categorized into two classes: dense subgraph mining on two-layer graphs and dense subgraph mining on general multi-layer graphs.

\noindent{\underline{\bf Dense Subgraph Mining on Two-layer Graphs.}}
Two-layer graph, is a special multi-layer graph. In a two-layer graph, one layer represents physical link structures, and the other represents conceptual connections between vertices derived from physical structures. The dense subgraph mining algorithms on two-layer graphs take both physical and conceptual connections into account. The algorithm in~\cite{Li2008Scalable} finds dense subgraphs by expanding from initial seed vertices. The algorithm~\cite{Qi2012Community} adopts edge-induced matrix factorization. In~\cite{Zhou2009Graph}, structural and attribute information are combined to form a unified distance measure, and a clustering algorithm is applied to detect dense subgraphs. In~\cite{Xu2012A}, structures and attributes are fused by a probabilistic model, and a model-based algorithm is proposed to find dense subgraphs. Other work on two-layer graphs includes the method based on correlation pattern mining~\cite{Silva2012Mining} and graph merging~\cite{Ruan2012Efficient}. All the algorithms are tailored to fit two-layer graphs. They only support the input where one layer represents physical connections, and the other represents conceptual connections. Therefore, they cannot be adapted to process general multi-layer graphs.

\noindent{\underline{\bf Dense Subgraph Mining on General Multi-layer Graphs.}}
A general multi-layer graph is composed by many layers representing different types of edges between vertices. Ref.~\cite{Tang2009Clustering} and~\cite{Dong2011Clustering} study dense subgraph mining using matrix factorization. The goal is to approximate the adjacency matrix and the Laplacian matrix of the graph on each layer. However, the matrix-based methods require huge amount of memory and are not scalable to large graphs. Alternatively, other work~\cite{Boden2012Mining, Pei2005On, Zeng2006Coherent} focus on finding dense subgraph patterns by extending the quasi-clique notion defined on single-layer graphs. In \cite{Zeng2006Coherent} and~\cite{Pei2005On}, the algorithms find cross-graph quasi-cliques. In \cite{Boden2012Mining}, the method is adapted to find diversified result to avoid redundancy. However, all these work has inherent limitations:
1) Quasi-clique-based methods are computationally costly.
2) The diameter of the discovered dense subgraphs are often very small. As verified by the experimental results in Section~\ref{Sec:PEvaluation}, the quasi-clique-based methods tend to miss large dense subgraphs.

We also discuss on some other related work.

\noindent{\underline{\bf Frequent Subgraph Pattern Mining.}}
Given a set $D$ of labelled graphs, frequent subgraph pattern mining discovers all subgraph patterns that are subgraph isomorphic to at least a fraction \emph{minsup} of graphs in $D$ (i.e., frequent)~\cite{Yan2002gSpan}. Our work is different from frequent subgraph pattern mining in two main aspects: 1) The graphs in $D$ are labelled graphs. A vertex in a graph may not be identical to any vertex in other graphs. Hence, the graphs in $D$ usually do not form a multi-layer graph. Inversely, a multi-layer graph is not necessary to be labelled. 2) A frequent subgraph pattern represents a common substructure recurring in many graphs in $D$. However, a $d$-CC is a set of vertices, and they are not required to have the same link structure on different layers of a multi-layer graph.

\noindent{\underline{\bf Clustering on Heterogeneous Information Networks.}}
Heterogeneous Information Network (HIN for short) is a logical network composed by multiple types of links between multiple types of objects. The clustering problem on HINs has been well studied in~\cite{Sun2009Ranking}. This work is different from our work in two aspects:
1) HIN characterizes the relationships between different types of objects. Normally, only one type of edges between two different types of vertices is considered. However, a multi-layer graph models multiple types of relationships between homogenous objects of the same type.
2) HIN is single-layer graph. The clustering algorithm only consider the cohesiveness of a vertex subset rather than its support.

\noindent{\underline{\bf $d$-Cores on Single-Layer Graphs.}}
The notion of $d$-core is widely used to represent dense subgraphs on single-layer graphs. It has many useful properties and has been applied to community detection~\cite{Li2015Influential}. However, the $d$-core notion only considers density of but ignores support. In this paper, we propose the $d$-CC notion, which extends the $d$-core notion by 1) considering both density and support of dense subgraphs and 2) inheriting the elegant properties of $d$-cores.

\section{Conclusions}

This paper addresses the diversified coherent core search (\textsc{DCCS}) problem on multi-layer graphs. The new notion of $d$-coherent core ($d$-CC) has three elegant properties, namely uniqueness, hierarchy and containment. The greedy algorithm is $(1 - 1/e)$-approximate; however, it is not efficient on large multi-layer graphs. The bottom-up and the top-down DCCS algorithms are $1/4$-approximate. For $s < l(\G)/2$, the bottom-up algorithm is faster than the other ones; for $s \ge l(\G)/2$, the top-down algorithm is faster than the other ones. The DCCS algorithms outperform the quasi-clique-based cohesive subgraph mining algorithm in terms of both time efficiency and result quality.

\scriptsize

\bibliographystyle{abbrv}
\bibliography{DCCS}

\begin{thebibliography}{10}

\bibitem{Angel2014Dense}
A.~Angel, N.~Koudas, N.~Sarkas, D.~Srivastava, M.~Svendsen, and S.~Tirthapura.
\newblock Dense subgraph maintenance under streaming edge weight updates for
  real-time story identification.
\newblock {\em PVLDB}, 5(6):574--585, 2012.

\bibitem{Ausiello2011Online}
G.~Ausiello, N.~Boria, A.~Giannakos, G.~Lucarelli, and V.~T. Paschos.
\newblock Online maximum k-coverage.
\newblock In {\em International Conference on Fundamentals of Computation
  Theory}, pages 181--192, 2011.

\bibitem{Batagelj2003An}
V.~Batagelj and M.~Zaversnik.
\newblock An o(m) algorithm for cores decomposition of networks.
\newblock {\em Computer Science}, 1(6):34--37, 2003.

\bibitem{Boden2012Mining}
B.~Boden, S.~Nnemann, H.~Hoffmann, and T.~Seidl.
\newblock Mining coherent subgraphs in multi-layer graphs with edge labels.
\newblock In {\em KDD}, pages 1258--1266, 2012.

\bibitem{Dong2011Clustering}
X.~Dong, P.~Frossard, P.~Vandergheynst, and N.~Nefedov.
\newblock Clustering with multi-layer graphs: A spectral perspective.
\newblock {\em IEEE Transactions on Signal Processing}, 60(11):5820--5831,
  2011.

\bibitem{Hu2005Mining}
H.~Hu, X.~Yan, Y.~Huang, J.~Han, and X.~J. Zhou.
\newblock Mining coherent dense subgraphs across massive biological networks
  for functional discovery.
\newblock {\em Bioinformatics}, 21(suppl\_1):i213, 2005.

\bibitem{Kim2015Community}
J.~Kim and J.~G. Lee.
\newblock Community detection in multi-layer graphs: A survey.
\newblock {\em ACM SIGMOD Record}, 44(3):37--48, 2015.

\bibitem{LeeRJA10}
V.~E. Lee, N.~Ruan, R.~Jin, and C.~C. Aggarwal.
\newblock A survey of algorithms for dense subgraph discovery.
\newblock In {\em Managing and Mining Graph Data}, pages 303--336. Springer,
  2010.

\bibitem{Li2008Scalable}
H.~Li, Z.~Nie, W.~C. Lee, L.~Giles, and J.~R. Wen.
\newblock Scalable community discovery on textual data with relations.
\newblock In {\em CIKM}, pages 1203--1212, 2008.

\bibitem{Li2015Influential}
R.~H. Li, L.~Qin, J.~X. Yu, and R.~Mao.
\newblock Influential community search in large networks.
\newblock {\em PVLDB}, 8(5):509--520, 2015.

\bibitem{Pei2005On}
J.~Pei, D.~Jiang, and A.~Zhang.
\newblock On mining cross-graph quasi-cliques.
\newblock In {\em KDD}, pages 228--238, 2005.

\bibitem{Qi2012Community}
G.~J. Qi, C.~C. Aggarwal, and T.~Huang.
\newblock Community detection with edge content in social media networks.
\newblock In {\em ICDE}, pages 534--545, 2012.

\bibitem{Ruan2012Efficient}
Y.~Ruan, D.~Fuhry, and S.~Parthasarathy.
\newblock Efficient community detection in large networks using content and
  links.
\newblock In {\em WWW}, pages 1089--1098, 2012.

\bibitem{Silva2012Mining}
A.~Silva, W.~M. Jr, and M.~J. Zaki.
\newblock Mining attribute-structure correlated patterns in large attributed
  graphs.
\newblock {\em PVLDB}, 5(5):466--477, 2012.

\bibitem{Sun2009Ranking}
Y.~Sun, Y.~Yu, and J.~Han.
\newblock Ranking-based clustering of heterogeneous information networks with
  star network schema.
\newblock In {\em KDD}, pages 797--806, 2009.

\bibitem{Tang2009Clustering}
W.~Tang, Z.~Lu, and I.~S. Dhillon.
\newblock Clustering with multiple graphs.
\newblock In {\em ICDM}, pages 1016--1021, 2009.

\bibitem{Xu2012A}
Z.~Xu, Y.~Ke, Y.~Wang, H.~Cheng, and J.~Cheng.
\newblock A model-based approach to attributed graph clustering.
\newblock In {\em SIGMOD}, pages 505--516, 2012.

\bibitem{Yan2002gSpan}
X.~Yan and J.~Han.
\newblock gspan: Graph-based substructure pattern mining.
\newblock In {\em ICDM}, pages 721 -- 724, 2002.

\bibitem{Zeng2006Coherent}
Z.~Zeng, J.~Wang, L.~Zhou, and G.~Karypis.
\newblock Coherent closed quasi-clique discovery from large dense graph
  databases.
\newblock In {\em KDD}, pages 797--802, 2006.

\bibitem{Zhou2009Graph}
Y.~Zhou, H.~Cheng, and J.~X. Yu.
\newblock Graph clustering based on structural/attribute similarities.
\newblock {\em PVLDB}, 2(1):718--729, 2009.

\end{thebibliography}

\clearpage


\setcounter{property}{0}
\setcounter{lemma}{0}
\setcounter{theorem}{0}
\setcounter{section}{0}

\renewcommand\thesection{\Alph{section}}

\centerline{\Large{\textsc{\textbf{Appendix}}}}

\normalsize

\section{Proofs}

\label{Sec:PProperty}

\noindent{\textit{1. Proof of Property~1}}

\begin{property}[Uniqueness]
\label{Lem:kUnique}
Given a multi-layer graph $\G$ and a subset $L \subseteq [l(\G)]$, $C^{d}_{L}(\G)$ is unique for $d \in \mathbb{N}$.
\end{property}

\begin{proof}
Suppose $C^{d}_{L}(\G)$ is not unique. Let $C_1, C_2, \dots, C_n$ be the distinct instances of $C^{d}_{L}(\G)$. Due to the maximality of $d$-CC, we have $C_i \not\subseteq C_j$ for $i \ne j$. Let $C = \bigcup_{j = 1}^{n} C_j$. For each layer number $l \in L$, $G_l[C_i]$ is a subgraph of $G_l[C]$ for all $1 \leq i \leq n$. Thus, for each vertex $v \in C$, we have
\begin{equation*}
	d_{G_l[C]}(v) \geq \max_{1 \leq i \leq n} d_{G_l[C_i]}(v) \geq d
\end{equation*}
for every layer number $l \in L$. By definition, $C$ is also a $d$-CC of $\G$ w.r.t.~$L$. Due to the maximality of $d$-CC, none of $C_1, C_2, \dots, C_n$ is a $d$-CC of $\G$ w.r.t.~$L$. It leads to contradiction. Hence, $C^{d}_{L}(\G)$ is unique.
\end{proof}

\medskip
\noindent{\textit{2. Proof of Property~2}}

\begin{property}[Hierarchy]
\label{Lem:kHierarchy}
Given a multi-layer graph $\G$ and a subset $L \subseteq [l(\G)]$, we have $C^{d}_{L}(\G) \subseteq C^{d - 1}_{L}(\G) \subseteq \dots \subseteq C^{1}_{L}(\G) \subseteq C^{0}_L(\G)$ for $d \in \mathbb{N}$.
\end{property}

\begin{proof}
Let $d_1, d_2 \in \mathbb{N}$ and $d_1 > d_2$. For each vertex $v \in C^{d_1}_{L}(\G)$, we have
\begin{equation*}
	d_{G_l[C^{d_1}_{L}(\G)]}(v) \geq d_1 > d_2
\end{equation*}
for every layer number $l \in L$. By the definition of $d$-CC, $C^{d_1}_{L}(\G) \subseteq C^{d_2}_{L}(\G)$. Thus, the property holds.
\end{proof}

\medskip
\noindent{\textit{3. Proof of Property~3}}

\begin{property}[Containment]
\label{Lem:PHierarchy}
Given a multi-layer graph $\G$ and two subsets $L, L' \subseteq [l(\G)]$, if $L \subseteq L'$, we have $C^{d}_{L'}(\G) \subseteq C^{d}_{L}(\G)$ for $d \in \mathbb{N}$.
\end{property}

\begin{proof}
For each vertex $v \in C^{d}_{L'}(\G)$, we have
\begin{equation*}
	d_{G_l[C^{d}_{L'}(\G)]}(v) \geq d
\end{equation*}
for each layer number $l \in L$. Based on the definition of $d$-CC, we have $C^{d}_{L'}(\G) \subseteq C^{d}_{L}(\G)$. Hence, the property holds.
\end{proof}

\medskip
\noindent{\textit{4. Proof of Lemma~1}}

\begin{lemma}[Intersection Bound]
\label{Lem:kInjection}
Given a multi-layer graph $\G$ and two subsets $L_1, L_2 \subseteq [l(\G)]$, we have $C_{L_1 \cup L_2}^{d} (\G) \subseteq C_{L_1}^{d} (\G) \cap C_{L_2}^{d}(\G)$ for $d \in \mathbb{N}$.
\end{lemma}

\begin{proof}
First, we have $L_1 \cup L_2 \subseteq L_1$ and $L_1 \cup L_2 \subseteq L_2$. By Property~3, we have $C_{L_1 \cup L_2}^{d} (\G) \subseteq C_{L_1}^{d} (\G)$ and $C_{L_1 \cup L_2}^{d} (\G) \subseteq C_{L_2}^{d} (\G)$.
Thus, $C_{L_1 \cup L_2}^{d} (\G) \subseteq C_{L_1}^{d} (\G) \cap C_{L_2}^{d}(\G)$.
\end{proof}

\medskip
\noindent{\textit{5. Proof of Lemma~2}}

\begin{lemma}[Search Tree Pruning]
\label{Lem: dCCPrune}
For a $d$-CC $C^d_{L}(\G)$, if $C^d_{L}(\G)$ does not satisfy Eq.~\eqref{Eqn: RUpdate}, none of the descendants of $C^d_{L}(\G)$ can satisfy Eq.~\eqref{Eqn: RUpdate}.
\end{lemma}

\begin{proof}
For any descendant $C^d_{L'}(\G)$ of $C^d_{L}(\G)$, we have $L \subseteq L'$. By Property~3, we have $C^d_{L'}(\G) \subseteq C^d_{L}(\G)$. Thus,
\begin{align*}
	~ & \Cov((\R - \{C^{*}(\R)\}) \cup \{C^d_{L'}(\G)\})  \\
	&  \subseteq \Cov((\R - \{C^{*}(\R)\}) \cup \{C^d_{L}(\G)\}).
\end{align*}

Obviously, if
\begin{equation*}
|\Cov((\R - \{C^{*}(\R)\}) \cup \{C^d_{L}(\G)\})| < \left(1 + \frac{1}{k}\right) |\Cov(\R)|,
\end{equation*}
we have
\begin{equation*}
|\Cov((\R - \{C^{*}(\R)\}) \cup \{C^d_{L'}(\G)\})| < \left(1 + \frac{1}{k}\right) |\Cov(\R)|.
\end{equation*}
Thus, $C^d_{L'}(\G)$ cannot satisfy Eq.~\eqref{Eqn: RUpdate}.
\end{proof}

\medskip
\noindent{\textit{6. Proof of Lemma~3}}

\begin{lemma}[Order-based Pruning]
\label{Lem: LayerOrder}
For a $d$-CC $C^d_L(\G)$ and $j > \max(L)$, if $|C_{L}^d(\G) \cap C^d(G_j)| < \frac{1}{k}|\Cov(\R)| + |\Delta(\R, C^*(\R))|$, then $C_{L \cup \{j\}}^d(\G)$ cannot satisfy Eq.~\eqref{Eqn: RUpdate}.
\end{lemma}

\begin{figure}[t]
	\centering
	\includegraphics[width=0.6\columnwidth]{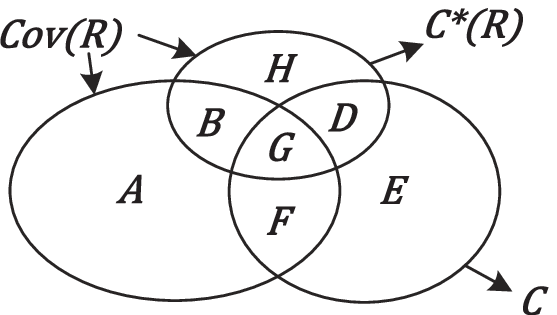}
	\vspace{-1em}
	\caption{Relationships between $\Cov(\R)$, $C^{*}(\R)$ and $C$.}
	\label{Fig: Venn}
\vspace{-2em}
\end{figure}

\begin{proof}
According to the definitions of $d$-CC and $d$-core, we have $C^d(G_j) = C^d_{\{j\}}(\G)$. For ease of presentation, let $C = C_{L}^d(\G) \cap C^d(G_j)$. We illustrate the relationships between $\Cov(\R)$, $C^{*}(\R)$ and $C$ in Fig.~\ref{Fig: Venn} with 7 disjoint subsets $A, B, D, E, F, G$ and $H$. We have
\begin{align*}
	~ & |\Cov(\R)| = |A| + |B| + |D| + |F| + |G| + |H|,\\
	~ & |C^{*}(\R)| = |B| + |D| + |G| + |H|,\\
	~ & |C| = |D| + |E| + |F| + |G|,\\
	~ & |\Delta(\R, C^*(\R))| = |D| + |H|.
\end{align*}
Since $|C| < \frac{1}{k}|\Cov(\R)| + |\Delta(\R, C^*(\R))|$, we have
\begin{align*}
	~ & |D| +|E| + |F| + |G| \\
	&  < \frac{1}{k} (|A| + |B| + |D| + |F| + |G| + |H|) + |D| + |H|.
\end{align*}
Thus,
\begin{equation*}
\begin{split}
	~ &\, |\Cov((\R - \{C^{*}(\R)\}) \cup \{C\})| \\
	= &\, |A| + |B| + |D| + |E| + |F| + |G| \\
	< &\, \frac{1}{k} (|A| + |B| + |D| + |G| + |F| + |H|) \\
    + & |A| + |B| + |D| + |H| \\
	\le &\, \left(1 + \frac{1}{k}\right) (|A| + |B| + |D| + |G| + |F| + |H|) \\
	= &\, \left(1 + \frac{1}{k}\right) |\Cov(\R)|.
\end{split}
\end{equation*}
By Lemma~1, we have $C^{d}_{L \cup \{j\}}(\G) \subseteq C$, so
\begin{align*}
	~ & \Cov((\R - \{C^{*}(\R)\}) \cup \{C^{d}_{L}(\G)\})  \\
	& \subseteq \Cov((\R - \{C^{*}(\R)\}) \cup \{C\}).
\end{align*}

Then, we have
\begin{align*}
	~ & |\Cov((\R - \{C^{*}(\R)\}) \cup \{C^{d}_{L}(\G)\})|\\
	\le & |\Cov((\R - \{C^{*}(\R)\}) \cup \{C\})| < \left(1 + \frac{1}{k}\right) |\Cov(\R)|.
\end{align*}
The lemma thus holds.
\end{proof}

\medskip
\noindent{\textit{7. Proof of Lemma~4}}

\begin{lemma}[Layer Pruning]
\label{Lem: LayerPrune}
For a $d$-CC $C^d_{L}(\G)$ and $j > \max(L)$, if $C^d_{L \cup \{j\}}(\G)$ does not satisfy Eq.~\eqref{Eqn: RUpdate}, then $C^d_{L' \cup \{j\}}(\G)$ cannot satisfy Eq.~\eqref{Eqn: RUpdate} for all $L'$ such that $L \subseteq L' \subseteq [l(\G)]$.
\end{lemma}

\begin{proof}
Since $L \subseteq L'$, we have $L \cup \{j\} \subseteq L' \cup \{ j\}$. According to Property~3, we have $C^d_{L' \cup \{j\}}(\G)  \subseteq C^d_{L \cup \{j\}}(\G)$.
Therefore,
\begin{align*}
	~ & \Cov((\R - \{C^{*}(\R)\}) \cup \{C^d_{L' \cup \{j\}}(\G)\})  \\
	& \subseteq \Cov((\R - \{C^{*}(\R)\}) \cup \{C^d_{L \cup \{j\}}(\G)\}).
\end{align*}

Since $C^d_{L \cup \{j\}}(\G)$ does not satisfy Eq.~\eqref{Eqn: RUpdate}, we have
\begin{align*}
	~ & |\Cov((\R - \{C^{*}(\R)\}) \cup \{C^d_{L' \cup \{j\}}(\G)\})|\\
	\le & |\Cov((\R - \{C^{*}(\R)\}) \cup \{C^d_{L \cup \{j\}}(\G)\})|\\
	< & \left(1 + \frac{1}{k}\right) |\Cov(\R)|.
\end{align*}
Thus, the lemma holds.
\end{proof}

\medskip
\noindent{\textit{8. Proof of Lemma~5}}

\begin{lemma}[Search Tree Pruning]
\label{Lem: TDdCCPrune}
For a $d$-CC $C^d_L(\G)$ and its potential vertex set $U^d_L(\G)$, where $|L| > s$, if $U^d_{L}(\G)$ does not satisfy Eq.~\eqref{Eqn: RUpdate}, any descendant $C^d_{L'}(\G)$ of $C^d_L(\G)$ with $|L'| = s$ cannot satisfy Eq.~\eqref{Eqn: RUpdate}.
\end{lemma}

\begin{proof}
According to the usage of potential sets, for any descendant $C^d_{L'}(\G)$ of $C^d_L(\G)$ with $|L'| = s$, we have $C^d_{L'}(\G) \subseteq U^d_{L}(\G)$.
Thus, we have
\begin{align*}
	~ & \Cov((\R - \{C^{*}(\R)\}) \cup \{C^d_{L'}(\G)\})  \\
	& \subseteq \Cov((\R - \{C^{*}(\R)\}) \cup \{U^d_{L}(\G)\}).
\end{align*}

Since $U^d_{L}(\G)$ does not satisfy Eq.~\eqref{Eqn: RUpdate}, we have
\begin{align*}
	~ & |\Cov((\R - \{C^{*}(\R)\}) \cup \{C^d_{L'}(\G)\})|\\
	< & |\Cov((\R - \{C^{*}(\R)\}) \cup \{U^d_{L}(\G)\})|< \left(1 + \frac{1}{k}\right) |\Cov(\R)|.
\end{align*}
The lemma thus holds.
\end{proof}

\medskip
\noindent{\textit{9. Proof of Lemma~6}}

\begin{lemma}[Order-based Pruning]
\label{Lem: TDLayerOrder}
For a $d$-CC $C^d_L(\G)$, its potential vertex set $U^d_L(\G)$ and $j > \max ([l(\G)] - L)$, if $|U^d_{L - \{ j\}}(\G)| < \frac{|\Cov(\R)|}{k} + |\Delta(\R, C^*(\R))|$, any descendant $C^d_{L - \{ j \}}(\G)$ of $C^d_L(\G)$ cannot satisfy Eq.~\eqref{Eqn: RUpdate}.
\end{lemma}

\begin{proof}
Similar to the proof of Lemma~\ref{Lem: LayerOrder}, if $|U^d_{L - \{ j\}}(\G)| < \frac{1}{k}|\Cov(\R)| + |\Delta(\R, C^*(\R))|$, we have
\begin{align*}
	~ & |\Cov((\R - \{C^{*}(\R)\}) \cup \{U^d_{L - \{ j\}}(\G)\})|  < \left(1 + \frac{1}{k}\right)|\Cov(\R)|.
\end{align*}

According to the usage of potential sets, for any descendant $C^d_{L'}(\G)$ of $C^d_L(\G)$ with $|L'| = s$, we have $C^d_{L'}(\G) \subseteq U^d_{L}(\G)$. Thus, we have
\begin{align*}
	~ & |\Cov((\R - \{C^{*}(\R)\}) \cup \{C^d_{L'}(\G)\})|\\
	\le & |\Cov((\R - \{C^{*}(\R)\}) \cup \{U^d_{L - \{ j\}}(\G)\})|\\
	< & \left(1 + \frac{1}{k}\right)|\Cov(\R)|.
\end{align*}
Therefore, the lemma holds.
\end{proof}

\setcounter{equation}{1}

\medskip
\noindent{\textit{10. Proof of Lemma~7}}

\begin{lemma}[Potential Set Pruning]
\label{Lem: TDUPPrune}
For a $d$-CC $C^d_L(\G)$ and its potential vertex set $U^d_L(\G)$, where $|L| > s$, if $C^{d}_{L}(\G)$ satisfies Eq.~\eqref{Eqn: RUpdate}, and $U^d_L(\G)$ satisfies
\begin{equation}
\label{Eqn: UCondition}
|U^{d}_{L}(\G)| <  (\tfrac{1}{k} + \tfrac{1}{k^2}) |\Cov(\R)|+ (1 + \tfrac{1}{k})|\Delta(\R, C^{*}(\R))|,
\end{equation}
the following proposition holds: For any two distinct descendants $C^d_{S_1}(\G)$ and $C^d_{S_2}(\G)$ of $C^{d}_{L}(\G)$ such that $|S_1| = |S_2| = s$, if $|\R| = k$ and $\R$ has already been updated by $C^{d}_{S_1}(\G)$, then $C^{d}_{S_2}(\G)$ cannot update $\R$ any more.
\end{lemma}

\begin{figure}[t]
	\centering
	\includegraphics[width=0.6\columnwidth]{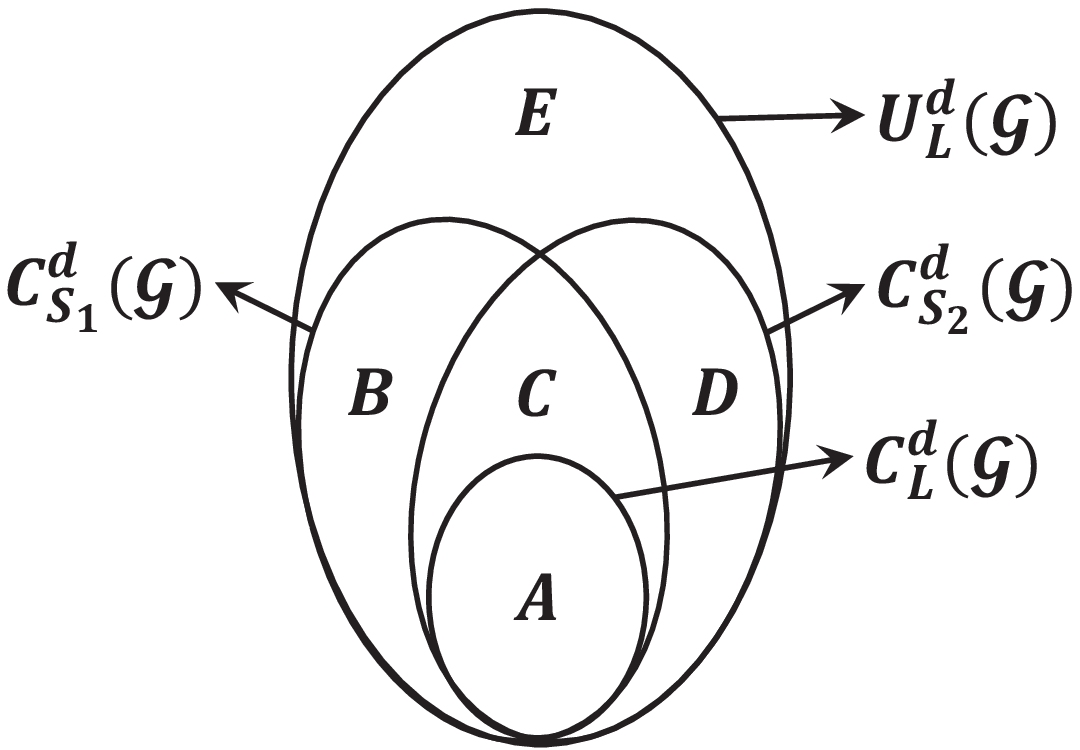}
	\caption{Relationships between $C^d_{L}(\G)$, $C^d_{L_1}(\G)$, $C^d_{L_2}(\G)$ and $U^d_{L}(\G)$.}
	\label{Fig: Venn2}
\vspace{-2em}
\end{figure}

\begin{proof}
We illustrate the relationships between $C^d_{L}(\G)$, $C^d_{S_1}(\G)$, $C^d_{S_2}(\G)$ and $U^d_{L}(\G)$ in Fig.~\ref{Fig: Venn2} with five disjoint subset $A$, $B$, $C$, $D$ and $E$.  We have
\begin{align*}
	~ & |C^d_{S_1}(\G)| = |A| + |B| + |C|,\\
	~ & |C^d_{S_2}(\G)| = |A| + |C| + |D|,\\
	~ & |U^{d}_{L}(\G)| = |A| + |B| + |C| + |D| + |E|,\\
	~ & |C^d_{S_1}(\G) \cap C^d_{S_2}(\G)| = |A|.
\end{align*}

Since $C^d_{S_1}(\G)$ can update $\R$, Lemma~\ref{Lem: LayerOrder} implies that
\begin{equation*}
	|C^d_{S_1}(\G)| \geq \frac{1}{k}|\Cov(\R)| + |\Delta(\R, C^*(\R))|.
\end{equation*}
Let $\R'$ be the resulting $\R$ after updating $\R$ with $C^d_{S_1}(\G)$. We have
\begin{equation*}
	|\Cov(\R')| \geq \left(1 + \frac{1}{k}\right) |\Cov(\R)|.
\end{equation*}

Suppose that $C^d_{S_2}(\G)$ can update $\R'$ again. We have
\begin{equation*}
	\begin{split}
		~ & |\Cov((\R' - \{C^{*}(\R')\}) \cup \{C^d_{S_2}(\G)\})| \\
		& \geq \left(1 + \frac{1}{k}\right)|\Cov(\R')| \geq \left(\frac{1}{k} + \frac{1}{k^2}\right) |\Cov(\R)|.
	\end{split}
\end{equation*}
Since $A \cup C \subset C^d_{S_2}(\G)$, we have
\begin{align*}
	~ & \Cov((\R' - \{C^{*}(\R')\}) \cup \{C^d_{S_2}(\G)\}) \\
	= & \Cov(\R') - \Delta(\R', C^{*}(\R')) + D \subseteq  \Cov(\R') + D.
\end{align*}

Putting the discussions together, we have
\begin{equation*}
	\begin{split}
		|\Cov(\R')| + |D|  & \geq |\Cov((\R' - \{C^{*}(\R')\}) \cup \{C^d_{S_2}(\G)\})| \\
		& \geq \left(1 + \frac{1}{k}\right) |\Cov(\R')|,
	\end{split}
\end{equation*}
that is, $|D| \geq \frac{1}{k} |\Cov(\R')|$. Thus, for $U^d_{L}(\G)$, we have
\begin{align*}
	~ & |U^d_{L}(\G)| = |A| + |B| + |C| + |D| + |E| \\
	& \geq |C^d_{S_1}(\G)| + |D|  \\
	& \geq \frac{1}{k}|\Cov(\R)| + |\Delta(\R, C^*(\R))| + \frac{1}{k} |\Cov(\R')| \\
	& = \left(\frac{1}{k} + \frac{1}{k^2}\right) |\Cov(\R)| + |\Delta(\R, C^*(\R))| + \frac{1}{k}|\Cov(\R)|\\
	& \geq \left(\frac{1}{k} + \frac{1}{k^2}\right) |\Cov(\R)| + \left(1 + \frac{1}{k}\right)|\Delta(\R, C^*(\R))|.
\end{align*}
The last equation holds due to the pigeonhole principle. For each $C' \in \R$, we must have $|\Delta(\R, C')| \leq \frac{1}{k}|\Cov(\R)|$.

The size of $U^d_{L}(\G)$ contradicts with Eq.~\eqref{Eqn: UCondition}. Thus, if $U^d_{L}(\G)$ satisfies Eq.~\eqref{Eqn: UCondition}, $C^d_{S_2}(\G)$ cannot update $\R$ any more.
\end{proof}

\setcounter{equation}{1}

\medskip
\noindent{\textit{11. Proof of Lemma~8}}

\begin{lemma}
\label{Lem:CScope}
$C^{d}_{L'}(\G) \subseteq U^{d}_{L'}(\G) \cap \left(\bigcup_{h = |L'|}^{l(\G)} I_{h} \right)$.
\end{lemma}

\begin{proof}
By the definition of potential set $U^{d}_{L'}(\G)$, we have $C^{d}_{L'}(\G) \subseteq U^{d}_{L'}(\G)$.
Obviously, if a vertex $v \in \bigcup_{h = 0}^{|L'| - 1} I_{h}$, the support of $v$ is less than $|L'|$. Thus, $v$ is unlikely to exist in a $d$-CC on at least $|L'|$ layers. Therefore, we must have $v \in \bigcup_{h = |L'|}^{l(\G)} I_{h}$. Hence, the lemma holds.
\end{proof}

\medskip
\noindent{\textit{12. Proof of Lemma~9}}

\begin{lemma}
\label{Lem:CFilter}
For each vertex $v\!\in\!C^{d}_{L'}(\G)$, there exists a sequence of vertices $w_0, w_1, \dots, w_n$ such that $L' \subseteq L(w_0)$, $w_n = v$, $w_{i+1}$ is placed on a higher level than $w_{i}$, and $(w_i, w_{i + 1})$ is an edge in the index.
\end{lemma}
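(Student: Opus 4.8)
The claim concerns the index structure built during top-down search. Recall the partition $I_1, I_2, \dots, I_{l(\G)}$ of $V(\G)$, where $I_h = J_h - J_{h-1}$ and $J_h$ collects the vertices iteratively removed because $\Sup(v) \le h$. Within each $I_h$, vertices are stratified into levels according to the batch in which they are deleted: a vertex removed in a later batch sits on a higher level. Each vertex $v$ carries the label $L(v)$, the set of layer numbers on which $v$ still belongs to the $d$-core at the moment $v$ is removed. The edges of the index are exactly the edges of $\G$ across all layers. The goal is to show that every $v \in C^{d}_{L'}(\G)$ is reachable from some vertex $w_0$ with $L' \subseteq L(w_0)$ by a strictly level-increasing path of index edges ending at $v$.

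\begin{proof}[Proof plan]
The plan is to argue by induction on the level at which $v$ resides in the index, from the lowest level upward. The key observation to establish first is a \emph{local} property: for any $v \in C^{d}_{L'}(\G)$, either $L' \subseteq L(v)$ (so $v$ itself serves as $w_0$ and the trivial length-zero sequence works), or there exists a neighbor $u$ of $v$ in the index, placed on a \emph{strictly lower} level, with $u \in C^{d}_{L'}(\G)$. First I would prove this local property. Since $v \in C^{d}_{L'}(\G)$, for every layer $i \in L'$ the vertex $v$ has at least $d$ neighbors inside $C^{d}_{L'}(\G)$ on layer $i$; in particular $d_{G_i}(v) \ge d$ restricted to $C^{d}_{L'}(\G)$. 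If $L' \not\subseteq L(v)$, then there is some layer $i_0 \in L'$ on which $v$ had already dropped out of the $d$-core of layer $i_0$ by the time $v$ was removed in its batch. This means, at that moment, $v$ had fewer than $d$ surviving neighbors on layer $i_0$ among the not-yet-removed vertices — so at least one of its $\ge d$ layer-$i_0$ neighbors $u$ inside $C^{d}_{L'}(\G)$ must have been removed in an \emph{earlier} batch, hence sits on a strictly lower level. Because $(u,v)$ is a layer-$i_0$ edge it is an index edge, and $u \in C^{d}_{L'}(\G)$ as required.
\end{proof}

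The main obstacle I anticipate is making the phrase ``$v$ dropped out of the layer-$i_0$ $d$-core before removal'' precise and tying it correctly to the batch/level ordering. The delicate point is that $L(v)$ records the layers on which $v$ survives in the $d$-core \emph{just before} $v$'s own removal, while the deletions driving $\Sup(v)$ downward are cascaded across layers; I would need to verify that if $i_0 \notin L(v)$ then at the removal instant $v$ genuinely has $<d$ layer-$i_0$ neighbors among surviving vertices, and that its missing $d$-core neighbors inside $C^{d}_{L'}(\G)$ are precisely the ones removed earlier (lower level). This requires carefully invoking the monotone nature of the core-decomposition peeling, namely that once a vertex leaves a layer's $d$-core it never returns, and that $C^{d}_{L'}(\G) \subseteq C^{d}(G_i)$ for each $i \in L'$ by Lemma~1, so every member of $C^{d}_{L'}(\G)$ ultimately lands in some $I_h$ with $h \ge |L'|$.

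Granting the local property, the induction closes routinely. For the base case, a vertex $v \in C^{d}_{L'}(\G)$ on the lowest level occupied by members of $C^{d}_{L'}(\G)$ has no such strictly-lower neighbor in $C^{d}_{L'}(\G)$, so by the local property $L' \subseteq L(v)$ and the length-zero sequence $w_0 = v$ works. For the inductive step, given $v$ with $L' \not\subseteq L(v)$, the local property supplies a neighbor $u \in C^{d}_{L'}(\G)$ on a strictly lower level; by the induction hypothesis there is a valid sequence $w_0, \dots, w_{n-1} = u$ with $L' \subseteq L(w_0)$ and strictly increasing levels, and appending $w_n = v$ extends it since $u = w_{n-1}$ is on a lower level than $v$ and $(u,v)$ is an index edge. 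This yields exactly the sequence required by the lemma.
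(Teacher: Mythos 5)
Your overall strategy --- a direct induction on levels driven by a ``local property'' (every $v\in C^{d}_{L'}(\G)$ with $L'\not\subseteq L(v)$ has a neighbor in $C^{d}_{L'}(\G)$ on a strictly lower level) --- is genuinely different from the paper's, which argues the contrapositive by discarding vertices level by level and invoking Lemma~1. The induction scaffolding is fine, but the entire weight rests on the local property, and the step you use to justify it is false. From ``$i_0\in L'$ but $i_0\notin L(v)$, i.e.\ $v$ has dropped out of the $d$-core of layer $i_0$ restricted to the surviving vertices'' you conclude ``$v$ has fewer than $d$ surviving neighbors on layer $i_0$.'' Membership in a $d$-core is not a local degree condition: a vertex can keep degree $\ge d$ among survivors and still fall out of the core because the peeling cascades from elsewhere (the $2$-core of a path $a$--$v$--$c$ is empty even though $v$ has degree $2$). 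You correctly flag this as the point needing verification, but it cannot be verified --- it is exactly where the argument breaks.

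Worse, it is not only the justification but the local property itself that can fail. Take $d=2$, $L'=\{1,2\}$, and let $C^{d}_{L'}(\G)$ be a $4$-cycle $v$--$a$--$b$--$c$--$v$ present on layers $1$ and $2$, with $v,a,c$ additionally forming a triangle on layer $3$ while $b$ has no other support. Then $\Sup(b)=2<3=\Sup(v)=\Sup(a)=\Sup(c)$, so $b$ is removed in the first batch of the $h=2$ phase with $L(b)=\{1,2\}$; after recomputation the surviving induced subgraph on layers $1$ and $2$ is the path $a$--$v$--$c$, whose $2$-core is empty, so $\Sup$ drops to $1$ for \emph{all three} of $v,a,c$ simultaneously and they are removed together in the very next batch, landing on one common level with $L=\{3\}\not\supseteq L'$. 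Now $v$'s only neighbors in $C^{d}_{L'}(\G)$ are $a$ and $c$, which sit on the same level as $v$, not a strictly lower one: your base case / inductive step dichotomy collapses ($v$ is not on the lowest level occupied by $C^{d}_{L'}(\G)$, yet has no strictly-lower neighbor there). Any correct treatment has to confront the fact that a single recomputation round can knock the entire surviving portion of $C^{d}_{L'}(\G)$ out of a layer's $d$-core at once, so that several of its vertices share a level; the strictly-increasing-level chain you (and, for that matter, the paper's own level-by-level contrapositive argument) rely on is precisely what is in jeopardy in that situation, and your proof as written does not address it.
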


\begin{proof}
We prove that if a vertex $v$ does not satisfies this condition, $v$ must not exist in $C^{d}_{L'}(\G)$. Obviously, we only need to consider vertices in $\bigcup_{h = |L'|}^{l(\G)} I_{h}$ by Lemma~\ref{Lem:CScope}.

First, we consider the vertex $v$ in the lowest level in the index. Obviously, if $L' \not\subseteq L(v)$, there must exist a layer number $j \in L'$ such that $v \notin C^{d}(G_j)$. By Lemma~1, $v$ cannot be contained in $C^{d}_{L'}(\G)$. Thus, we can remove $v$ from the graph $\G$. After that, we consider the vertices in next level of the lowest level. If $L' \not\subseteq L(u)$,   there must exist a layer number $j' \in L'$ such that $u \notin C^{d}(G_{j'})$. At this time, if none of $u$'s neighbors $w$ in the lowest level such that $L' \subseteq L(w)$, they have already been removed from $\G$, so vertex $u$ has the same neighbors as we build the index. Therefore, for layer number $j' \in L'$, we still have $u \notin C^{d}(G_{j'})$. By Lemma~1, $u$ cannot be contained in $C^{d}_{L'}(\G)$. We can continue this process level by level. This implies that all the vertices that do not satisfy this condition cannot exist in $C^{d}_{L'}(\G)$.
\end{proof}

\medskip
\noindent{\textit{13. Proof of Lemma~10}}

\begin{lemma}
The time complexity of Procedure \textsf{RefineC} is $O(n'l' + m')$.
\end{lemma}

\begin{proof}
To prove the time complexity of \textsf{RefineC}, we at first analyze the cases when an edge can be accessed. Notably, any edge $(u, v)$ on a layer of $\G[U^{d}_{L'}(\G)]$ can be accessed at most three times in the following cases:

1) At line~5 of the \textsf{RefineC} procedure, when computing $d^{+}_{i}(v)$ of all $i \in L'$ for each vertex $v \in Z$, each edge $(u, v)$ on a layer $i \in L'$ will be accessed exactly once.

2) At line~16 or line~27 of the \textsf{RefineC} procedure, when vertex $u$ accesses a vertex $v$ on a higher level, each edge $(u, v)$ on a layer $i \in L'$ will be accessed exactly once.

3) At line~2 of the \textsf{CascadeD} procedure, when updating $d^{+}_{i}(u)$, the edge $(u, v)$ on a layer $i \in L'$ will be accessed. Note that, $(u, v)$ on a layer $i \in L'$ will be accessed only once. This is because, when updating $d^{+}_{i}(u)$, $u$ is already been set to discarded. Thus, $u$ will never have opportunity to visit $v$ any more. Meanwhile, since $u$ is discarded, $v$ will also not visit vertex $u$ in the \textsf{CascadeD} procedure afterwards. As a result, each edge in $\G[U^{d}_{L'}(\G)]$ will be accessed at most once.

Putting them together, the edge access time is at most $O(\sum_{i \in L'} E_{i}(U^{d}_{L'}(\G))) = O(3m') = O(m')$. Meanwhile, at line~19, for each undetermined vertex $v$, we need to check whether $d^{+}_{i}(d)(v) < d$ for all $i \in L'$. So the maximum time cost is $O(|U^{d}_{L'}(\G)||L'|) = O(n'|L'|)$. As a result, the total time cost of Procedure \textsf{RefineC} is $O(n'l' + m')$.

\end{proof}

\medskip
\noindent{\textit{14. Proof of Theorem~1}}

\begin{theorem}
The \textsc{DCCS} problem is NP-complete.
\end{theorem}

\begin{proof}
Given a collection of sets $\F = \{C_1, C_2, \dots, C_n\}$ and $k \in \mathbb{N}$, the max-$k$-cover problem is to find a subset $\R \subseteq \F$ such that $|\R| = k$ and that $|\Cov(\R)|$ is maximized. The max-$k$-cover problem has been proved to be NP-complete unless P $=$ NP~\cite{Ausiello2011Online}.

It is easy to show that the DCCS problem is in NP. We prove the theorem by reduction from the max-$k$-cover problem in polynomial time. Given an instance $(\F, k)$ of the max-$k$-cover problem, we first construct a multi-layer graph $\G$. The vertex set of $\G$ is $\bigcup_{i = 1}^n C_i$. There are $n$ layers in $\G$. An edge $(u, v)$ exists on layer $i$ if and only if $u, v \in C_i$ and $u \ne v$. Then, we construct an instance of the DCCS problem $(\G, d, s, k)$, where $d = 1$ and $s = 1$. The result of the DCCS problem instance $(\G, d, s, k)$ is exactly the result of the max-$k$-cover problem instance $(\F, k)$. The reduction can be done in polynomial time. Thus, the DCCS problem is NP-complete.
\end{proof}

\medskip
\noindent{\textit{15. Proof of Theorem~2}}

\begin{theorem}\label{Thm:GreedyDCCS}
The approximation ratio of \textsf{GD-DCCS} is $1 - \frac{1}{e}$.
\end{theorem}

\begin{proof}
The approximation ratio of the greedy algorithm~\cite{Ausiello2011Online} for the max-$k$-cover problem is $1 - 1/e$. In the \textsf{GD-DCCS} algorithm, after obtaining the set $\F$ of all candidate $d$-CCs (lines~4--7), lines~8--10 select $k$ $d$-CCs from $\F$ in the same way as in the greedy algorithm~\cite{Ausiello2011Online}. Thus, the approximation ratio of the \textsf{GD-DCCS} algorithm is also $1 - 1/e$.
\end{proof}

\medskip
\noindent{\textit{16. Proof of Theorem~3}}

To prove Theorem~3, we first state the following claim. The correctness of the claim has been proved in~\cite{Ausiello2011Online}.

\begin{claim}
Let $\F = \{C_1, C_2, \dots, C_n\}$ and $k \in \mathbb{N}$. Let $\R^{*}$ the subset of $\F$ such that $|\R^*| = k$ and $|\Cov(\R^*)|$ is maximized. Let $\R \subseteq \F$ be a set obtained in the following way. Initially, $\R = \emptyset$. We repeat taking an element $C$ out of $\F$ randomly and updating $\R$ with $C$ according to the two rules specified in Section~\ref{Sec:BApproach-2} until $\F = \emptyset$. Finally, we have $|\Cov(\R)| \geq \frac{1}{4} |\Cov(\R^{*})|$.
\end{claim}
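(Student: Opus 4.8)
The plan is to treat $f(\cdot)=|\Cov(\cdot)|$ as a monotone, submodular function on collections of $d$-CCs and to prove the equivalent inequality $|\Cov(\R^*)|\le 4\,|\Cov(\R)|$ for the final set $\R$ returned by the procedure; I write $V=|\Cov(\R)|$. I would first record two structural facts. \emph{(i)} The value $|\Cov(\cdot)|$ never decreases during the run, since Rule~1 only adds a set and a Rule~2 swap raises the value by the factor $1+\tfrac1k$ demanded by Eq.~\eqref{Eqn: RUpdate}. \emph{(ii)} Whenever $|\R'|=k$, the minimum exclusive contribution satisfies $|\Delta(\R',C^*(\R'))|\le\tfrac1k|\Cov(\R')|$, by the pigeonhole principle (the very fact already invoked in the proof of Lemma~\ref{Lem: TDUPPrune}, since the exclusive contributions of the $k$ sets partition a subset of $\Cov(\R')$).

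Next I would classify each optimal set $o\in\R^*$ according to what happened when the procedure processed it. Each $o$ is processed exactly once, and is either \emph{accepted} (added by Rule~1 while $|\R|<k$, or swapped in by Rule~2) or \emph{rejected} (which can only occur once $|\R|=k$). If $o$ is accepted, then at that instant $o$ lies in the current solution, so every vertex of $o$ is covered at that moment. If $o$ is rejected at a state $\R'$ with $|\R'|=k$, then failure of Eq.~\eqref{Eqn: RUpdate} together with submodularity gives $f(\R'\cup\{o\})-f(\R') < \tfrac1k f(\R') + |\Delta(\R',C^*(\R'))| \le \tfrac2k V$, using fact~\emph{(ii)} and $f(\R')\le V$; that is, all but fewer than $\tfrac2k V$ vertices of $o$ are already covered at $o$'s arrival.

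The crux is to relate ``covered at arrival time'' to the \emph{final} cover. I would introduce the lost set $W$ of vertices that belong to $\Cov(\cdot)$ at some moment of the run but not to $\Cov(\R)$; by definition every vertex ever covered lies in $\Cov(\R)\cup W$. Only Rule~2 swaps can uncover a vertex, and a single swap uncovers exactly the vertices exclusively covered by the evicted $C^*(\R')$, at most $|\Delta(\R',C^*(\R'))|\le\tfrac1k f(\R')$ of them by fact~\emph{(ii)}, while by Eq.~\eqref{Eqn: RUpdate} the same swap raises the value by at least $\tfrac1k f(\R')$. Hence the vertices uncovered by each swap are at most the value increase that swap produces, and telescoping over the run yields $|W|\le V$. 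Now every optimal $o$ is covered at some point except for a rejection remainder of size $<\tfrac2k V$, so $\Cov(\R^*)\subseteq \Cov(\R)\cup W\cup Z$ with $|Z|< k\cdot\tfrac2k V = 2V$, giving $|\Cov(\R^*)|\le V+|W|+2V\le 4V$, as claimed.

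The main obstacle, as the structure above makes clear, is that swaps destroy any subset relation between the solution present at an element's arrival and the final $\R$, so one cannot directly compare an optimal set's marginal to $\Cov(\R)$ via submodularity. The device that resolves this is the lost-vertex set $W$: bounding $|W|$ by the total value increase is legitimate precisely because Eq.~\eqref{Eqn: RUpdate} forces each swap's gain to dominate its loss, and it is this extra $|W|\le V$ term that degrades what would otherwise be a $1/3$ guarantee to $1/4$.
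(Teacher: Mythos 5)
Your argument is correct. Note that the paper itself does not actually prove this claim---it only defers to \cite{Ausiello2011Online}---so there is no in-paper proof to compare against; what you have written is a self-contained reconstruction of the standard charging argument for swap-based streaming coverage maximization, and all of its pillars check out. Specifically: (i) the running value $f(\R')$ is non-decreasing (Rule~1 only adds a set, and a Rule~2 swap multiplies the value by at least $1+\tfrac1k$), which legitimizes replacing $f(\R')$ by $V$ in the intermediate bounds; (ii) the pigeonhole bound $|\Delta(\R',C^*(\R'))|\le \tfrac1k f(\R')$ holds because the exclusive contributions of the $k$ members of $\R'$ are pairwise disjoint subsets of $\Cov(\R')$; (iii) the rejection bound follows from expanding the failure of Eq.~\eqref{Eqn: RUpdate} as $|\Cov(\R')|-|\Delta(\R',C^*(\R'))|+|o-\Cov(\R')| \le |\Cov((\R'-\{C^*(\R')\})\cup\{o\})| < (1+\tfrac1k)|\Cov(\R')|$, giving $|o-\Cov(\R')| < \tfrac1k f(\R')+|\Delta(\R',C^*(\R'))|\le \tfrac2k V$ (this is a direct set computation; the appeal to submodularity is unnecessary but harmless); and (iv) the lost-set bound is sound because each swap uncovers at most $|\Delta(\R',C^*(\R'))|\le\tfrac1k f(\R')$ vertices while increasing the value by at least $\tfrac1k f(\R')$, and the multiset total of uncovered vertices, which dominates $|W|$, telescopes to at most $V$. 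Assembling $V+V+2V$ yields $|\Cov(\R^*)|\le 4V$ as claimed; the degenerate case $V=0$ is vacuous since a nonempty arrival always passes the swap test against a zero-valued $\R$, so $V=0$ forces $|\Cov(\R^*)|=0$. Your closing observation that the $|W|\le V$ term is exactly what degrades the naive $1/3$ bound to $1/4$ is also accurate and is the right intuition for why eviction-based schemes lose over the offline greedy.
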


\begin{theorem}
The approximation ratio of \textsf{BU-DCCS} is $1/4$.
\end{theorem}

\begin{proof}
Note that the \textsf{BU-DCCS} algorithm uses the same procedure described in Claim~1 to update $\R$ except that some pruning techniques are applied as well. Therefore, we only need to show that the pruning techniques will not affect the approximation ratio stated in Claim~1. Let $C$ be a $d$-CC pruned by a pruning method and $D_C$ be the set of descendant candidate $d$-CCs of $C$ in the search tree. For all $C' \in D_C$, according to Lemma~2, Lemma~3 or Lemma~4, $C'$ must not update $\R$. By Claim~1, candidate $d$-CCs can be taken in an arbitrary order without affecting the approximation ratio. Therefore, we can safely ignore all the $d$-CCs in $D_C$ without affecting the quality of $\R$. Finally, we have $|\Cov(\R)| \ge \frac{1}{4} |\Cov(\R^*)|$. Thus, the theorem holds.
\end{proof}

\medskip
\noindent{\textit{17. Proof of Theorem~4}}

\begin{theorem}
The approximation ratio of \textsf{TD-DCCS} is $1/4$.
\end{theorem}

\begin{proof}
The \textsf{TD-DCCS} algorithm uses the same procedure described in Claim~1 to update $\R$ and applies some pruning techniques in addition. By the same arguments in the proof of Theorem~3, this theorem holds.
\end{proof}

%
%

\section{The \textsf{dCC} Procedure}
\label{Sec: IdCC}

We present the \textsf{dCC} procedure in Fig.~\ref{Fig: dCCDetailed}. It takes as input a multi-layer graph $\G$, a subset $L \subseteq [l(\G)]$ and an integer $d \in \mathbb{N}$ and outputs $C^{d}_{L} (\G)$, the $d$-CC w.r.t. $L$ on $\G$. For each vertex $v \in V(\G)$, let $m(v) = \min_{i \in L} d_{G_i}(v)$ be the minimum degree of $v$ on all layers in $L$. First, we compute $m(v)$ for each vertex $v \in V(\G)$ (line~1). Let $M = \max_{v \in V(\G)} m(v)$ (line~2). For each vertex $v \in V(\G)$, we have $0 \leq m(v) \leq M$. Therefore, we can assign all vertices of $\G$ into $M + 1$ bin according to $m(v)$. To facilitate the computation of $C^{d}_{L}(\G)$, we set up three arrays in the \textsf{dCC} procedure:
\begin{itemize}
\item Array $ver$ stores all vertices in $V(\G)$, which are sorted in ascending order of $m(v)$;
\item Array $pos$ records the position of each vertex $v$ in array $ver$, i.e., $ver[pos[v]] = v$;
\item Array $bin$ records the starting position of each bin, i.e., $bin[i]$ is the offset of the first vertex $v$ in $ver$ such that $m(v) = i$.
\end{itemize}
To build the arrays, we first scan all vertices in $V(\G)$ to determine the size of each bin (lines~4--5). Then, by accumulation from $0$, each element in $bin$ can be easily obtained (lines~6--10). Based on array $bin$, we set $ver[v]$ and $pos[v]$ for each vertex $v \in V(\G)$ (lines~11--14). Since the elements of $bin$ are changed at line~14, we recover $bin$ at lines~15--17.

The main loop (lines~18--31) works as follows: Each time we retrieve the first vertex $v$ remaining in array $ver$ (line~19). If $m(v) < d$, $v$ cannot exist in $C^{d}_{L}(\G)$, so we remove $v$ and its incident edges from $\G$ (line~21). For each vertex $u$ adjacent to $v$ on some layers, we must update $m(u)$ after removing $v$. Note that $m(v)$ can be decreased at most by $1$ since we remove at most one neighbor of $u$ from $\G$. If $m(u)$ is changed, arrays $ver$, $pos$ and $bin$ also need to be updated. Specifically, let $w$ be the first vertex in array $ver$ such that $m(w) = m(u)$ (line~25). We exchange the position of $w$ and $u$ in array $ver$ (line~27). Accordingly, $pos[w]$ and $pos[v]$ are updated (line~28). After that, we increase $bin[m(u)]$ by 1 (line~29) since $u$ is removed.

The main loop is repeated until $m(v) \geq d$ (line~31). Finally, the vertices remaining in $V(\G)$ are outputted as $C^{d}_{L}(\G)$ (line~32).

\begin{figure}[!t]
    \scriptsize
    \fbox{
    \parbox{\figwidth}{
    \textbf{Procedure} \textsf{dCC}$(\G, L, d)$
    \begin{algorithmic}[1]
        \STATE compute $m(v)$ for each vertex $v$ of $\G$
        \STATE $M \gets \max_{v \in V(\G)} m(v)$
        \STATE initialize arrays $bin$, $ver$ and $pos$
        \FOR{each vertex $v \in V$}
            \STATE $bin[m(v)] \gets bin[m(v)] + 1$
        \ENDFOR
        \STATE $start \gets 1$
        \FOR{$i \gets 0$ to $M$}
            \STATE $num \gets bin[i]$
            \STATE $bin[i] \gets start$
            \STATE $start \gets start + num$
        \ENDFOR
        \FOR{each vertex $v \in V$}
            \STATE $pos[v] \gets bin[m(v)]$
            \STATE $ver[pos[v]] \gets v$
            \STATE $bin[m(v)] \gets bin[m(v)] + 1$
        \ENDFOR
        \FOR{$i \gets M$ to $1$}
            \STATE $bin[i] \gets bin[i-1]$
        \ENDFOR
        \STATE $bin[0] = 1$
        \REPEAT
            \STATE $v \gets$ the first vertex remaining in array $ver$
            \IF{$m(v) < d$}
                \STATE remove $v$ and its incident edges from $\G$
                \FOR{each remaining vertex $u$ adjacent to $v$ on some layers}
                    \STATE compute $m(u)$
                    \IF{$m(u)$ is changed}
                        \STATE $w \gets bin[m(u)]$
                        \STATE $pw \gets pos[w], pu \gets pos[u]$
                        \STATE $ver[u] \gets w, ver[w] \gets u$
                        \STATE $pos[u] \gets pw, pos[w] \gets pu$
                        \STATE $bin[m(u)] \gets bin[m(u)] + 1$
                        \STATE $m(u) \gets m'(u)$
                    \ENDIF
                \ENDFOR
            \ENDIF
        \UNTIL{$m(v) \geq d$}
        \RETURN $V(\G)$
    \end{algorithmic}
    }}

    \caption{The \textsf{\small dCC} Procedure.}
    \label{Fig: dCCDetailed}
\vspace{-3em}
\end{figure}

\smallskip

\noindent{\underline{\bf Complexity Analysis.}}
Let $n = |V(\G)|$, $m_{i} = |E_i(\G)|$ and $m = |\bigcup_{i \in L} E_i(\G) |$, the time for computing $m(v)$ for all vertices $v \in V(\G)$ is $O(n|L|)$. The time for setting up arrays $ver$, $pos$ and $bin$ is $O(n)$. In the main loop, the time for updating $m(u)$ of a neighbor vertex $u$ is $O(|L|)$. Let $N_{\G}(u) = \bigcup_{i=1}^{l} N_{G_i}(u)$. Since vertex $u$ can be updated by at most $|N_{\G}(u)|$ times, the maximum number of updating is $O(\sum_{u \in V(\G)} |N_{\G}(u)|) = O(m)$. Consequently, the time complexity of \textsf{dCC} is $O(n|L| + n + m|L|) = O((n+ m)|L|)$. The space complexity of \textsf{dCC} is $O(n)$ since it only stores three arrays.

\section{The \textsf{Update}  Procedure}
\label{Sec: IUpdate}

We present the \textsf{Update} procedure in Fig.~\ref{Fig: UpdateDetailed}. The input of the procedure includes the set $\R$ of temporary top-$k$ diversified $d$-CCs, a newly generated $d$-CC $C$ and $k \in \mathbb{N}$. The procedure updates $\R$ with $C$ according to the rules specified in Section~\ref{Sec:BApproach-2}.

For each $d$-CC $C' \in \R$, we store both $C'$ and the size $|\Delta(\R, C')| $. To facilitate fast updating of $\R$, we build some auxiliary data structures. Specifically, we store $\R$ in two hash tables $M$ and $H$. For each entry in $M$, the key of the entry is a vertex $v$, and the value of the entry is $M[v] = \{ C' | C' \in \R, v \in C' \}$, that is, the set of $d$-CCs $C' \in \R$ containing vertex $v$. For each entry in $H$, the key of the entry is an integer $i$, and the value of the entry $H[i]$ is the set of $d$-CCs $C' \in R$ such that $|\Delta(\R, C')| = i$. Obviously, $C^{*}(\R)$ can be easily obtained from $H$ by retrieving the entry of $H$ indexed by the smallest key.

Given the temporary result set $\R$ and a new $d$-CC $C$, the procedure relies on three key operations to update $\R$, namely \textsf{Size($\R$, $C$)} that returns the size $|\Cov( (\R - \{C^{*}(\R) \} ) \cup \{ C \} )|$, \textsf{Delete($\R$)} that removes $C^{*}(\R)$ from $\R$, and \textsf{Insert($\R$, $C$)} that inserts $C$ to $\R$. We describe these procedures as follows.

\begin{figure}[!t]
    \scriptsize
    \fbox{
    \parbox{\figwidth}{
    \textbf{Procedure} \textsf{Update}$(\R, C, k)$
    \begin{algorithmic}[1]
        \IF{$|R| < k$}
            \STATE \textsf{Insert($\R$, $C$)}
        \ELSE
            \STATE $|\Cov(\R)| = \textsf{size}(M)$
             \IF{\textsf{Size($\R$, $C$)} $\geq (1 + 1/k ) |\Cov(\R)|$}
                \STATE \textsf{Delete($\R$)}
                \STATE \textsf{Insert($\R$, $C$)}
            \ENDIF
        \ENDIF
    \end{algorithmic}


    \textbf{Procedure} \textsf{Size}$(\R, C)$
    \begin{algorithmic}[1]
        \STATE obtain $C^{*}(\R)$ and $|\Delta(R, \{ C^{*}(\R) \}|$ from $H$
        \STATE $c \gets 0$
        \FOR{each vertex $v \in C$}
            \IF{$v$ is not a key in $M$}
                \STATE $c \gets c + 1$
            \ELSIF{$v \in C^{*}(\R)$ and $\textsf{size}(M[v]) = 1$}
                \STATE $c \gets c + 1$
            \ENDIF
        \ENDFOR
        \STATE $c \gets c + \textsf{size}(M) - |\Delta(\R, C^{*}(\R))|$
        \RETURN $c$
    \end{algorithmic}


    \textbf{Procedure} \textsf{Delete}$(\R)$
    \begin{algorithmic}[1]
        \STATE remove $C^{*}(\R)$ from $H$
        \FOR{each vertex $v \in C^{*}(\R)$}
            \STATE remove $C^{*}(\R)$ from $M[v]$
            \IF{$\textsf{size}(M[v])$}
                \STATE let $C'$ be the element in $M[v]$
                \STATE move $C'$ in $H$ from $H[|\Delta(\R, C')|]$ to $H[|\Delta(\R, C')| + 1]$
                \STATE increase $|\Delta(\R, C')|$ by $1$
            \ELSIF{$\textsf{size}(M[v]) = 0$}
                \STATE remove $v$ from $M$
            \ENDIF
        \ENDFOR
    \end{algorithmic}


    \textbf{Procedure} \textsf{Insert}$(\R, C)$
    \begin{algorithmic}[1]
        \STATE add $C$ into $\R$
        \STATE set $|\Delta(\R, C)|$ to $0$
        \FOR{each vertex $v \in C$}
            \IF{$v$ is not a key in $M$}
                \STATE add $v$ into $M$
                \STATE insert $C$ into $M[v]$
                \STATE increase $|\Delta(\R, C)|$ by $1$
            \ELSE
                \IF{$\textsf{size}(M[v]) = 1$}
                    \STATE let $C'$ be the element in $M[v]$
                    \STATE move $C'$ in $H$ from $H[|\Delta(\R, C')|]$ to $H[|\Delta(\R, C')| - 1]$
                    \STATE decrease $|\Delta(\R, C)|$ by $1$
                \ENDIF
                \STATE insert $C$ into $M[v]$
            \ENDIF
        \ENDFOR
        \STATE insert $C$ into $H$ based on $|\Delta(\R, C)|$
    \end{algorithmic}
    }}
   \caption{The \textsf{\small Update} Procedure.}
    \label{Fig: UpdateDetailed}
\vspace{-3em}
\end{figure}

\smallskip

\noindent{\underline{\bf Operation $\mathsf{Size}(\R, C)$.}}
Note that, $\Cov( (\R - \{C^{*}(\R) \} ) \cup \{ C \}$ can be decomposed into three disjoint subsets $\Cov(\R - \{ C^{*}(\R) \} )$, $C - \Cov(\R)$ and $C \cap \Delta(\R, C^{*}(\R))$. In the beginning, we can obtain $C^{*}(\R)$ and $|\Delta(\R, C^{*}(\R) )|$ from $H$ (line~1) and initialize the counter $c$ to $0$ (line~1). For each vertex $v \in C$, if $v$ is not a key in $M$, we have $v \in C - \Cov(\R)$, so we increase $c$ by $1$ (line~5). Otherwise, if $v \in C^{*}(\R)$ and $M[v]$ only contains $C^{*}(\R)$, $c$ is also increased by $1$ (line~7) since $v \in C \cap \Delta(\R, C^{*}(\R))$. Since $|\Cov(\R - \{ C^{*}(\R) \} )|$ is equal to $\textsf{size}(M) - |\Delta(\R, C^{*}(\R))|$, we accumulate $\textsf{size}(M) - |\Delta(\R, C^{*}(\R))|$ to $c$ (line~8) and return $c$ as the result (line~9).

\smallskip

\noindent{\underline{\bf Operation $\mathsf{Delete}(\R)$.}}
First, we retrieve $C^{*}(\R)$ from $H$ (line~1). For each vertex $v \in C^{*}(\R)$, $C^{*}(\R)$ is removed from $M[v]$ (line~3). Note that, if $M[v]$ contains a single element $C'$ after removing $C^{*}(\R)$, $v$ is a vertex only covered by $C'$. Therefore, we move $C'$ from $H[|\Delta(\R, C')|]$ to $H[|\Delta(\R, C')| + 1]$ (line~6) and increase $|\Delta(\R, C')|$ by $1$ (line~7). If $M[v]$ is empty, $v$ is no longer covered by $\R$, so $v$ is removed from $M$ (line~9).

\smallskip

\noindent{\underline{\bf Operation $\mathsf{Insert}(\R, C)$.}}
First, we insert $C$ to $\R$ (line~1) and set $|\Delta(\R, C)|$ to $0$ (line~2). For each vertex $v \in C$, if $v$ is not a key in $M$, we insert an entry with key $v$ and value $C$ to hash table $M$ (lines~5--6). At this moment, $v$ is only covered by $C$, so $|\Delta(\R, C)|$ is increased by $1$(line~7). If $v$ is a key in $M$, $C$ can be directly inserted to $M[v]$ (line~12). Note that, if $M[v]$ contains a single element $C'$ before insertion, $v$ will not be covered only by $C'$ after inserting $C$, so $C'$ is moved in $H$ from $H[|\Delta(\R, C')|]$ to $H[|\Delta(\R, C')| - 1]$ (line~11), and $|\Delta(\R, C')|$ is decreased by $1$ (line~12). After updating $M$, we obtain $|\Delta(\R, C)|$ and insert $C$ to $H$ accordingly (line~14).

By putting them altogether, we have the \textsf{Update} procedure. If $|\R| < k$, we directly insert $C$ to $\R$ (line~2). If $|\R| \ge k$, the \textsf{Size($\R$, $C$)} procedure is invoked to check if $C$ satisfies Rule~2 (line~5). If so, $\R$ is updated with $C$ by invoking \textsf{Delete($\R$)} (line~6) and \textsf{Insert($\R$, $C$)} (line~7).

\smallskip

\noindent{\underline{\bf Complexity Analysis.}}
The space cost for storing $\R$ and maintaining $M$ is $O(\sum_{C' \in \R} |C'|)$, and the space cost for storing $|\Delta(\R, C')|$ and maintaining $H$ is $O(k)$. Thus, the space complexity of \textsf{Update} is $O(2\sum_{C' \in \R} |C_j| + 2k) = O(\sum_{C' \in \R} |C'|)$.

Assume that an entry can be inserted to or deleted from a hash table in constant time. Thus, the time complexity of \textsf{Size($\R$, $C$)}, \textsf{Delete($\R$)} and \textsf{Insert($\R$, $C$)} is $O(|C|)$, $O(|C^{*}(\R)|)$ and $O(|C|)$, respectively. Consequently, the time complexity of \textsf{Update} is $O(\max\{|C|, |C^{*}(\R)|\})$.

\section{The \textsf{InitTopK}  Procedure}
\label{Sec: InitTopK}

\begin{figure}[!t]
    \scriptsize
    \fbox{
    \parbox{\figwidth}{
    \textbf{Procedure} \textsf{InitTopK}$(\G, d, s, k, \R)$
    \begin{algorithmic}[1]
    \STATE $\R \gets \emptyset$
    \FOR{$p \gets 1$ to $k$}
        \STATE $i \gets \arg\max_{i \in [l(\G)]} |\Cov(\R \cup \{  C^{d}(G_i) \} )| - | \Cov(\R)|$
        \STATE $L \gets \{ i \} $
        \STATE $C \gets C^{d}(G_i)$
        \FOR{$q \gets 1$ to $s-1$}
            \STATE $j \gets \arg\max_{j \in [l(\G)] - L} |C \cap C^{d} (G_j)|$
            \STATE $L \gets L \cup \{ j \} $
            \STATE $C \gets C \cap C^{d} (G_j)$
        \ENDFOR
        \STATE $C' \gets \mathsf{dCC}(\G[C], L, d)$
        \STATE $\mathsf{Update}(\R, C')$
    \ENDFOR
    \RETURN $\R$
    \end{algorithmic}
    }}

    \caption{The \textsf{InitTopK} Procedure.}
    \label{Fig:RInit}
\vspace{-2em}
\end{figure}

We present the \textsf{InitTopK} procedure in Fig.~\ref{Fig:RInit}. The input of the procedure includes the multi-layer graph $\G$, $d, s, k \in \mathbb{N}$ and set $\R$ of temporary top-$k$ diversified $d$-CCs. The \textsf{InitTopK} procedure in Section~IV.C initializes $\R$ so that $|\R| = k$.

At first, we set $\R$ as an empty set (line~1). The \textbf{for} loop (lines~2--11) executes $k$ times. In each loop, a candidate $d$-CC is added to $\R$ in the following way: First, we select layer $i$ such that the $d$-core $C^d(G_i)$ can maximumly enlarges $\Cov(\R)$ (line~3). Let $C = C^{d}(G_i)$ and $L = \{i\}$ (line~4--5). Then, we add $s - 1$ other layer numbers to $L$ in a greedy manner. In each time, we choose layer $j \in [l(\G)] - L$ that maximizes $|C \cap C^{d}(G_j)|$, update $L$ to $L \cup \{j\}$ and update $C$ to $C \cap C^{d}(G_j)$ (lines~7--9). When $|L| = s$, we compute the $d$-CC $C^d_L(\G)$ and update $\R$ with $C^d_L(\G)$ (lines~11--12).

\end{document}